\newcolumntype{d}[1]{D{.}{.}{#1}}
\newcolumntype{i}{D{.}{}{0}}
\begin{document}
\begin{spacing}{1}
\pagestyle{plain}
\date{}
\title{Minimizing Weighted Sum Download Time for One-to-Many File Transfer in Peer-to-Peer Networks}
\author{Bike Xie, Mihaela van der Schaar and Richard D. Wesel\\
Department of Electrical Engineering, University of California, Los
Angeles, CA 90095-1594\\Email:
xbk@ee.ucla.edu, mihaela@ee.ucla.edu, wesel@ee.ucla.edu
\thanks{}}

\maketitle \thispagestyle{plain}

\begin{abstract}
This paper considers the problem of transferring a file from one
source node to multiple receivers in a peer-to-peer (P2P) network.
The objective is to minimize the weighted sum download time (WSDT)
for the one-to-many file transfer. Previous work has shown that,
given an order at which the receivers finish downloading, the
minimum WSD can be solved in polynomial time by convex optimization,
and can be achieved by linear network coding, assuming that node
uplinks are the only bottleneck in the network. This paper, however,
considers heterogeneous peers with both uplink and downlink
bandwidth constraints specified. The static scenario is a
file-transfer scheme in which the network resource allocation
remains static until all receivers finish downloading. This paper
first shows that the static scenario may be optimized in polynomial
time by convex optimization, and the associated optimal static WSD
can be achieved by linear network coding. This paper then presented
a lower bound to the minimum WSDT that is easily computed and turns
out to be tight across a wide range of parameterizations of the
problem. This paper also proposes a static routing-based scheme and
a static rateless-coding-based scheme which have almost-optimal
empirical performances. The dynamic scenario is a file-transfer
scheme which can re-allocate the network resource during the file
transfer. This paper proposes a dynamic rateless-coding-based
scheme, which provides significantly smaller WSDT than the optimal
static scenario does.
\end{abstract}

\end{spacing}
\begin{spacing}{1.67}
\begin{keywords}
P2P network, network coding, rateless code, routing, extended
mutualcast.
\end{keywords}

\section{Introduction}
\label{sec:introduction}

P2P applications (e.g, \cite{BT}, \cite{Napster}, \cite{Gnutella},
\cite{KaZaA}) are increasingly popular and represent the majority of
the traffic currently transmitted over the Internet. A unique
feature of P2P networks is their flexible and distributed nature,
where each peer can act as both a server and a client \cite{AND04}.
Hence, P2P networks provide a cost-effective and easily deployable
framework for disseminating large files without relying on a
centralized infrastructure \cite{Liu07}. These features of P2P
networks have made them popular for a variety of broadcasting and
file-distribution applications \cite{Liu07} \cite{Zhang05}
\cite{Pai05} \cite{Mutualcast04} \cite{Li04} \cite{Xiang04}
\cite{Overcast}.

Specifically, chunk-based and data-driven P2P broadcasting systems
such as CoolStreaming \cite{Zhang05} and Chainsaw \cite{Pai05} have
been developed, which adopt pull-based techniques \cite{Zhang05},
\cite{Pai05}. In these P2P systems, the peers possess several chunks
and these chunks are shared by peers that are interested in the same
content. An important problem in such P2P systems is how to transmit
the chunks to the various peers and create reliable and efficient
connections between peers. For this, various approaches have been
proposed including tree-based and data-driven approaches (e.g.
\cite{Li04} \cite{Chu04} \cite{DES01} \cite{Jiang03} \cite{Cui04}
\cite{PAD03} \cite{PAD02}).

Besides these practical approaches, some research has begun to
analyze P2P networks from a theoretic perspective to quantify the
achievable performance. The performance, scalability and robustness
of P2P networks using network coding are studied in \cite{Chou04}
\cite{ACE05}. In these investigations, each peer in a P2P network
randomly chooses several peers including the server as its parents,
and also transmits to its children a random linear combination of
all packets the peer has received. Random linear network coding
\cite{NetworkInfoFlow} \cite{LinearNetworkCode}
\cite{AlgebraicNetworkCode}, working as a perfect chunk selection
algorithm,  makes elegant theoretical analysis possible. Some other
research investigates the steady-state behavior of P2P networks with
homogenous peers by using fluid models \cite{Qiu04} \cite{Ge03}
\cite{CLE04}.

In a P2P file transfer application (e.g, BitTorrent \cite{BT},
Overcast \cite{Overcast}), the key performance metric from an
end-user's point of view is the download time, i.e., the time it
takes for an end-user to download the file. In \cite{Mutualcast04},
Li, Chou, and Zhang explore the problem of delivering the file to
all receivers in minimum amount of time (equivalently, minimizing
the maximum download time to the receivers) assuming node uplinks
are the only bottleneck in the network. They introduce a
routing-based scheme, referred to as Mutualcast, which minimizes the
maximum download time to all receivers with or without helpers.

This paper also focuses on file transfer applications in which peers
are only interested in the file at full fidelity, even if it means
that the file does not become available to all peers at the same
time. In particular, this paper considers the problem of minimizing
weighted sum download time (WSDT) for one-to-many file transfer in a
peer-to-peer (P2P) network.  Consider a source node $s$ that wants
to broadcast a file of size $B$ to a set of $N$ receivers
$i\in\{1,2,\cdots,N\}$ in a P2P network.  Our model assumes that the
source uplink bandwidth constraint $U_s$, the peer uplink bandwidth
constraints $U_i$, and the peer downlink bandwidth constraints $D_i$
are the only bottlenecks in the network.  Limited only by these
constraints, every peer can connect to every other peer through
routing in the overlay network.

In order to understand the fundamental performance limit for
one-to-many file transfer in P2P networks, it is assumed that all
nodes are cooperative, and a centralized algorithm provides the
file-transfer scenario with the full knowledge of the P2P network
including the source node's uplink capacity , and the weights,
downlink capacities, and uplink capacities of peers. The cooperative
assumption holds in many practical applications, for example, in
¡°closed¡± content distribution systems where the programs are
managed by a single authority.

\section{Main Contribution}
\label{sec:contribution}

The general problem of minimizing WSDT divides into an exhaustive
set of cases according to three attributes.  The first attribute is
whether the allocation of network resources is static or dynamic. In
the static scenario, the network resource allocation remains
unchanged from the beginning of the file transfer until all
receivers finish downloading. The dynamic scenario allows the
network resource allocation to change as often as desired during the
file transfer.

The second attribute is whether downlink bandwidth constraints are
considered to be unlimited (i.e. $D_i=\infty$) or not (i.e. $D_i \le
\infty$).  Most research in P2P considers the download bandwidth
constraints to be unlimited because the uplink capacity is often
several times smaller than the downlink capacity for typical
residential connections (e.g., DSL and Cable).  However,
consideration of downlink bandwidth constraints can be important.
The downlink capacity can still be exceeded when a peer downloads
from many other peers simultaneously, as in the routing-based scheme
proposed in \cite{Wu09}.

The third attribute is whether we consider the special case of sum
download time (i.e. $W_i=1 \text{ for all } i$) or the general case
of weighted sum download time which allows any values of the weights
$W_i$.

With these cases in mind, here is an overview of the results
presented in this paper.  For the static scenario that considers
download bandwidth constraints $D_i \le \infty$ and allows any
values of $W_i$, Section \ref{sec:cvx_soln_and_bnd} uses a
time-expanded graph and linear network coding to show that the
minimum WSDT and the corresponding allocation of network resources
may be found in polynomial time by solving a convex optimization
problem.  We also present a lower bound on minimum WSDT that is
easily computed and turns out to be tight across a wide range of
parameterizations of the problem.

While the minimum WSDT for the static scenario may be found in
polynomial time using the approach of Section
\ref{sec:cvx_soln_and_bnd}, that approach is sufficiently
computationally intensive that Sections
\ref{sec:extended_mutualcast} and \ref{sec:depth2} provide lower
complexity alternatives. In some cases, the lower complexity
approaches are exactly optimal. For the remaining cases, the lower
bound of Section \ref{sec:cvx_soln_and_bnd} shows that their
performance is indistinguishable from the lower bound and hence
closely approach optimality across a wide range of
parameterizations.

Sections  \ref{sec:extended_mutualcast} and \ref{sec:depth2} build
on the foundation of the Mutualcast algorithm \cite{Mutualcast04}.
Mutualcast is a static rate allocation algorithm designed to
minimize the maximum download time to all peers in the case where
$D_i=\infty$.  Section \ref{sec:cvx_soln_and_bnd} concludes by
showing that Mutualcast achieves that section's lower bound when
$W_i=1 \text{ for all } i$ and therefore minimizes sum download time
as well as maximum download time.

Inspired by this result, Section \ref{sec:extended_mutualcast}
proposes a generalization of this algorithm, Extended Mutualcast,
that minimizes sum download time even when the download bandwidth
constraints $D_i$ are finite and distinct from each other.  When
uplink bandwidth resources are plentiful, Extended Mutualcast also
minimizes weighted sum download time regardless of weights because
each receiver is downloading content as quickly as possible given
its download bandwidth constraint and the upload bandwidth
constraint of the source.

It is notable that Mutualcast and Extended Mutualcast achieve their
optimal results while utilizing only depth-1 and depth-2 trees.
Inspired by this fact and the technique of rateless coding, Section
\ref{sec:depth2} attacks the general problem of minimizing weighted
sum download time(WSDT) by proposing a convex optimization approach
that assumes only trees of depth one or two.  Then, Section
\ref{sec:depth2} proposes a simple water-filling approach using only
depth-1 and depth-2 trees.  While the optimality of this approach is
not proven, Section \ref{sec:staticsimulation} shows that its
performance matches that of the lower bound of
\ref{sec:cvx_soln_and_bnd} for a wide variety of parameterizations.
Thus this water-filling approach provides a simple algorithm that
empirically achieves the lower bound on WSDT for all cases of the
static scenario across a wide range of parameterizations.

Turning our attention to the dynamic scenario, Wu et al. \cite{Wu09}
demonstrate that given an order in which the receivers finish
downloading, the dynamic allocation (neglecting downlink bandwidth
constraints) that minimizes WSDT can be obtained in polynomial time
by convex optimization and can be achieved through linear network
coding.  They also propose a routing-based scheme which has
almost-optimal empirical performance and demonstrate how to
significantly reduce the sum download time at the expense of a
slight increase in the maximum download time.

Dynamic schemes can reduce the minimum sum download time to
approximately half that of the static case, at least when downlink
capacities are considered to be infinite \cite{Wu09}. Essentially,
\cite{Wu09} shows that to optimize WSDT the network resource
allocation should remain constant during any ``epoch'', a period of
time between when one receiver finishes downloading and another
finishes downloading.  Thus, one optimal solution for the dynamic
scenario is ``piecewise static''.  However, \cite{Wu09} leaves the
proper selection of the ordering as an open problem and does not
address the finite downlink capacities $D_i<\infty$ or the general
case of weighted sum download time which allows any values of the
weights $W_i$.

Section \ref{sec:dynamic} provides a practical solution for the
dynamic scenario.  Specifically, it provides an approach the
ordering problem left open by  \cite{Wu09} by reformulating the
problem as that of determining the weights that should be assigned
during each static epoch so as to produce the piecewise static
solution that minimizes the WSTD (according to the original
weights).  This approach handles both finite downlink capacities
$D_i<\infty$ and the general case of weighted sum download time
which allows any values of the weights $W_i$. A key result of this
section is that, regardless of how the overall weights $W_i$ are
set, the ``piecewise static'' solution may be obtained by finding
the appropriate weights for each epoch and solving the static
problem for that epoch.  Furthermore, during any epoch the
appropriate weights of all peers are either 1 or zero with the
exception of at most one "transitional" peer whose weight can be
anywhere between zero and 1.  Neglecting the "transitional" node,
the ordering problem becomes approximately one of choosing which
peers should be served during each epoch.  Having resolved the
ordering problem in this way, the simple water-filling approach of
Section \ref{sec:depth2} provides the rate allocations for the
source and for each peer during each of the piecewise-static epochs.
Thus this section provides a complete solution for the dynamic
scenario.  Because the selection of the ordering and the rate
allocation are both close to optimal, we conjecture that the overall
performance of this solution is close to optimal across a wide range
of parameterizations.

Section \ref{sec:conclusions} delivers the conclusions of this
paper.

\section{Convex Optimization of WSDT in the Static Case}
\label{sec:cvx_soln_and_bnd}

This section considers a static P2P network in which the source node
with uplink bandwidth $U_s$ seeks to distribute a file of size $B$
so as to minimize the weighted sum of download times given a static
allocation of resources.  The static scenario assumption also
indicates that no peer leaves or joins during the file transfer.
There are $N$ peers who want to download the file that the source
node has. Each peer has weight $W_i$, downlink capacity $D_i$ and
uplink capacity $U_i$, for $i=1,2,\cdots,N$. It is reasonable to
assume that $D_i \geq U_i$ for each $i = 1,\cdots, N$ since it holds
for typical residential connections (e.g., Fiber, DSL and Cable). In
case of $D_i < U_i$ for some $i$, we just use peer $i$'s part of the
uplink capacity which equals to its downlink capacity and leave the
rest of the uplink capacity unused.

The uplink and downlink capacities of each peer are usually
determined at the application layer instead of the physical layer,
because an Internet user can have several applications that share
the physical downlink and uplink capacities. The peer weights depend
on the applications. For broadcast applications such as
CoolStreaming \cite{Zhang05} and Overcast \cite{Overcast}in which
all peers in the P2P network are interested in the same content, all
peer weights in the content distribution system can be set to 1. In
multicast applications such as ``Tribler'' \cite{Tribler} peers
called helpers, who are not interested in any particular content,
store part of the content and share it with other peers.  Assign
weight zero to helpers, and weight 1 to receivers. In some
applications, P2P systems partition peers into several classes and
assign different weights to peers in different classes.

Denote the transmission rate from the source node to peer $j$ as
$r_{s\rightarrow j}$ and the transmission rate from peer $i$ to peer
$j$ as $r_{i \rightarrow j}$. The total download rate of peer $j$,
denoted as $d_{j}$, is the summation of $r_{s\rightarrow j}$ and
$r_{i \rightarrow j}$ for all $i \neq j$. Since the total download
rate is constrained by the downlink capacity, we have
\begin{equation}
d_j = r_{s\rightarrow j} + \sum_{i \neq j} r_{i \rightarrow j} \leq
D_j, \forall j=1,\cdots,N.
\end{equation}
As a notational convenience, we also denote $r_{j\rightarrow j}$ as
the transmission rate from the source node to peer $j$ so that
\begin{equation}
d_j = \sum_{i=1}^{N} r_{i \rightarrow j} \leq D_j, \forall
j=1,\cdots,N.
\end{equation}
The total upload rate, denoted as $u_j$, is constrained by the
uplink capacity. Hence, we also have $u_j = \sum_{i \neq j} r_{j
\rightarrow i} \leq U_j$ for all $j=1,\cdots,N$.

\begin{figure}
  \centering
  \includegraphics[width=0.4\textwidth]{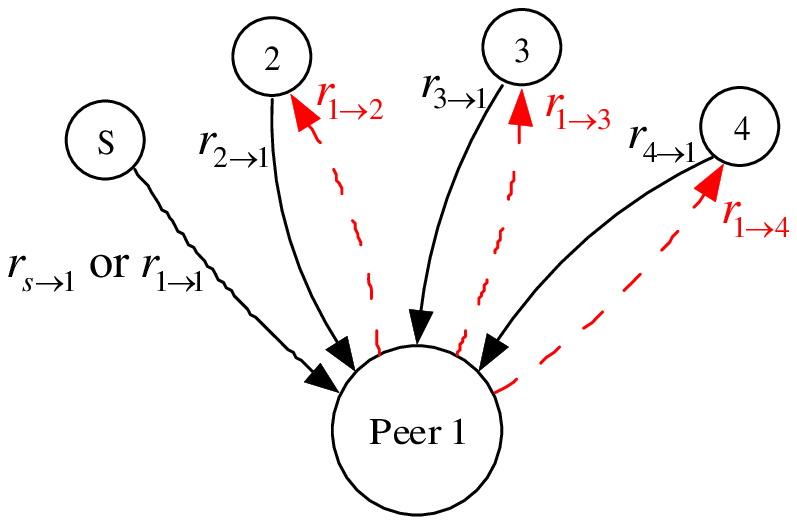}
  \caption{The peer model}\label{fig:peermodel}
\end{figure}

One example of the peer model is shown in Fig.~\ref{fig:peermodel}.
The downlink capacity and uplink capacity of peer 1 are $D_1$ and
$U_1$ respectively. Thus, the total download rate $r_{s\rightarrow
1} + \sum_{i=2}^{4} r_{i \rightarrow 1} = \sum_{i=1}^{4} r_{i
\rightarrow 1}$ has to be less than or equal to $D_1$, and the total
upload rate $\sum_{i=2}^{4} r_{1 \rightarrow i}$ has to be less than
or equal to $U_1$.

\subsection{The Time-Expanded Graph}
\label{sec:time-expanded-graph} As one of the key contributions of
\cite{Wu09}, Wu \emph{et al.} used a time-expanded graph to show how
the dynamic scenario decomposes into epochs.  This section applies
the time-expanded graph approach provided in \cite{Wu09} to the
\emph{static} case.

To obtain the time-expanded graph for a P2P network with $N$ peers,
we need to divide the time into $N$ \textit{epochs} according to the
finishing times of the peers. One peer finishes downloading at the
end of each epoch so that the number of epochs is always equal to
the number of peers. Let $\Delta t_i$ denote the duration of the
$i$-th epoch. Hence, $i$ receivers finish downloading by time $t_{i}
= \sum_{k=1}^{i} \Delta t_k$. If peers $i$ and $i+1$ finish
downloading at the same time,  $\Delta t_{i+1}=0$.

Each vertex in the original graph $G$ corresponds to $N$ vertices,
one for each epoch, in the time-expanded graph $G^{(N)}$ as follows:
We begin with the original P2P graph $G$ with node set $V = \{s,
1,\cdots ,N\}$ and allowed edge set $E$.  For each $v \in V$ and
each $n \in \{1,\cdots,N\}$, $G^{(N)}$ includes a vertex $v^{(n)}$
corresponding to the associated physical node $v$ in the $n$-th
epoch.  For each $e \in E$ going from $u$ to $v$ and each $n \in
\{1,\cdots,N\}$, $G^{(N)}$ includes an edge $e^{(n)}$ going from
$u^{(n)}$ to $v^{(n)}$ corresponding to the transmission from $u$ to
$v$ during the $n$-th epoch.

The subgraph $G^{(n)} = (V^{(n)}, E^{(n)})$ for $n=1,\cdots,N$
characterizes the network resource allocation in the $n$-th epoch.
To describe a rate allocation in the original graph $G$, edges are
typically labeled with the {\em{rate}} of information flow.
However, since each epoch in the time-expanded graph $G^{(N)}$ has a
specified duration, each of the $N$ edges in the time-expanded graph
corresponding to an edge in $G$ is labeled with the total amount of
information flow across the edge during its epoch.  This is the
product of the flow rate labeling that edge in the original graph
$G$ and the duration of the epoch.

The time-expanded graph also includes memory edges. For each $v \in
V$ and each $n \in \{1,\cdots,N-1\}$, there is an edge with infinite
capacity from $v^{(n)}$ to $v^{(n+1)}$. These memory edges reflect
the accumulation of received information by node $v$ over time.

As just described, the time-expanded graph not only describes the
network topology, but also characterizes the network resource
allocation over time until all peers finish downloading in a P2P
network.  As shown in \cite{Wu09} by Wu et al., even in the dynamic
scenario the network resource allocation can remain static
throughout each epoch without loss of optimality.  In this section,
we apply the time-expanded graph to the static scenario in which the
rate allocation remains fixed for the entire file transfer.

As an example, consider the following scenario.  A P2P network
contains a source node seeking to disseminate a file of unit size
($B=1$).  Its upload capacity is $U_S=2$. There are three peers
$\{1,2,3\}$ with upload capacities $U_1=U_2=U_3=1$ and download
capacities $D_1 = D_2 = D_3 = \infty$.

\begin{figure}
  \centering
  \includegraphics[width=0.4\textwidth]{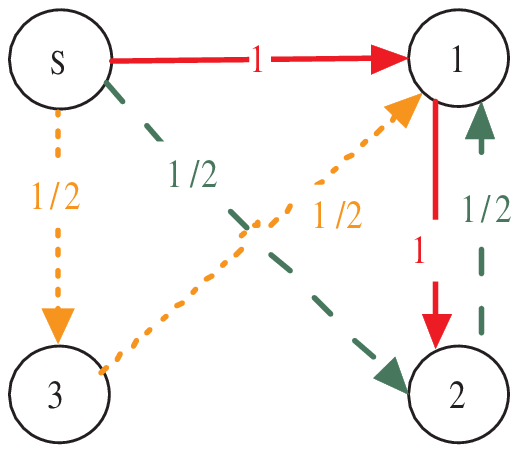}
  \caption{An example P2P graph $G$.  Edges are labeled with one possible rate allocation $r_{i \rightarrow j}$.}\label{fig:RateAllocationGraph}
\end{figure}

\begin{figure}
  \centering
  \includegraphics[width=0.5\textwidth]{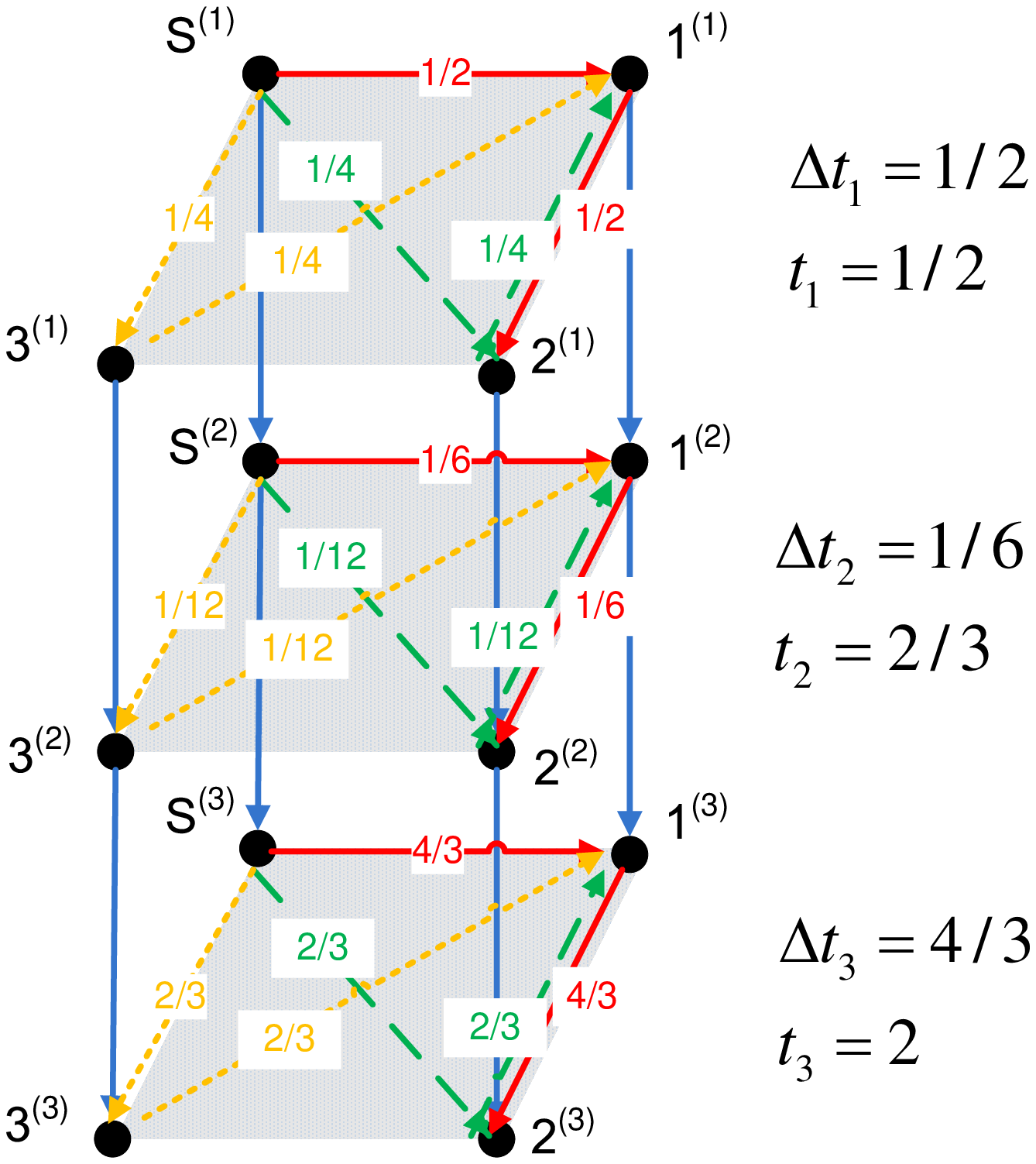}
  \caption{The time-expanded $G^{(3)}$ for the P2P graph $G$ shown in Fig.~\ref{fig:RateAllocationGraph}. Edges are labeled with the total information flow along the edge during the epoch.  This is the product of the rate allocation along the edge (from the graph in Fig.~\ref{fig:RateAllocationGraph}) and the duration of the epoch. Edges with zero flow are not shown.  }\label{fig:StaticTimeExpandedGraph}
\end{figure}

Fig.~\ref{fig:RateAllocationGraph} gives one possible static rate
allocation, showing the allocated rate for each edge of the original
P2P graph $G$.  (Edges with zero allocated rate are not shown.)  The
source node transmits with a rate of 1 to peer 1 and with rate 1/2
to peers 3 and 4.  Peer 1 transmits with rate 1 to peer 2 but does
not transmit to any other peers.  Peers 2 and 3 transmit with rate
1/2 to Peer 1, but do not transmit to any other peers.

Fig.~\ref{fig:StaticTimeExpandedGraph} shows the time-expanded graph
induced by the static rate allocation shown in
Fig.~\ref{fig:RateAllocationGraph}. Because there are three peers,
this time-expanded graph has 3 epochs. The peers are numbered in the
order they finish downloading; peer 1 finishes first followed by
peer 2 and then peer 3.  The first epoch lasts $\Delta t_1=1/2$ time
units, the second epoch lasts $\Delta t_2= 1/6$ time units, and the
third epoch lasts $\Delta t_3$ = 4/3 time units.

Peer 1 finishes first because it sees the full upload capacity of
the source.  As shown in Fig.~\ref{fig:RateAllocationGraph} it sees
rate 1 directly from the source.  The other half of the source
upload capacity is relayed to peer 1 by peers 2 and 3 immediately
after they receive it.  Hence peer 1 receives information with an
overall rate of $r_1=2$ and finishes downloading the entire file,
which has size $B=1$ at time $t_1=1/2$.  As a result, the duration
of the first epoch is $\Delta t_1=1/2$.

Peer 2 sees rate 1/2 directly from the source and rate 1 relayed to
peer 2 by peer 1.  Hence it sees an overall upload capacity of
$r_2=3/2$ and finishes downloading the entire file at time
$t_2=2/3$.  The duration of the second epoch can be computed as $t_2
- t_1 = 1/6$.

Because it receives no help from the other two peers, peer 3 sees an
overall upload rate of only $r_3 = 1/2$, which it receives directly
from the source. It finishes downloading the entire file at time
$t_3=2$.  The duration of the third epoch can be computed as $t_3 -
t_2 = 4/3$.

The sum of the download times for the example of
Figs.~\ref{fig:RateAllocationGraph}~and~\ref{fig:StaticTimeExpandedGraph}
is $1/2+2/3+2= 3~1/6$ .  Now let's consider an example that
minimizes the sum of the download times and in which peers finish at
the same time.

\begin{figure}
  \centering
  \includegraphics[width=0.4\textwidth]{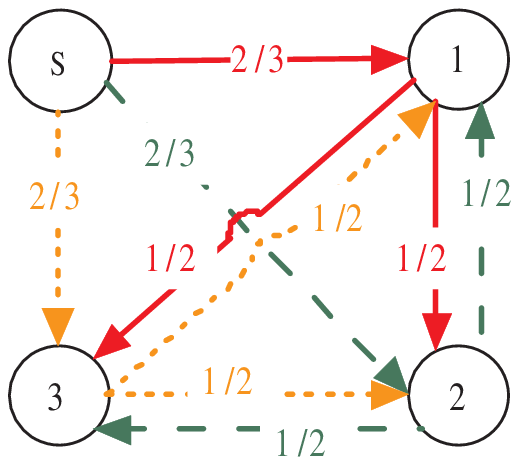}
  \caption{An example P2P graph $G$.  Edges are labeled with the rate allocation $r_{i \rightarrow j}$ that minimizes the sum of the download times.}\label{fig:OptRateAllocationGraph}
\end{figure}

\begin{figure}
  \centering
  \includegraphics[width=0.5\textwidth]{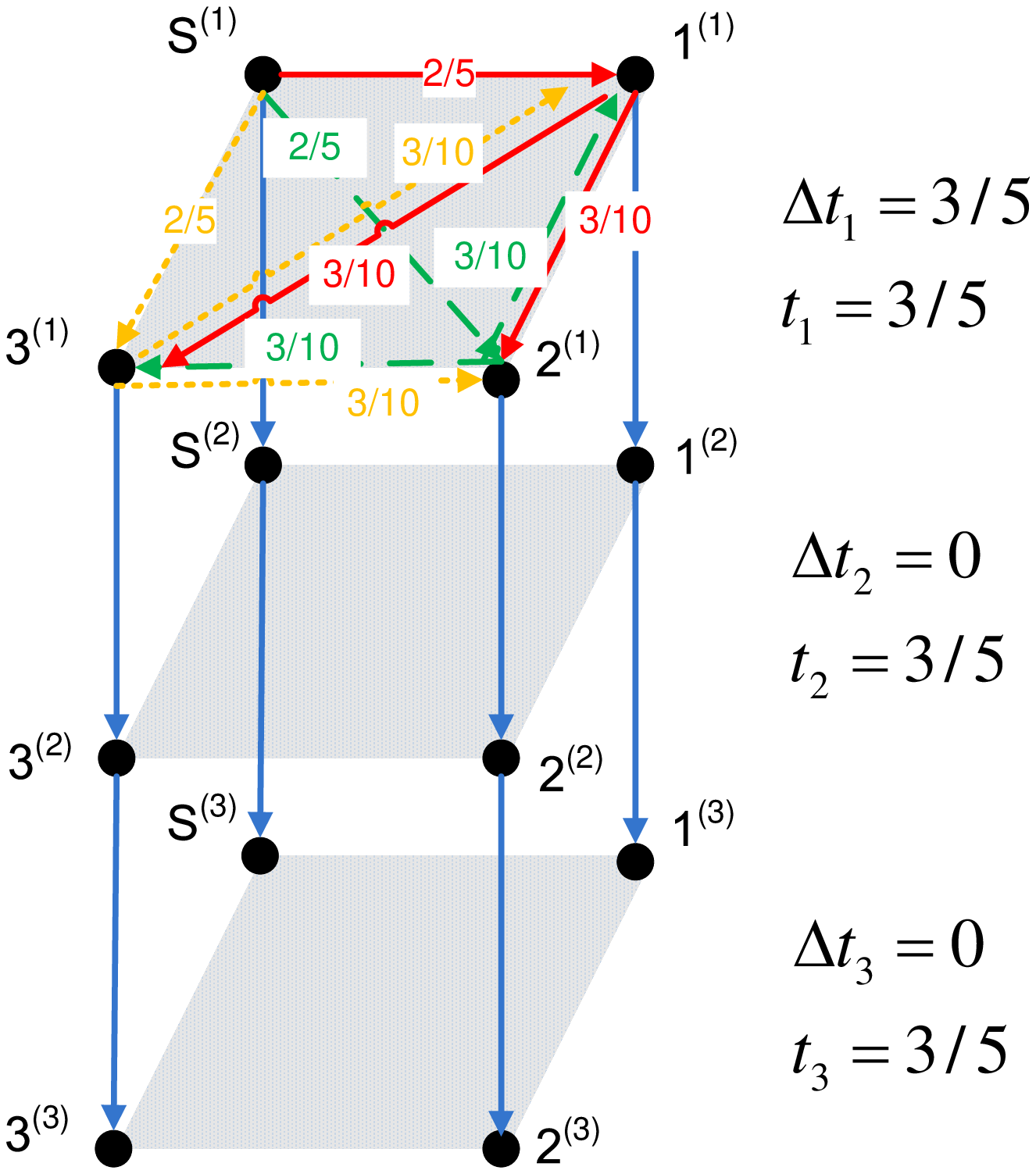}
  \caption{The time-expanded $G^{(3)}$ for the P2P graph $G$ shown in Fig.~\ref{fig:OptRateAllocationGraph}.  Edges are labeled with the total information flow along the edge during the epoch. This is the product of the rate allocation along the edge (from the graph in Fig.~\ref{fig:OptRateAllocationGraph})and the duration of the epoch. Edges with zero flow are not shown.}\label{fig:OptStaticTimeExpandedGraph}
\end{figure}

Fig.~\ref{fig:OptRateAllocationGraph} shows the rate allocation that
achieves the minimum possible sum of download times for a static
allocation in this scenario, which turns out to be 1~4/5. The
allocation shown in Fig.~\ref{fig:OptRateAllocationGraph} is
perfectly symmetric.  Each peer receives rate 2/3 directly from the
source and rate 1/2 from each of the two other peers.  Each peer
receives an overall rate of 5/3.  Hence all three peers finish
downloading simultaneously at $t=3/5$ and the second and third
epochs have zero duration.

\subsection{Transmission Flow Vectors and a Basic Network Coding Result}

In Section \ref{sec:time-expanded-graph} there was a tacit
assumption that all of the information received by a peer is {\em
useful}.  For example, we assumed that the information relayed from
peer 2 to peer 1 did not repeat information sent from the source to
peer 1.  In the examples of Section \ref{sec:time-expanded-graph},
one can quickly construct simple protocols that ensure that no
critical flows are redundant.  In this subsection, we review a
general result that uses network coding theory to show that there is
always a way to ensure that no critical flows are redundant.

Consider a general graph $G = (V,E)$, which could be either a
rate-allocation graph $G$ such as Figs.
\ref{fig:RateAllocationGraph} or \ref{fig:OptRateAllocationGraph} or
a time-expanded graph such as $G^{(3)}$ described in Figs.
\ref{fig:StaticTimeExpandedGraph} and
\ref{fig:OptStaticTimeExpandedGraph}. Denote $c(e)$ as the capacity
of the edge $e \in E$. A transmission flow from the source node $s$
to a destination node $i$ is a nonnegative vector
$\textit{\textbf{f}}$ of length $|E|$ satisfying the flow
conservation constraint: $\textrm{excess}_v (\textit{\textbf{f}}) =
0, \forall v \in V \backslash \{s, i\}$,where
\begin{equation}
\textrm{excess}_v (\textit{\textbf{f}}) = \sum_{e \in In(v)} f(e) -
\sum_{e \in Out(v)} f(e). \label{eqn:FlowConservation}
\end{equation}

The total flow supported by $\textit{\textbf{f}}$ is $\sum_{e \in
Out(s)} f(e)$.  This ``flow'' could be a flow {\em rate} with units
of bits per unit time if we are considering a rate allocation graph
such as Fig. \ref{fig:RateAllocationGraph} or it could be a total
flow with units of bits or packets or files if we are considering a
time-expanded graph such as Fig. \ref{fig:StaticTimeExpandedGraph}.

As an example, the flow vector $\textit{\textbf{f}}$ describing the
flow in Fig.~\ref{fig:StaticTimeExpandedGraph} from $S^{(1)}$ (the
source in the first epoch) to destination node $2^{(2)}$ (peer 2 in
the second epoch, when peer 2 finishes downloading) has the nonzero
elements $f(e)$ shown in Table \ref{tbl:FlowVector}.  Examining
Table \ref{tbl:FlowVector} verifies that the flow conservation
constraint (\ref{eqn:FlowConservation}) is satisfied and that the
total flow supported is equal to 1 file.

\begin{table}
\caption{Table showing nonzero elements $f(e)$ for the flow vector
$\textit{\textbf{f}}$ from $S^{(1)}$ to $2^{(2)}$ in
Fig.~\ref{fig:StaticTimeExpandedGraph}.} \label{tbl:FlowVector}
\begin{center}
\begin{tabular}{|c|l|}
\hline
$e$&$f(e)$\\
\hline \hline
$S^{(1)}\rightarrow 2^{(1)}$&$1/4$\\
\hline
$S^{(1)}\rightarrow 1^{(1)}$&$1/2$\\
\hline
$1^{(1)}\rightarrow 2^{(1)}$&$1/2$\\
\hline
$2^{(1)}\rightarrow 2^{(2)}$&$3/4$\\
\hline
$S^{(1)}\rightarrow S^{(2)}$&$1/4$\\
\hline
$S^{(2)}\rightarrow 2^{(2)}$&$1/12$\\
\hline
$S^{(2)}\rightarrow 1^{(2)}$&$1/6$\\
\hline
$1^{(2)}\rightarrow 2^{(2)}$&$1/6$\\
\hline
\end{tabular}
\end{center}
\end{table}

\newpage

The following lemma states that a given fixed flow (or flow rate)
can be achieved from the source to all destinations as long as there
is a feasible flow vector supporting the desired flow from the
source to each destination.  i.e. We can achieve this flow to all
destinations with network coding without worrying about possible
interactions of the various flows..

\newtheorem{NetworkCoding}{Lemma}
\begin{NetworkCoding}\label{Lemma:NetworkCoding}
\textbf{(Network Coding for Multicasting \cite{NetworkInfoFlow}
\cite{LinearNetworkCode})} In a directed graph $G = (V,E)$ with edge
capacity specified by a vector $\textit{\textbf{c}}$ , a multicast
session from the source node $s$ to a set of receivers $i \in
\{1,\ldots,N\}$ can achieve the same flow $r$ for each $i \in
\{1,\ldots,N\}$ if and only if there exits a set of flows
$\{\textit{\textbf{f}}_i\}$ such that
\begin{equation}
\textit{\textbf{c}} \geq \max _{i} \textit{\textbf{f}}_i
\label{eqn:networkcoding1} \end{equation} where
$\textit{\textbf{f}}_i$ is a flow from $s$ to $i$ with flow $r$.
Furthermore, if (\ref{eqn:networkcoding1}) holds, there exists a
linear network coding solution.
\end{NetworkCoding}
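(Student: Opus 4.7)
The plan is to handle the two implications separately, reducing both to classical max-flow/min-cut reasoning combined with the network coding theorems of \cite{NetworkInfoFlow} and \cite{LinearNetworkCode}.

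For the only-if direction, suppose the multicast delivers rate $r$ to every receiver under the edge capacities $\textit{\textbf{c}}$. Then for each fixed receiver $i$, the induced information flow from $s$ to $i$ may be viewed as a nonnegative edge vector that respects the capacity constraint on every edge, satisfies flow conservation at every interior node, and carries $r$ units of total flow out of $s$. This is precisely an $s$-$i$ flow $\textit{\textbf{f}}_i$ of value $r$ with $\textit{\textbf{f}}_i \leq \textit{\textbf{c}}$. Taking the pointwise maximum over $i$ yields $\textit{\textbf{c}} \geq \max_i \textit{\textbf{f}}_i$, as required.

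For the if direction, the hypothesis that each $\textit{\textbf{f}}_i$ is a feasible flow of value $r$ with $\textit{\textbf{f}}_i \leq \textit{\textbf{c}}$ implies, by the max-flow min-cut theorem, that every cut separating $s$ from receiver $i$ has capacity at least $r$, for every $i \in \{1,\ldots,N\}$. The theorem of Ahlswede-Cai-Li-Yeung \cite{NetworkInfoFlow} then asserts that a single multicast session can simultaneously deliver rate $r$ to all $N$ receivers, and the result of Li-Yeung-Cai \cite{LinearNetworkCode} strengthens this by showing that the coding operation performed at each internal node can be taken to be a linear combination, over a sufficiently large finite field, of the symbols arriving on its incoming edges. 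I would invoke both of these statements as black boxes.

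The only genuine content requiring argument is the translation between the statement ``there exist feasible flow vectors $\textit{\textbf{f}}_i \leq \textit{\textbf{c}}$ of value $r$'' and the individual min-cut hypothesis required by the cited theorems; this translation is immediate from max-flow min-cut. Thus I anticipate no substantive obstacle in the write-up, since the combinatorial and algebraic depth of the result resides entirely in the classical network coding theorems that are being cited. The role of this lemma in the paper is simply to let us reason about flow vectors $\textit{\textbf{f}}_i$ on the time-expanded graph $G^{(N)}$ without worrying about redundancy among the flows to different receivers.
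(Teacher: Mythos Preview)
The paper does not prove this lemma; it is simply stated and attributed to \cite{NetworkInfoFlow} and \cite{LinearNetworkCode} as a known result, with no argument given. Your sketch is a correct outline of how the statement follows from those references, and your identification of the lemma's role in the paper (enabling reasoning on the time-expanded graph without tracking redundancy) is accurate.

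One small refinement: in the only-if direction, the phrase ``the induced information flow from $s$ to $i$ may be viewed as a nonnegative edge vector \ldots'' is slightly loose, since a network-coded transmission does not in general decompose into a classical commodity flow to each receiver. The cleaner argument is the cut-set bound: if rate $r$ is achievable to receiver $i$ under capacities $\textit{\textbf{c}}$, then every $s$--$i$ cut has capacity at least $r$, and then max-flow min-cut yields the desired $\textit{\textbf{f}}_i$ with $\textit{\textbf{f}}_i \le \textit{\textbf{c}}$. This is exactly the translation you identify in your last paragraph, so the write-up would go through without difficulty.
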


\subsection{A Convex Optimization}
Given an order in which the peers will finish downloading, say peer
$i$ finishes at the end of the $k_i$-th epoch, applying Lemma
\ref{Lemma:NetworkCoding} to the time-expanded graph $G^{(N)}$ with
the set of destination nodes $i \in \{1,\ldots,N\}$  gives a
characterization of all feasible downloading times, as concluded in
the following lemma.

\newtheorem{FeasibleDelay1}[NetworkCoding]{Lemma}
\begin{FeasibleDelay1}\label{Lemma:FeasibleDelay1}
\textbf{(Feasible Downloading Times with Given Order \cite{Wu09})}
Consider a P2P network in which node $D_i=\infty$. Given an order in
which the nodes will finish downloading a file with size $B$, say
node $i$ finishes at epoch $k_i$, a set of epoch durations ${\Delta
t_i}$ is feasible if and only if the following system of linear
inequalities has a feasible solution:
\begin{align}
\Delta t_i & \geq 0,\quad i=1,\cdots,N,\\
\textit{\textbf{g}} & \geq \textit{\textbf{f}}_i, \quad i=1,\cdots,N,\\
\sum_{v:u^{(i)}\rightarrow v^{(i)}} g_{u^{(i)}\rightarrow v^{(i)}} &
\leq c_u \Delta t_i, \quad \forall u \in V, i=1,\cdots,N,
\end{align}
where $c_u$ is the uplink capacity of peer $u$, and
$\textit{\textbf{f}}_i$ is a flow from first-epoch source node
$s^{(1)}$ to node $i$'s termination-epoch node $i^{(k_i)}$ with flow
rate $B$.
\end{FeasibleDelay1}

As an example, the epoch durations of
Fig.~\ref{fig:StaticTimeExpandedGraph} are feasible because each of
the flow vectors (one example was given in Table
\ref{tbl:FlowVector}) satisfy the feasibility constraints of Lemma
\ref{Lemma:FeasibleDelay1}.

Let $t_j$ denote the download time to peer $j$ for $j = 1,\cdots,N$.
Given a static network resource allocation $r_{i \rightarrow j}$,
$(i,j \in \{1,\cdots,N\})$ as shown for example in Fig.
~\ref{fig:RateAllocationGraph}, the maximum flow to peer $j$,
denoted as $r_j$, is equal to the minimum cut between source node
$s$ and peer $j$ in the rate-allocation graph (i.e. a graph such as
Fig.~\ref{fig:RateAllocationGraph}, not the time-expanded graph).
This follows from the Max-Flow-Min-Cut Theorem. Hence, $t_j \geq
\frac{B}{r_j}, \forall j$.

From applying network coding results such as Lemma
\ref{Lemma:NetworkCoding} to the rate allocation graph alone, we
cannot conclude much about feasible download times since Lemma
\ref{Lemma:NetworkCoding} addresses only the feasibility of the same
flow to all destinations.  However, by applying Lemma
\ref{Lemma:NetworkCoding} to the {\em time-expanded graph} we can
show that $t_j = \frac{B}{r_j}$ can be achieved simultaneously for
all $j=1,\cdots,N$. Lemma \ref{Lemma:StaticOptDelay} below states
this result.

\newtheorem{StaticOptDelay}[NetworkCoding]{Lemma}
\begin{StaticOptDelay}\label{Lemma:StaticOptDelay}
Given a static network resource allocation $r_{i \rightarrow j}$,
$(i,j=1,\cdots,N)$, for a P2P network, the only Pareto optimal
(smallest) download time vector is $t_j = \frac{B}{r_j}$ for
$j=1,\cdots,N$, where $r_j$ is the minimum cut from the source node
$s$ to peer $j$.
\end{StaticOptDelay}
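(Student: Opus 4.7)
The plan is to establish the lemma in two directions. First, the converse: for each peer $j$, the Max-Flow-Min-Cut theorem applied to the rate-allocation graph $G$ shows that the maximum rate at which information can reach $j$ from $s$ is $r_j$, so any scheme satisfies $t_j \geq B/r_j$. Consequently $(B/r_1,\ldots,B/r_N)$ dominates every feasible vector componentwise, so it is the unique candidate for a Pareto-optimal vector---provided it is itself achievable.

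The heart of the proof is therefore the achievability claim. I would relabel peers so that $B/r_1 \leq B/r_2 \leq \cdots \leq B/r_N$ and form the time-expanded graph $G^{(N)}$ in which peer $j$ terminates at the end of epoch $j$ with $\Delta t_j = B/r_j - B/r_{j-1}$ (taking $B/r_0 = 0$). The goal is then to exhibit, for each $j$, a flow vector $\mathbf{F}_j$ of value $B$ from $s^{(1)}$ to $j^{(j)}$ in $G^{(N)}$ that is dominated pointwise by the edge capacities, and then invoke Lemma \ref{Lemma:NetworkCoding} to obtain a linear network code delivering flow $B$ to every $j^{(j)}$.

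To build $\mathbf{F}_j$, fix a max-flow $\phi_j$ of value $r_j$ from $s$ to $j$ in $G$. On each physical edge $u^{(k)} \to v^{(k)}$ with $k \leq j$, place flow $\phi_j(u \to v)\,\Delta t_k$; in epochs $k>j$ place zero. On the source memory edge $s^{(k)} \to s^{(k+1)}$ place the residual $B - r_j t_k$ (which equals $B$ at $k=0$ and telescopes to $0$ at $k=j$), and on the destination memory edge $j^{(k)} \to j^{(k+1)}$ for $k<j$ place the accumulated $r_j t_k$; all other memory edges carry zero. Conservation of $\phi_j$ in $G$ gives zero excess at every intermediate physical node $i^{(k)}$ with $i \neq s, j$, and direct checks at $s^{(k)}$ and $j^{(k)}$ reduce to elementary telescoping identities. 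Since $\phi_j(u\to v) \leq r_{u\to v}$, each physical edge carries at most its time-expanded capacity $r_{u\to v}\,\Delta t_k$; memory edges have infinite capacity; hence $\mathbf{F}_j \leq \mathbf{c}$ for every $j$, so $\max_j \mathbf{F}_j \leq \mathbf{c}$ and Lemma \ref{Lemma:NetworkCoding} applies. The resulting linear network code delivers $B$ bits to each $j^{(j)}$, so peer $j$ has the whole file by time $t_j = B/r_j$, matching the converse.

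The main obstacle is the careful construction of $\mathbf{F}_j$ in $G^{(N)}$: while conservation at interior physical nodes follows immediately from the conservation of $\phi_j$ in $G$, choosing the memory-edge values at the source and at the destination so that $\mathbf{F}_j$ originates exactly at $s^{(1)}$, terminates exactly at $j^{(j)}$, and has total value $B$ requires tracking, in each epoch, how much of the file is still ``held'' at the source and how much has already been accumulated by peer $j$. Once that bookkeeping is in place, the three ingredients---the Max-Flow-Min-Cut lower bound, the explicit flow construction, and Lemma \ref{Lemma:NetworkCoding}---combine to prove both existence and uniqueness of the Pareto-optimal download time vector.
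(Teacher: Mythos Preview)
Your proposal is correct and follows essentially the same strategy as the paper: the converse via Max-Flow--Min-Cut, the same ordering and epoch durations $\Delta t_j = B/r_j - B/r_{j-1}$ in the time-expanded graph, and an appeal to Lemma~\ref{Lemma:NetworkCoding} on $G^{(N)}$ for achievability. The only difference is one of presentation: the paper argues that the min-cut from $s^{(k)}$ to $i^{(k)}$ in each epoch subgraph equals $r_i\,\Delta t_k$ (since that subgraph is just $G$ scaled by $\Delta t_k$) and then sums over $k\le i$ to get max-flow at least $B$, whereas you construct the witnessing flow $\mathbf{F}_j$ explicitly by replicating a fixed max-flow $\phi_j$ in each epoch and routing the residual and accumulated mass along the source and destination memory edges; both arguments are equivalent and your bookkeeping is sound.
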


\begin{proof}
It has been shown that $t_j \geq \frac{B}{r_j}$ for $j=1,\cdots,N$.
Hence, it is sufficient to show that $t_j = \frac{B}{r_j}$ for
$j=1,\cdots,N$ is achievable. Without loss of generality, assume
that $r_1 \geq r_2 \geq \cdots \geq r_N > 0$. Construct a static
scheme (i.e. a time-expanded graph $G^{(N)}$) as
follows:\\
(1) $\Delta t_i = \frac{B}{r_i}-\frac{B}{r_{i-1}}$, where $r_0 \triangleq \infty$;\\
(2) Flow capacity of edge $i^{(k)} \rightarrow j^{(k)}$ is $r_{i
\rightarrow j} \Delta t_k$ for $1 \leq i \neq j \leq N$ and
$k=1,\cdots, N$;\\
(3) Flow capacity of edge $s^{(k)} \rightarrow j^{(k)}$ is $r_{s
\rightarrow j} \Delta t_k$ for $j,k=1,\cdots, N$;\\
(4) Flow capacity of edge $j^{(k)} \rightarrow j^{(k+1)}$ is infinity for $j=1,\cdots,N$ and $k=1,\cdots, N-1$;\\
(5) The destination nodes in the time-expanded graph are node
$i^{(i)}$ for $i=1,\cdots,N$. In other words, peer $i$ finishes at
the end of $i$-th epoch. \\
According to the constructions (1) and (5), the download time to
peer $i$ is $t_i = \sum_{k=1}^{i} \Delta t_k = \frac{B}{r_i}$.
According to the constructions (2) and (3), in the subgraph
$G^{(k)}$, the maximum flow from $s^{(k)}$ to $i^{(k)}$ is equal to
$r_i \Delta t_k$ for all $i,k =1,\cdots,N$. Therefore, in this
time-expanded graph $G^{(N)}$, the maximum flow from source node $s$
to node $i^{(i)}$ is greater than or equal to
\[ \sum_{k=1}^{i} r_i \Delta t_k = B.\]
Therefore, by Lemma \ref{Lemma:NetworkCoding} and Lemma
\ref{Lemma:FeasibleDelay1}, there exists a linear network coding
solution to multicast a file with size $B$ from the source node $s$
to peer $i$ within download time $t_i = \frac{B}{r_i}$ for all
$i=1,\cdots, N$.
\end{proof}

The maximum flow $r_i$ can be found by solving a linear
optimization.  Specifically, a set of flow rates $\{r_i\}_{i=1}^{N}$
is feasible if and only if there exists a solution to the following
system of linear inequalities:
\begin{align}
r_{i \rightarrow j} & \geq 0, \quad \forall i,j=1,\cdots,N; \label{eq:feasiblerate1}\\
\sum_{i=1}^{N} r_{i \rightarrow i} & \leq U_s; \quad (\textrm{recall that } r_{i \rightarrow i} \triangleq r_{s \rightarrow i})\\
\sum_{j=1,j\neq i}^{N} r_{i \rightarrow j} & \leq U_i, \quad \forall
i=1,\cdots,N;\\
\sum_{j=1}^{N} r_{j \rightarrow i} & \leq D_i,\quad \forall
i=1,\cdots,N;
\end{align}
\begin{align}
0 \leq f_{i \rightarrow j}^{(k)} & \leq r_{i \rightarrow j},
\forall i,j,k = 1, \cdots, N;\\
f_{k \rightarrow j}^{(k)} & = 0, \quad \forall j \neq k;\\
 \sum_{j=1}^{N} f_{j \rightarrow i}^{(k)} & = \sum_{j=1,j\neq
i}^{N}
f_{i \rightarrow j}^{(k)}, \quad \forall i \neq k;\\
\sum_{i=1}^{N} f_{i \rightarrow k}^{(k)} & \geq r_k, \quad \forall
k=1,\cdots,N, \label{eq:feasiblerate8}
\end{align}
where $r_{i \rightarrow j}$ ($i,j =1,\cdots,N$) represents the
network resource allocation and $f_{i \rightarrow j}^{(k)}$ ($i,j
=1,\cdots,N$) is a flow from the source node $s$ to peer $k$.

By Lemma \ref{Lemma:StaticOptDelay}, the minimum WSDT is the
solution to the convex optimization of minimizing
$\sum_{i=1}^{N}W_iB/r_i$ subject to
(\ref{eq:feasiblerate1}-\ref{eq:feasiblerate8}). Thus, we can
conclude the following theorem:

\newtheorem{StaticPolynomial}{Theorem}
\begin{StaticPolynomial}\label{Theorem:StaticPolynomial}
Consider multicasting a file with size $B$ from a source node $s$ to
peers $\{1,\cdots,N\}$ in a P2P network with  both uplink and
downlink capacity limits. The minimum weighted sum downloading time
for the static scenario and the corresponding optimal static
allocation can be found in polynomial time by solving the convex
optimization of minimizing $\sum_{k=1}^{N}W_k B/r_i$ subject to the
constraints (\ref{eq:feasiblerate1}-\ref{eq:feasiblerate8}).
\end{StaticPolynomial}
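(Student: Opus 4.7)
The plan is to reduce the theorem to a direct corollary of Lemma \ref{Lemma:StaticOptDelay}, combined with the observation that the listed constraint system exactly parameterizes the set of achievable per-peer max-flow rates $r_k$, and that the resulting optimization over these rates has a convex separable objective on a polyhedral feasible set, hence is solvable in polynomial time by standard interior-point methods.

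First, I would apply Lemma \ref{Lemma:StaticOptDelay} to any static allocation $\{r_{i\to j}\}$ satisfying the source-uplink, peer-uplink, and peer-downlink budgets. That lemma identifies the only Pareto-optimal download-time vector as $t_k = B/r_k$, where $r_k$ is the min-cut (equivalently max-flow) from $s$ to peer $k$ in the rate-allocation graph, and guarantees that this vector is simultaneously achievable by linear network coding over the time-expanded graph. Since the weights $W_k$ are nonnegative, minimizing $\sum_k W_k t_k$ over all allocations and all download-time vectors achievable from them is equivalent to
\[
\min \sum_{k=1}^{N} \frac{W_k B}{r_k}
\]
taken over all $(r_{i\to j}, r_k)$ for which some static allocation supports per-peer max-flow at least $r_k$ to each peer $k$.

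Second, I would verify that the linear system (\ref{eq:feasiblerate1})--(\ref{eq:feasiblerate8}) is precisely this joint-feasibility set. Nonnegativity, the source-uplink budget $U_s$, the peer-uplink budgets $U_i$, and the peer-downlink budgets $D_i$ are enforced directly. The auxiliary variables $f^{(k)}_{i\to j}$ encode, for each $k$, a flow from $s$ to peer $k$ in the rate-allocation graph via standard capacity bounds, flow-conservation at all internal nodes, the side condition $f^{(k)}_{k\to j}=0$ (peer $k$ does not forward its own sink-flow), and the requirement that the total flow into $k$ be at least $r_k$. By the Max-Flow-Min-Cut theorem, the existence of such $\{f^{(k)}_{i\to j}\}$ is equivalent to $r_k$ not exceeding the min-cut from $s$ to peer $k$, which is exactly the condition needed by Lemma \ref{Lemma:StaticOptDelay}.

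Third, I would argue convexity and polynomial-time solvability. The constraints (\ref{eq:feasiblerate1})--(\ref{eq:feasiblerate8}) are linear in all decision variables, so the feasible set is polyhedral (hence convex). The objective is separable and each summand $W_k B/r_k$ has second derivative $2W_k B/r_k^{3}>0$ on $r_k>0$, so the objective is convex on the relevant domain. The number of variables is $\mathcal{O}(N^{3})$ and the number of linear constraints is $\mathcal{O}(N^{3})$, so the program is solved in polynomial time by standard interior-point methods; the optimal static rate allocation is read off the optimizing $\{r_{i\to j}\}$, and the associated download times $t_k = B/r_k$ are realized by linear network coding via Lemma \ref{Lemma:StaticOptDelay}. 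The main obstacle is not any single hard step but the bookkeeping verification that the $N$ auxiliary flow systems in (\ref{eq:feasiblerate1})--(\ref{eq:feasiblerate8}) are jointly equivalent to the $N$ independent max-flow conditions used in Lemma \ref{Lemma:StaticOptDelay}; once this equivalence is recorded, convexity and polynomial-time solvability are standard.
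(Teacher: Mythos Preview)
Your proposal is correct and follows essentially the same route as the paper: the theorem is stated there as an immediate consequence of Lemma~\ref{Lemma:StaticOptDelay} together with the observation that (\ref{eq:feasiblerate1})--(\ref{eq:feasiblerate8}) linearly characterize the achievable per-peer max-flow rates, making the objective $\sum_k W_k B/r_k$ convex over a polyhedron. Your write-up is more explicit about the max-flow/min-cut equivalence and the interior-point complexity count than the paper's one-line justification, but the underlying argument is identical.
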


Theorem \ref{Theorem:StaticPolynomial} gives a solution to the most
general static case that we are considering in this paper. However,
it can be extended further by adding other linear network
constraints (e.g. edge capacity constraints), which are not a
concern of this paper.

\subsection{The Uplink-Bandwidth-Sum Bound}
\label{sec:BoundingWSD}

For a P2P network with a source node and N peers, the convex
optimization in Theorem \ref{Theorem:StaticPolynomial} has
$N^3+N^2+N$ variables and $2N^3+3N^2+N+1$ linear constraints. The
complexity for the interior point method to solve this convex
optimization is $O((N^3)^{3.5})$ \cite{BookConvexOpt}.

Even though the convex optimization can be solved in polynomial
time, its complexity is still too high for practical applications
when $N$ is large. Hence, bounds on the minimum WSDT and static
schemes having network resource allocations that may be computed
with low complexity are desired. In this subsection, we provide an
analytical lower bound to the minimum WSDT with $O(N^2)$ complexity
for computing both the bound itself and the associated rate
allocations.

Consider the cut of $\{V \setminus i\} \rightarrow \{i\}$ for any
static allocation $r_{i \rightarrow j}$ $i,j\in \{1,\cdots, N\}$,
the maximum flow rate from the source node $s$ to peer $i$, $r_i$,
is limited by
\begin{equation}
r_i \leq \sum_{j=1}^{N} r_{j \rightarrow i} \leq
D_i,\label{eq:relaxconstraint1}
\end{equation}
and
\begin{align}
\sum_{i=1}^{N} r_i &  \leq \sum_{i=1}^{N}\sum_{j=1}^{N} r_{j \rightarrow i} \\
& = \sum_{j=1}^{N}  r_{j \rightarrow j} + \sum_{j=1}^{N}
\sum_{i=1,i \neq j}^{N} r_{j \rightarrow i} \\
& \leq U_s + \sum_{j=1}^{N}U_j. \label{eq:relaxconstraint2}
\end{align}
Consider the cut of $\{s\} \rightarrow \{1,\cdots,N\}$, $r_i$ is
also bounded by
\begin{equation}
r_i \leq \sum_{j=1}^{N} r_{j \rightarrow j} \leq U_s.
\label{eq:relaxconstraint3}
\end{equation}
Inequalities (\ref{eq:relaxconstraint1}) and
(\ref{eq:relaxconstraint3}) indicate that the downloading flow rate
for peer $i$ is limited by peer $i$'s downlink capacity and the
source node's uplink capacity respectively. These two constraints
are not only valid for the static scenario but also for dynamic
scenarios.

Inequality (\ref{eq:relaxconstraint2}) shows that the sum of the
downloading flow rates for all peers is bounded by the total amount
of the network uplink resource. Again, this constraint holds in both
the static and dynamic cases.

These three constraints characterize an outer bound to the region of
all feasible sets of $\{r_i\}_{i=1}^{N}$ satisfying
(\ref{eq:feasiblerate1} - \ref{eq:feasiblerate8}). Therefore, for
any static scheme, every set of feasible flow rates
$\{r_i\}_{i=1}^{N}$ must satisfy (\ref{eq:relaxconstraint1}),
(\ref{eq:relaxconstraint2}) and (\ref{eq:relaxconstraint3}).
However, not all $\{r_i\}_{i=1}^{N}$ satisfying
(\ref{eq:relaxconstraint1}), (\ref{eq:relaxconstraint2}) and
(\ref{eq:relaxconstraint3}) are feasible.

Consider the following example:  Let $B=1$,  $U_S=3$, and
$U_1=U_2=U_3=1$ (with $D_1 = D_2 = D_3 = \infty$), the downloading
flow rates $r_1=r_2=3,r_3=0$ satisfies the constraints
(\ref{eq:relaxconstraint1}), (\ref{eq:relaxconstraint2}) and
(\ref{eq:relaxconstraint3}), but are not feasible because there is
no solution to (\ref{eq:feasiblerate1} - \ref{eq:feasiblerate8})
with $r_1=r_2=3,r_3=0$, i.e., no static scenario to support
$r_1=r_2=3,r_3=0$ simultaneously.  Specifically, for $r_1+r_2=6$,
all upload capability must be deployed, including that of peer 3.
However, since $r_3=0$, any transmission by peer 3 would violate the
conservation-of-flow constraint.

Because all feasible sets of $\{r_i\}_{i=1}^{N}$ satisfy
(\ref{eq:relaxconstraint1}) (\ref{eq:relaxconstraint2}) and
(\ref{eq:relaxconstraint3}), the solution to the following
minimization problem provides a lower bound to the minimum WSDT for
the static scenario:
\begin{equation} \label{eq:minproblem}
  \begin{array}{cc}
    \min & \sum_{i=1}^{N} W_i \frac{B}{r_i} \\
    \textrm{subject to} & \sum_{i=1}^{N} r_i \leq U_s + \sum_{i=1}^{N}U_i \\
     & 0 \leq r_i \leq \tilde{D}_i \triangleq \min(D_i, U_s) , \forall i=1,\cdots,N, \\
  \end{array}
\end{equation}
where only $r_i$ $(i=1,\cdots,N)$ are the variables. Empirical
experiments presented in Section \ref{sec:staticsimulation} show
that this lower bound is tight for most P2P networks.

The minimization problem (\ref{eq:minproblem}) is a convex
optimization. Its optimal solutions are also the solutions to the
associated Karush$-$Kuhn$-$Tucker (KKT) conditions
\cite{BookConvexOpt}. The KKT conditions for problem
(\ref{eq:minproblem}) are
\begin{align}\label{eq:KKT}
-W_i \cdot \frac{1}{r_i^2} + \lambda + \mu_i & = 0, \quad
i=1,\cdots, N;\\
\sum_{i=1}^{N}r_i -U_s -\sum_{i=1}^{N}U_i & \leq 0, \quad \lambda
\geq 0;\\
r_i - \tilde{D}_i & \leq 0, \quad \mu_i \geq 0;\\
\lambda (\sum_{i=1}^{N}r_i -U_s -\sum_{i=1}^{N}U_i ) & = 0;\\
\mu_i (r_i -  \tilde{D}_i) & = 0, \quad i=1,\cdots, N.
\end{align}
Solving the KKT conditions yields the following optimal solution for
$\{r_i\}$:
\begin{eqnarray}
r^*_i = \Bigg \{ \begin{array}{cc}
                 \sqrt{W_i}\cdot R, & \textrm{if } \sqrt{W_i}\cdot R < \tilde{D}_i \\
                 \tilde{D}_i & \textrm{if } \sqrt{W_i}\cdot R \geq
                 \tilde{D}_i
               \end{array}~~,
              \label{eq:lowerboundsolution}
\end{eqnarray}
where $R$ is chosen such that
\begin{equation}
\sum_{i=1}^{N}r^*_i = \min(U_s+\sum_{i=1}^{N}U_i,
\sum_{i=1}^{N}\tilde{D}_i ). \label{eq:ratesumforbound}
\end{equation}
The lower bound to the WSDT for the static scenario is then
\begin{equation}\label{eq:lowerbound}
\sum_{i=1}^{N}W_i t_i \geq \sum_{i=1}^{N}W_i \frac{B}{r^*_i},
\end{equation}
with $r^*_i$ as specified in (\ref{eq:lowerboundsolution}).

For the special case where $W_i=1$ and $D_i = \infty$ $(i=1,\cdots,
N)$, the solution given in (\ref{eq:lowerboundsolution}) becomes
\begin{equation}
r^*_i = \min(U_s, \frac{U_s + \sum_{i=1}^{N}U_i}{N}),
\end{equation}
and the lower bound to the minimum WSDT is

\begin{equation}
\sum_{i=1}^{N}t_i \geq \frac{NB}{\min(U_s, \frac{U_s +
\sum_{i=1}^{N}U_i}{N})} .
\end{equation}

Mutualcast \cite{Mutualcast04} was designed to minimize the maximum
download time for the case where $D_i = \infty$.  However, since
Mutualcast can achieve the download time of $\frac{B}{\min(U_s,
\frac{U_s + \sum_{i=1}^{N}U_i}{N})}$ for all peers, it achieves the
lower bound of (\ref{eq:lowerbound}) for the $W_i=1$ case. This fact
shows both that the lower bound of (\ref{eq:lowerbound}) is tight
when  $W_i=1$ and $D_i = \infty$ and that Mutualcast minimizes sum
download time as well as the maximum download time when $D_i =
\infty$.

\section{Mutualcast and Extended Mutualcast for the Equal-Weight Static Case}
\label{sec:extended_mutualcast}

The concluding paragraph of Section \ref{sec:BoundingWSD} stated
that Mutualcast minimizes the sum download time for the case where
$D_i = \infty$.   In this section we extend Mutualcast to provide an
algorithm we call Extended Mutualcast that handles finite
constraints on $D_i$ (possibly delivering different rates to
different peers) while still minimizing the sum download time.

\subsection{Mutualcast}

Mutualcast delivers the same rate to every peer.  Assuming $D_i =
\infty$, Mutualcast can support peers with any rate $R \leq
\min(U_s, \frac{U_s + \sum_{i=1}^{N}U_i}{N})$.  The key aspect of
Mutualcast is that the source first delivers bandwidth to each node
according to how much that node can share with all other peers.
After that, if the source has any upload bandwidth left over, it is
divided evenly among all peers.  This leftover rate goes serves only
one peer; it is not relayed to any other peers.  Thus Mutualcast
first forms a series of depth-two trees from the source to all
nodes.  Then, if there is any source upload bandwidth left over, it
is used to form a series of depth-one trees.  Here is a
specification of the Mutualcast algorithm (without considering
helper nodes):
\begin{algorithm}
\caption{The Mutualcast Algorithm for Network Resource Allocation}
\label{alg:Mutualcast}
\begin{algorithmic}[1]
\STATE Given broadcast rate $R \leq \min(U_s, \frac{U_s +
\sum_{i=1}^{N}U_i}{N})$. \\
\STATE Given an ordering of the peers. (Without loss of generality,
assume
the order is $1,\cdots, N$.) \\
\FOR {$i=1$ to $N$} \STATE $r_{s \rightarrow i} \leftarrow \min (R, U_i/(N-1))$. \\
\STATE $r_{i \rightarrow j} \leftarrow r_{s \rightarrow i}$ for $j
\neq i$.\\
\STATE $R \leftarrow R - r_{s \rightarrow i}$.\\
\STATE $U_s \leftarrow U_s - r_{s \rightarrow i}$. \\
\ENDFOR \STATE $r_{s \rightarrow i} \leftarrow r_{s \rightarrow i} +
R$.
\end{algorithmic}
\end{algorithm}

Mutualcast delivers information to all peers at the same rate.  As
described in Algorithm \ref{alg:Mutualcast} the highest rate that
Mutualcast can deliver is
\begin{equation}
R = \min(U_s, \frac{U_s +\sum_{i=1}^{N}U_i}{N}).
\end{equation}
Consider two examples with ten peers, one in which $R=U_s$ and one
in which $R=\frac{U_s+\sum_{i=1}^{N}U_i}{N}$.

First is an example where $R=U_s$. Note that in general it is not
possible for any peer to receive information at a rate higher than
$U_s$.  Let $U_s=1$, $U_i=1$ for all ten peers, and $D_i = \infty$
for all ten peers.  Mutualcast achieves $R=U_s=1$ by having nine
peers receive rate 1/9 from the source and forward at that rate to
the nine other peers.  One peer receives no information directly
from the source because by the time the Mutualcast algorithm gets to
that peer, the source upload bandwith has been used up.

For an example where $R=\frac{U_s+\sum_{i=1}^{N}U_i}{N}$, a larger
$U_s$ is necessary.   Let $U_s=10$, $U_i=1$ for all ten peers and
$D_i = \infty$ for all ten peers.  Mutualcast achieves
$R=\frac{U_s+\sum_{i=1}^{N}U_i}{N}=2$.  In the first part of the
Mutualcast algorithm, all ten peers receive rate 1/9 from the source
and relay at that rate to the nine other peers.  At this point there
remains 80/9 of source upload bandwidth, which is distributed evenly
so that each peer receives a rate of 8/9 directly from the source
that it does not relay. In total, each peer receives rate 2 which is
comprised of rate 1 from other peers, rate 1/9 from the source that
it relays to the other peers, and rate 8/9 from the source that it
does not relay.

The basic Mutualcast algorithm does not consider download
constraints.  The slight modification of Mutualcast given below
includes download bandwidth constraints $D_i$ in the simplest
possible way.  Note that if all peers are to receive at the same
rate, that rate must be less than the smallest download constraint.
This is reflected in line 1 of Algorithm
\ref{alg:modifiedMutualcast}.

\begin{algorithm}
\caption{The Mutualcast Algorithm with Download Bandwidth
Constraints} \label{alg:modifiedMutualcast}
\begin{algorithmic}[1]
\STATE Given broadcast rate $R \leq \min\left(U_s, \frac{U_s +
\sum_{i=1}^{N}U_i}{N}, \min_{j \in \{1, \ldots, N\}} (D_j) \right)$. \\
\STATE Given an order of peers. (Without loss of generality, assume
the order is $1,\cdots, N$.) \\
\FOR {$i=1$ to $N$} \STATE $r_{s \rightarrow i} \leftarrow \min (R,
D_i, U_i/(N-1))$. \\
\STATE $r_{i \rightarrow j} \leftarrow r_{s \rightarrow i}$ for $j
\neq i$.\\
\STATE $R \leftarrow R - r_{s \rightarrow i}$.\\
\STATE $U_s \leftarrow U_s - r_{s \rightarrow i}$. \\
\STATE $D_j \leftarrow D_j - r_{s \rightarrow i}$ for
$j=1,\cdots,N$.\\
\ENDFOR \STATE $r_{s \rightarrow i} \leftarrow r_{s \rightarrow i} +
R$.
\end{algorithmic}
\end{algorithm}
As with the original Mutualcast, Algorithm
\ref{alg:modifiedMutualcast} delivers the same rate to every peer.
This alone is enough to prevent it from minimizes the sum download
time in general when there are download constraints.  However, it
will turn out to be an important component of Extended Mutualcast,
which is an algorithm that does minimize the sum download time under
general download constraints.

\subsection{Extended Mutualcast}

Setting $W_i=1$ for all $i$ in (\ref{eq:lowerboundsolution})
produces the following lower bound on the sum download time when
both upload and download constraints are considered:

\begin{equation}
\sum_{i=1}^{N}\frac{B}{r^*_i}, \label{eq:lowerboundMSDstart}
\end{equation}
where
\begin{align}
r^*_i &= \Bigg \{ \begin{array}{cc}
                 R, & \textrm{if } R < \tilde{D}_i \\
                 \tilde{D}_i & \textrm{if }  R \geq
                 \tilde{D}_i
               \end{array}\\
     &= \min (R,D_i,U_s),
\end{align}
where  $R$ is chosen such that
\begin{equation}
\sum_{i=1}^{N}r^*_i = \min(U_s+\sum_{i=1}^{N}U_i,
\sum_{i=1}^{N}\tilde{D}_i ). \label{eq:lowerboundMSDend}
\end{equation}
This lower bound can be achieved by a routing-based scheme that we
call Extended Mutualcast.

Consider a P2P network with constraints on peer uplink bandwidth and
peer downlink bandwidth.  Without loss of generality, assume that
$D_1 \leq \cdots \leq D_N$. Hence, $\tilde{D}_1 \leq \cdots \leq
\tilde{D}_N \leq U_s$ and $r^*_1 \leq \cdots \leq r^*_N$.  The
network resource allocation and the routing for Extended Mutualcast
are provided in Algorithms \ref{alg:extendedMutualcast} and
\ref{alg:extendedMutualcast2} respectively.

\begin{algorithm}
\caption{Network Resource Allocation for Extended Mutualcast}
\label{alg:extendedMutualcast}
\begin{algorithmic}[1]
\STATE Calculate $R$ and $r^*_i$ $(i=1, \cdots,N)$ from (\ref{eq:lowerboundMSDstart}-\ref{eq:lowerboundMSDend}). \\
\STATE Initialize network resource allocation $r_{i \rightarrow j}
\leftarrow 0$.\\
\IF {$R \leq \tilde{D}_1$}
\STATE $r^*_i = R$ for all $i=1, \cdots, N$. \\
\STATE Apply Algorithm \ref{alg:modifiedMutualcast} with rate $R$ to the network. \\
\ELSIF {$\tilde{D}_j < R \leq \tilde{D}_{j+1}$ for $j \in
\{1,\cdots,
N-1\}$} \STATE $r^*_i = \tilde{D}_i$ for $i \in \{1, \cdots, j\}$.\\
\STATE $r^*_i = R$ for $i \in \{ j+1, \cdots, N\}$.\\
\FOR {Step $i=1$ to $j$} \STATE Successively apply Algorithm
\ref{alg:modifiedMutualcast} with rate $\tilde{D}_i -
\tilde{D}_{i-1}$ ($\tilde{D}_0 \triangleq 0$) to the network with
the
source node $s$ and the ordered peers $\{i, \cdots, N\}$.  Note that with each successive application of Algorithm \ref{alg:modifiedMutualcast}, the values of $r_{i \rightarrow j}$ accumulate.\\
\ENDFOR \STATE Step $j+1$: Apply Algorithm
\ref{alg:modifiedMutualcast} one final time with rate $R -
\tilde{D}_j$ to the network with the source node $s$ and the ordered
peers
$\{j+1, \cdots, N\}$. Again, the values of $r_{i \rightarrow j}$ accumulate.\\
\ELSIF {$R \geq \tilde{D}_N$} \STATE $r^*_i = \tilde{D}_i = D_i$ for
$i=1,\cdots,N$.\\
\FOR {Step $i=1$ to $N$} \STATE Successively apply Algorithm
\ref{alg:modifiedMutualcast} with supporting rate $\tilde{D}_i -
\tilde{D}_{i-1}$ to the network with the
source node $s$ and the ordered peers $\{i, \cdots, N\}$.  Note that with each successive application of Algorithm \ref{alg:modifiedMutualcast}, the values of $r_{i \rightarrow j}$ accumulate.\\
\ENDFOR \ENDIF
\end{algorithmic}
\end{algorithm}

\begin{algorithm}
\caption{Routing Scheme for Extended Mutualcast}
\label{alg:extendedMutualcast2}
\begin{algorithmic}[1]
\STATE Given $R$ and $r^*_i$ $(i=1, \cdots,N)$ from (\ref{eq:lowerboundMSDstart}-\ref{eq:lowerboundMSDend}). \\
\STATE Given the network resource allocation $r_{i \rightarrow j}$
$(i,j =1,\cdots,N)$ by Algorithm \ref{alg:extendedMutualcast} where
$r_{i \rightarrow i} \triangleq r_{s \rightarrow i} \geq r_{i
\rightarrow
j}$. (This routing scheme is based on that network resource allocation.)\\
\STATE Partition the whole file into many chunks.\\
\IF {$R \leq \tilde{D}_1$} \STATE Apply the routing scheme of
Mutualcast \cite{Mutualcast04}. That is, for each $i=1,\cdots, N$
and some $j \neq i$, accumulatively route $\frac{r_{i \rightarrow
j}}{R}$ fraction of all chunks from the source node to peer $i$, and
then copy and route them from peer $i$ to other peers.
Accumulatively route the rest of
the chunks are from the source node to all peers directly.\\
\ELSIF {$\tilde{D}_j < R \leq \tilde{D}_{j+1}$ for $j=1,\cdots,
N-1$.} \STATE For $k =1, \cdots, j$, broadcast
$\frac{\tilde{D}_k-\tilde{D}_{k-1}}{R}$ fraction of all chunks to
peers $\{k,\cdots,N\}$ by Mutualcast.  Broadcast the rest of the
chunks to peers $\{j+1,\cdots,N\}$ by Mutualcast.
\STATE Until peers $\{j+1,\cdots, N\}$ finish downloading.\\
\FOR {Step $i=j$ to $1$}
\STATE  In Step $i$, the interesting chunks are those peer $i$ hasn't received.\\
\STATE For $k =1, \cdots, i$, accumulatively broadcast
$\frac{\tilde{D}_k - \tilde{D}_{k-1}}{\tilde{D}_i}$ fraction of the
interesting chunks to peers $k,\cdots,i$ by Mutualcast.
\STATE Until peer $i$ finishes downloading.\\
\STATE Note that peers $i+1, \cdots, N$ finish downloading before
Step $i$.\\
\STATE Note that prior to Step $i$, none of the peers $1, \cdots ,
i$ contain the interesting chunks broadcast during Step $i$. \ENDFOR
\ELSIF {$R \geq \tilde{D}_N$} \FOR {Step $i=N$ to $1$} \STATE In
Step $i$, the interesting chunks are those
peer $i$ hasn't received.\\
\STATE For $k =1, \cdots, i$, accumulatively broadcast
$\frac{\tilde{D}_k - \tilde{D}_{k-1}}{\tilde{D}_i}$ fraction of the
interesting chunks to peers $k,\cdots,i$ by Mutualcast.
\STATE Until peer $i$ finishes downloading.\\
\ENDFOR \ENDIF
\end{algorithmic}
\end{algorithm}

\clearpage

The network resource allocation for Extended Mutualcast (Algorithm
\ref{alg:extendedMutualcast}) is obtained by successively applying
Algorithm \ref{alg:modifiedMutualcast} to the P2P network or part of
the P2P network.  The network resource allocation by Algorithm
\ref{alg:extendedMutualcast} has $r_{s \rightarrow i} \geq r_{i
\rightarrow j}$ for all $i,j$. The flow rate to peer $i$, $r_i$, is
then equal to its download rate $\sum_{j=1}^{N}r_{j \rightarrow i}$.
The routing scheme for Extended Mutualcast (Algorithm
\ref{alg:extendedMutualcast2}) guarantees that the entire flow rate
$r_i$ is {\em useful}.  For the Extended Mutualcast rate allocation,
$r_i=\min(R,\tilde{D}_i)$ so that the lower bound
(\ref{eq:lowerboundMSDstart}-\ref{eq:lowerboundMSDend}) on sum
download time is achieved.  Theorem \ref{Theorem:SumDelay} formally
states and proves this fact.

\newtheorem{SumDelay}[StaticPolynomial]{Theorem}
\begin{SumDelay}\label{Theorem:SumDelay}
\textbf{(Minimum Sum Download time)} Consider multicasting a file
with size $B$ from a source node $s$ to peers $\{1,\cdots,N\}$ in a
P2P network with constraints on peer uplink bandwidth and peer
downlink bandwidth. The minimum sum download time for the static
scenario is $\sum_{i=1}^{N}\frac{B}{r^*_i}$, where $r^*_i$, the flow
rate to peer $i$, follows from
(\ref{eq:lowerboundMSDstart}-\ref{eq:lowerboundMSDend}).
\end{SumDelay}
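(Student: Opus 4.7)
The plan is to split the theorem into its two halves: a converse (lower bound) and an achievability (construction). The converse is already in hand, since Section \ref{sec:BoundingWSD} established that every feasible rate vector $\{r_i\}$ satisfies the constraints (\ref{eq:relaxconstraint1}), (\ref{eq:relaxconstraint2}), (\ref{eq:relaxconstraint3}), and hence the minimization problem (\ref{eq:minproblem}) with $W_i=1$ is a valid relaxation. The KKT solution (\ref{eq:lowerboundsolution})--(\ref{eq:ratesumforbound}) with $W_i=1$ yields exactly the $r^*_i$ in the theorem statement, so any feasible static scheme incurs sum download time at least $\sum_{i=1}^N B/r^*_i$. The real work is to exhibit a scheme that meets this bound.

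For achievability, I would analyze the rate allocation produced by Algorithm \ref{alg:extendedMutualcast} together with the routing of Algorithm \ref{alg:extendedMutualcast2} case by case. In the regime $R\le \tilde{D}_1$ every peer targets the same rate $R$, no downlink is binding, and Algorithm \ref{alg:modifiedMutualcast} reduces to ordinary Mutualcast, whose correctness is already known \cite{Mutualcast04}; this case is essentially handled by the last paragraph of Section \ref{sec:BoundingWSD}. The nontrivial cases are $\tilde{D}_j < R\le \tilde{D}_{j+1}$ and $R\ge \tilde{D}_N$. In both, Algorithm \ref{alg:extendedMutualcast} performs a peeling: at step $k$ it invokes Algorithm \ref{alg:modifiedMutualcast} on the sub-network $\{s,k,k+1,\dots,N\}$ at rate $\tilde{D}_k-\tilde{D}_{k-1}$, and at the final step it serves the top-rate peers with rate $R-\tilde{D}_j$ (or is absent when $R\ge \tilde{D}_N$). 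The key structural observation I would prove is that after $k$ peeling steps each peer $i\le k$ has accumulated exactly $\tilde{D}_i=r^*_i$ of download rate, while each peer $i>k$ has accumulated $\tilde{D}_k$.

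The main obstacle, and the bulk of the proof, is verifying that each invocation of Algorithm \ref{alg:modifiedMutualcast} is legal, i.e., that its precondition
\[
\Delta R_k \le \min\Bigl(U_s^{(k)},\tfrac{U_s^{(k)}+\sum_{i\ge k}U_i^{(k)}}{N-k+1},\min_{i\ge k} D_i^{(k)}\Bigr)
\]
holds at step $k$, where the superscript $(k)$ denotes residual capacities after prior peeling. I would track these residuals inductively. The source uplink residual is $U_s-\tilde{D}_{k-1}$, the downlink residual of peer $i\ge k$ is $D_i-\tilde{D}_{k-1}\ge \tilde{D}_k-\tilde{D}_{k-1}$ since the $\tilde D_i$ are sorted, and the per-peer uplink residual can be shown to remain at least $(N-k)\,\Delta R_k/(N-k)$ because the uplink depleted at each prior step is exactly $(N-\ell)$ times the broadcast rate spent at step $\ell$, which the sum-bound $\sum_i \tilde{D}_i \le U_s+\sum_i U_i$ (guaranteed by (\ref{eq:ratesumforbound})) ensures is absorbed. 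Combining these verifications shows the precondition holds at every step, so Algorithm \ref{alg:modifiedMutualcast} executes correctly and the cumulative allocation satisfies $r_{s\to i}\ge r_{i\to j}$ for all $i\ne j$.

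Finally, I would argue that the routing delivers useful information at rate $r^*_i$ to every peer $i$. Because the allocation satisfies $r_{s\to i}\ge r_{i\to j}$, each peer relays only chunks it has itself just received from the source, so the $\frac{\tilde{D}_k-\tilde{D}_{k-1}}{\tilde{D}_i}$-fraction partitioning of chunks in Algorithm \ref{alg:extendedMutualcast2} produces disjoint chunk sets on the different depth-two trees; no peer ever receives the same chunk twice. Since the total download rate into peer $i$ equals $\sum_j r_{j\to i}=r^*_i$ and every received bit is novel, peer $i$ completes in time $B/r^*_i$, giving the target sum $\sum_{i=1}^N B/r^*_i$ and matching the lower bound. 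The achievability argument thus closes the gap and proves the theorem. Invoking Lemma \ref{Lemma:NetworkCoding} on the time-expanded graph (as in the proof of Lemma \ref{Lemma:StaticOptDelay}) provides an alternative, purely information-theoretic, way of dispensing with the final non-redundancy argument in case the routing bookkeeping becomes cumbersome.
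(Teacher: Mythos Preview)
Your proposal is correct and follows essentially the same approach as the paper: the same converse via the relaxed optimization (\ref{eq:minproblem}), the same case split on $R$ relative to the $\tilde{D}_i$, and the same inductive feasibility check for each invocation of Algorithm~\ref{alg:modifiedMutualcast} by tracking uplink/downlink resource usage. The only cosmetic difference is that the paper frames the induction in terms of \emph{used} resources $U_p^{(i)},U_s^{(i)}$ (after a worst-case reduction setting $D_i=R$ for $i>j$ and $U_s=R$) rather than residuals, and in part (b) the paper simply bounds the max-flow to each peer by $\sum_{j\neq i}\min(r_{s\to j},r_{j\to i})+r_{s\to i}=r_i^*$ and implicitly invokes Lemma~\ref{Lemma:StaticOptDelay}, which is exactly the alternative you mention at the end rather than the explicit routing non-redundancy bookkeeping you place first.
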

\begin{proof}
(\textbf{Converse}) From
(\ref{eq:lowerboundMSDstart}-\ref{eq:lowerboundMSDend}),
$\sum_{i=1}^{N}\frac{B}{r^*_i}$ is a lower bound on the minimum sum
download time. Hence, any sum download time less than
$\sum_{i=1}^{N}\frac{B}{r^*_i}$
is not achievable. \\
(\textbf{Achievability}) It is sufficient to show that (a) Extended
Mutualcast is applicable to any P2P network, and (b) Extended
Mutualcast provides a static scenario in which the flow rate from
the source node to peer $i$ is $r^*_i$ of
(\ref{eq:lowerboundMSDstart}-\ref{eq:lowerboundMSDend}).\\
(\textbf{To Show (a)}) It is sufficient to show that in Algorithm
\ref{alg:extendedMutualcast}, the rate for each applied Algorithms
\ref{alg:modifiedMutualcast} is attainable. In other words, each
rate for the applied network is less than or equal to the minimum of
the source node's uplink capacity
and the total uplink resource over all of the peers. \\
\begin{itemize}
\item If $R \leq \tilde{D}_1$, then $R \leq U_s$ and $R \leq \frac{U_s
+ \sum_{i=1}^{N}U_i}{N}$. Hence, the rate $R$ is attainable for
Algorithm \ref{alg:modifiedMutualcast} in Line 4, Algorithm
\ref{alg:extendedMutualcast}.
\item If $\tilde{D}_j < R \leq \tilde{D}_{j+1}$, consider the worst
case of $D_i = R$ for $i=j+1,\cdots, N$ and $U_s = R$. In this case,
we have
\begin{align}
r^*_i & = \tilde{D}_i, \quad i=1,\cdots,N; \\
U_i & \leq D_i = \tilde{D}_i, \quad i=1,\cdots, j; \\
\tilde{D}_1 \leq \cdots \leq \tilde{D}_j & < R = \tilde{D}_{j+1}
= \cdots = \tilde{D}_N = U_s; \label{eq:3}\\
\sum_{i=1}^{N} \tilde{D}_i & = U_s + \sum_{i=1}^{N}U_i. \label{eq:2}
\end{align}
Denote $U^{(i)}_p$ as the total amount of the peers' uplink resource
used after Step $i$, and $U^{(i)}_s$ as the total amount of the
source node's uplink resource used after Step $i$. For Step 1,
$\tilde{D}_1 \leq U_s$ and $\tilde{D}_1 \leq \frac{\sum_{i=1}^{N}
\tilde{D}_i}{N} = \frac{U_s + \sum_{i=1}^{N}U_i}{N}$. Hence,
Algorithm \ref{alg:modifiedMutualcast} in Step 1 is feasible.
Suppose Algorithm \ref{alg:modifiedMutualcast} is feasible for Step
1 to Step $n$ ($1\leq n \leq j$). Then $U^{(i)}_s = \tilde{D}_i$ and
$U^{(i)}_p = \sum_{k=1}^{i} \tilde{D}_k + (N-i-1)\tilde{D}_i$.
Hence,
\begin{equation}
U^{(i)}_p  \geq \sum_{k=1}^{i} \tilde{D}_k  = \sum_{k=1}^{i} D_k
\geq \sum_{k=1}^{i} U_k,
\end{equation}
which indicates that Algorithm \ref{alg:modifiedMutualcast} for Step
1 to Step $i$ fully deploys the uplink resources of peers
$1,\cdots,i$.

Now consider Algorithm \ref{alg:modifiedMutualcast} for Step $n+1$,
the supporting rate is $\tilde{D}_{n+1} - \tilde{D}_n$. The source
node's uplink is $R-\tilde{D}_n$. The total uplink resource is
\begin{align}
& U_s + \sum_{i=1}^{N}U_i - (U^{(n)}_s + U^{(n)}_p) \label{eq:AvailableUplink}\\
 = & \sum_{i=1}^{N}\tilde{D}_i - (\sum_{k=1}^{n}\tilde{D}_k +
 (N-n)\tilde{D}_n) \label{eq:1}\\
 = & \sum_{k=n+1}^{N} (\tilde{D}_i - \tilde{D}_n) \\
 \geq & (N-n)(\tilde{D}_{n+1} - \tilde{D}_n), \label{eq:4}
\end{align}
where (\ref{eq:1}) follows from (\ref{eq:2}), and (\ref{eq:4})
follows from (\ref{eq:3}). Hence, the rate $\tilde{D}_{n+1} -
\tilde{D}_n$ is less than or equal to the total available uplink
resource (\ref{eq:AvailableUplink}) divided by the number of peers,
$N-n$. We also can see that $\tilde{D}_{n+1} - \tilde{D}_n$ is less
than or equal to the available source node's uplink bandwidth,
$R-\tilde{D}_n$. Therefore, Algorithm \ref{alg:modifiedMutualcast}
for Step $n+1$ is also feasible. By induction, Algorithm
\ref{alg:modifiedMutualcast} is feasible for every step.
\item If $R \geq \tilde{D}_{N}$, then
\begin{align}
D_1 \leq \cdots & \leq D_N  \leq R \leq U_s; \\
r^*_i & = D_i = \tilde{D}_i, \quad i=1,\cdots,N; \\
\sum_{i=1}^{N}r^*_i & = \sum_{i=1}^{N}D_i \leq \sum_{i=1}^{N}U_i
+U_s.
\end{align}
Consider the worst case of $\sum_{i=1}^{N}D_i = \sum_{i=1}^{N}U_i
+U_s$. For this worst case, Algorithm \ref{alg:modifiedMutualcast}
in Line 14 is feasible following an argument similar to that for the
case of $\tilde{D}_j < R \leq \tilde{D}_{j+1}$.
\end{itemize}
Therefore, Extended Mutualcast in Algorithm
\ref{alg:extendedMutualcast} is applicable to any P2P network.\\
(\textbf{To Show (b)}) From Algorithms \ref{alg:modifiedMutualcast}
and \ref{alg:extendedMutualcast}, Extended Mutualcast constructs a
static scenario with $r_{s \rightarrow i} \triangleq r_{i
\rightarrow i} \geq r_{i \rightarrow j}$ for $i,j=1,\cdots,N$, and
$\sum_{j=1}^{N}r_{j \rightarrow i}  \geq \min (R , \tilde{D}_i) =
r^*_i$. Hence, the maximum flow from the source node to peer $i$ is
larger than or equal to
\begin{align}
& \sum_{j=1,j\neq i}^{N} \min (r_{s \rightarrow j}, r_{j \rightarrow
i}) + r_{s \rightarrow i} \\
= & \sum_{j=1,j\neq i}^{N} r_{j \rightarrow i} + r_{s \rightarrow
i} \\
\geq & r^*_i.
\end{align}
Therefore, Extended Mutualcast provides a static scenario in which
the flow rate from the source node to peer $i$ is $r^*_i$ of
(\ref{eq:lowerboundMSDstart}-\ref{eq:lowerboundMSDend}).
\end{proof}

Theorem \ref{Theorem:SumDelay} showed that Extended Mutualcast
minimizes the sum download time for any static P2P network.  When
the total uplink bandwidth resource is sufficiently abundant,
Extended Mutualcast also minimizes the weighted sum download time
for any set of weights because all peers are downloading at their
limit of $\tilde{D}_i$.  Corollary \ref{Corollary:EnoughUplink}
formally states and proves this fact.
\newtheorem{EnoughUplink}{Corollary}
\begin{EnoughUplink}\label{Corollary:EnoughUplink}
Consider multicasting a file with size $B$ from a source node $s$ to
peers $\{1,\cdots,N\}$ in a P2P network with constraints on peer
uplink bandwidth and peer downlink bandwidth. If $U_s +
\sum_{i=1}^{N}U_i \geq \sum_{i=1}^{N}\tilde{D}_i$, the set of the
flow rates $r_i = \tilde{D}_i$ ($i=1,\cdots,N$) is attainable.
Hence, the minimum weighted sum download time for the static
scenario is $\sum_{i=1}^{N}W_i\frac{B}{\tilde{D}_i}$ for any given
weights $W_i$.
\end{EnoughUplink}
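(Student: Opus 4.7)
The plan is to derive the corollary as a direct consequence of Theorem \ref{Theorem:SumDelay}, specifically the case $R \geq \tilde{D}_N$, combined with the per-peer lower bound on download time established in Section \ref{sec:BoundingWSD}.

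First, I would establish the matching lower bound by noting that inequalities (\ref{eq:relaxconstraint1}) and (\ref{eq:relaxconstraint3}) give $r_i \leq \min(D_i, U_s) = \tilde{D}_i$ for every peer $i$, regardless of the allocation. Hence $t_i \geq B/\tilde{D}_i$, and consequently, for any nonnegative weights $W_i$,
\begin{equation}
\sum_{i=1}^{N} W_i t_i \;\geq\; \sum_{i=1}^{N} W_i \frac{B}{\tilde{D}_i}.
\end{equation}
This is a universal per-peer bound that does not depend on the uplink-sum constraint and holds for arbitrary $W_i$.

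Second, I would prove achievability by invoking the $R \geq \tilde{D}_N$ branch of Extended Mutualcast. The hypothesis $U_s + \sum_i U_i \geq \sum_i \tilde{D}_i$, together with the definition of $R$ in (\ref{eq:ratesumforbound}), forces $\sum_i r_i^* = \sum_i \tilde{D}_i$, and inspection of (\ref{eq:lowerboundsolution}) then yields $r_i^* = \tilde{D}_i$ for all $i$. This is precisely the regime handled in the third branch of Algorithm \ref{alg:extendedMutualcast}, and the achievability portion of the proof of Theorem \ref{Theorem:SumDelay} already demonstrated that a static scenario realizing the flow rates $r_i = \tilde{D}_i$ simultaneously for all $i$ exists in this regime.

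Third, since each peer $i$ achieves its individually maximal flow rate $\tilde{D}_i$, the download time $t_i = B/\tilde{D}_i$ is achieved simultaneously for every peer. This collection of download times is Pareto-minimal in the strongest sense: no peer can do better under any allocation. Therefore every weighted sum $\sum_i W_i t_i$ is simultaneously minimized, independently of the weights, matching the lower bound above with equality and yielding $\sum_i W_i B/\tilde{D}_i$. The only conceptual subtlety is the interchange of ``minimize each term'' with ``minimize the weighted sum'', but this is justified precisely because the bound $t_i \geq B/\tilde{D}_i$ is a per-coordinate lower bound that is attained simultaneously; no trade-off across peers arises. No separate argument is needed for arbitrary $W_i$.
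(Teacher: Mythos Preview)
Your proposal is correct and mirrors the paper's own proof: the paper also establishes the converse via the per-peer cut bound $r_i \leq \min(D_i, U_s) = \tilde{D}_i$ (phrased there as a direct Max-Flow Min-Cut application) and establishes achievability by observing that the hypothesis forces $r_i^* = \tilde{D}_i$ and invoking Theorem~\ref{Theorem:SumDelay} (Extended Mutualcast). Your additional remark identifying this with the $R \geq \tilde{D}_N$ branch of Algorithm~\ref{alg:extendedMutualcast} is accurate and slightly more explicit than the paper, but the argument is substantively the same.
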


\begin{proof}
(\textbf{Achievability}) Note that when $U_s + \sum_{i=1}^{N}U_i
\geq \sum_{i=1}^{N}\tilde{D}_i$, $r^*_i$ of
(\ref{eq:lowerboundMSDstart}-\ref{eq:lowerboundMSDend}) is equal to
$\tilde{D}_i$. By Theorem \ref{Theorem:SumDelay}, Extended
Mutualcast can achieve the download rates $r^*_i =
\tilde{D}_i$.\\
(\textbf{Converse}) By Max-Flow Min-Cut Theorem, the maximum flow
from source node to peer $i$ is limited by $\tilde{D}_i = \min (D_i,
U_s)$. Hence, any weighted sum downloading time less than
$\sum_{i=1}^{N}W_i\frac{B}{\tilde{D}_i}$ is not achievable.
\end{proof}

\section{A Depth-2 Approach for the Minimizing Weighted Sum Download Time}
\label{sec:depth2}

Section \ref{sec:extended_mutualcast} provided a complete solution
(Extended Mutualcast) for achieving the minimum sum download time
with constraints on both peer uplink bandwidth and peer downlink
bandwidth.  That section concluded by showing that if the total
uplink resource is sufficiently abundant, Extended Mutualcast
minimizes WSDT for any set of weights.  This section attacks the
minimization of WSDT more broadly.

Mutualcast and Extended Mutualcast construct only two types of trees
to distribute content. The first type is a depth-1 tree as shown in
Fig.~\ref{fig:typetree}(a). The source node $s$ broadcasts content
to all peers directly with rate $r^{(1)}_{s \rightarrow i}$,
$i=1,\cdots,N$.    The second type is a depth-2 tree as shown in
Fig.~\ref{fig:typetree}(b). The source node distributes content to
peer $i$ with rate $r^{(2)}_{s \rightarrow i}$, and then peer $i$
relays this content to all other peers.

\begin{figure}
  \centering
  \includegraphics[width=0.6\textwidth]{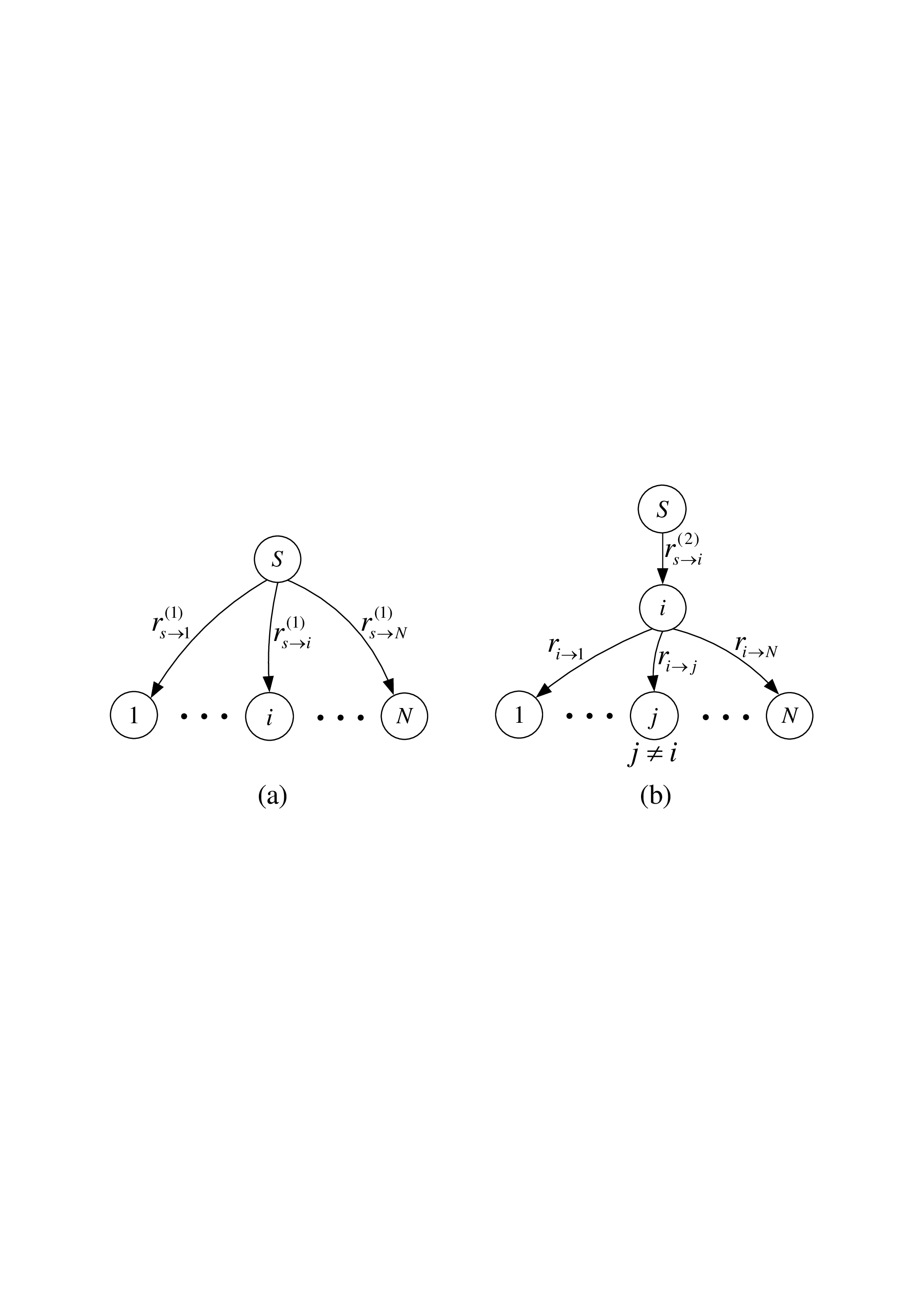}
  \caption{The two tree sturctures used by Mutualcast and Extended Mutualcast: (a) Depth-1 tree; (b) Depth-2 tree.}\label{fig:typetree}
\end{figure}

In Mutualcast, the rates $r^{(1)}_{s \rightarrow i}$ are constrained
to be equal for all $i$.  Also, for a fixed $i$, $r^{(2)}_{s
\rightarrow i}=r_{i \rightarrow j}$ for all $j\in \{1,\cdots,
N\};~~j\neq i$.  These constraints on the network resource
allocation simplify the mechanism design and allow a simple
routing-based scheme.  These two constraints together ensure that
each peer downloads content at the same rate.  However, to optimize
WSDT peers surely need to download content and different rates.

In Section \ref{sec:extended_mutualcast} we saw that peers needed to
download content at different rates to minimize the sum download
time with peer downlink bandwidth constraints.  The Extended
Mutualcast algorithm provided a way to serve the peers at different
rates corresponding to their download bandwidth constraints so as to
minimize the sum download time.  However, Extended Mutualcast
required successive applications of Mutualcast which led to a
complicated routing protocol.

In order to serve peers at different rates to minimize WSDT and
still maintain a simple mechanism design, we apply the technique of
rateless coding at the source node. A rateless code is an erasure
correcting code.  It is rateless in the sense that the number of
encoded packets that can be generated from the source message is
potentially limitless \cite{Fountain05}.  Suppose the original file
size is $B$ packets, once the receiver has received any $B'$
packets, where $B'$ is just slightly greater than $B$, the whole
file can be recovered.

Fountain codes \cite{Fountain05}, LT codes \cite{LT02}, and raptor
codes \cite{Raptor03} are rateless erasure codes. LT codes have
linear encoding complexity and sub-linear decoding complexity.
Raptor codes have linear encoding and decoding complexities. The
percentage of the overhead packets goes to zero as $B$ goes to
infinity. In practice, the overhead is about 5\% for LT codes with
file size $B \simeq 10000$ \cite{Fountain05}.  This sub-section
focuses on applying rateless erasure codes for P2P file transfer
instead of designing rateless erasure codes. Hence, we assume the
overhead of the applied rateless erasure code is zero for
simplicity.  We note that if redundancy does not need to be
limitless, there are solutions that provide zero overhead
\cite{CourtadeITA10}.

\subsection{The Rateless-Coding-Based Scheme}

We propose a rateless-coding-based scheme that constructs the two
types of trees in Fig.~\ref{fig:typetree} to distribute the content
as did Mutualcast and Extended Mutualcast. The source node first
partitions the whole file into $B$ chunks and applies a rateless
erasure code to these $B$ chunks producing a potentially limitless
number of chunks.

For the depth-1 tree, the source node broadcasts different
rateless-coded chunks directly to each peer. For the depth-2 trees,
The source node sends different rateless-coded chunks to each peer,
and then that peer relays some or all of those chunks to other some
or all of the other peers.   A key point is that every chunk
transmitted by the source is different from every other chunk
transmitted by the source.  This condition guarantees that all
chunks received by a peer are useful (because they are not a
repetition of a previously received chunk). Hence, a peer can decode
the whole file as long as it receives $B$ coded chunks.

The rateless-coding-based scheme allows peers to download content at
different rates with a simpler mechanism than the routing-based
approach of Extended Mutualcast.  Peers don't have to receive
exactly the same chunks to decode the whole file.  Hence, the two
types of tree structures can be combined as one tree structure with
depth 2, but without the constraint that the rate from the peer to
its neighbors has to equal the rate from the source to the peer.

The source node sends coded chunks to peer $i$ with rate $r_{s
\rightarrow i} = r^{(1)}_{s \rightarrow i} + r^{(2)}_{s \rightarrow
i}$, and peer $i$ relays some of them to peer $j$ ($j \neq i$) with
rate $r_{i \rightarrow j} \leq  r_{s \rightarrow i}$.   Note that
the values of $r_{i \rightarrow j}$ do not even need to be the same
for a fixed value of $i$ and different values of $j$.

Another benefit of applying a rateless coding approach is that it is
robust to packet loss in the Internet if we allow some extra rate
for each user.

Assuming rateless coding at the source node and constraining the P2P
network to include only depth-2 trees as discussed above, the
network resource allocation that minimizes WSDT can be obtained by
solving the following convex optimization problem.
\begin{equation} \label{eq:ratelessminproblem}
  \begin{array}{cc}
    \min & \sum_{i=1}^{N} W_i \frac{B}{r_i} \\
    \textrm{subject to} & 0 \leq r_{i \rightarrow j} \leq r_{i \rightarrow i}, \forall i,j=1,\cdots,N, \\
     &  \sum_{i=1}^{N} r_{i \rightarrow i} \leq U_s,\\
     & \sum_{j=1,j\neq i}^{N}r_{i \rightarrow j} \leq U_i, \forall
     i=1,\cdots,N,\\
     & r_i = \sum_{j=1}^{N}r_{j \rightarrow i}
     \leq D_i,\forall i=1,\cdots,N,\\
  \end{array}
\end{equation}
where $r_{i \rightarrow i} \triangleq r_{s \rightarrow i}$. The
complexity for the interior point method to solve this convex
optimization is $O((N^2)^{3.5})$ \cite{BookConvexOpt}.

For the case of $W_i=1, D_i = \infty$, the optimal resource
allocation is, of course, the same as that of Mutualcast.

For the case of $W_i =1$ and finite values of $D_i$, Algorithm
\ref{alg:extendedMutualcast} provides an optimal network resource
allocation that certainly also solves (\ref{eq:ratelessminproblem}).
A key point is that the routing of Algorithm
\ref{alg:extendedMutualcast2} becomes unnecessary if the source
employs rateless coding.  Peers need only relay the appropriate
number of chunks to the appropriate neighbors without worrying about
{\em which} chunks are relayed.

For other cases, we provide a network resource allocation that we
have not proven to be optimal.  We will see in Section
\ref{sec:staticsimulation} that its performance achieves the lower
bound (\ref{eq:lowerbound}) across a wide range of
parameterizations.

\subsection{Resource Allocation for Networks with $D_i=\infty$}

Consider a P2P network in which peer uplink bandwidth is constrained
but $D_i = \infty$ for $i=1,\cdots, N$. If $\sum_{i=1}^{N}U_i \geq
(N-1)U_s$, then the resource allocation of $r_{i \rightarrow j} =
\frac{U_s U_i}{\sum_{i=1}^{N}U_i}$ achieves the minimum WSDT with
flow rates $r_i = U_s$ for all $i=1,\cdots,N$.  (This is the case
discussed at the end of Section \ref{sec:extended_mutualcast}.)
Otherwise, consider the following water-filling-type solution:
\begin{equation}
\tilde{r}_i = \Bigg \{ \begin{array}{cc}
                 \sqrt{W_i}\cdot R, & \textrm{if } \sqrt{W_i}\cdot R < U_s, \\
                 U_s & \textrm{if } \sqrt{W_i}\cdot R \geq U_s,
               \end{array}
              \label{eq:waterfilling2}
\end{equation}
where $R$ is chosen such that
\begin{equation}
\sum_{i=1}^{N}\tilde{r}_i = U_s + \sum_{i=1}^{N}U_i - \max_{k}
(\tilde{r}_k). \label{eq:ratelessWF}
\end{equation}

The potential suboptimality of this approach comes from the
subtraction of $\max_{k} (\tilde{r}_k)$ on the right side of
(\ref{eq:ratelessWF}) which does not appear in
(\ref{eq:ratesumforbound}). Note that when $\max (\tilde{r}_k) \ll
U_s + \sum_{i=1}^{N}U_i$ (this is true for large $N$), $\tilde{r}_i$
is close to $r^*_i$ corresponding to the lower bound
(\ref{eq:lowerboundsolution}).

We now show that the proposed suboptimal network resource allocation
ensures that the flow rate to peer $i$ is larger than or equal to
$\tilde{r}_i$ of (\ref{eq:waterfilling2}). Hence, the WSDT for the
proposed suboptimal resource allocation is very close to the lower
bound to the minimum WSDT for large networks.

First assign the rates for the depth-2 trees with
\begin{equation}
r^{(2)}_{s \rightarrow i} = c \frac{U_i \max
(\tilde{r}_k)}{\sum_{k=1}^{N}\tilde{r}_k -
\tilde{r}_i}\label{eq:rateassign1}\end{equation} and
\begin{equation}
r_{i \rightarrow j} = c \frac{U_i
\tilde{r}_j}{\sum_{k=1}^{N}\tilde{r}_k - \tilde{r}_i},
\label{eq:rateassign2}\end{equation} where $c$ is chosen to be the
largest possible value satisfying
\begin{align}
\sum_{i=1}^{N}r^{(2)}_{s \rightarrow i} & \leq U_s \label{eq:c_constraint1}\\
\sum_{j=1,j\neq i}^{N}r_{i \rightarrow j} & \leq U_i.
\label{eq:c_cosntraint2}
\end{align}
Plugging (\ref{eq:rateassign1}) (\ref{eq:rateassign2}) into
(\ref{eq:c_constraint1}) (\ref{eq:c_cosntraint2}), and obtain
\begin{equation}\label{eq:c}
c = \min (1, \frac{U_s}{\max(\tilde{r}_k)\alpha} ),
\end{equation}
where $\alpha = \sum_{i=1}^{N}\frac{U_i}{\sum_{k=1}^{N}\tilde{r}_k -
\tilde{r}_i}$.

If $c = \frac{U_s }{ \alpha\max(\tilde{r}_k)}$, then the depth-2
trees have already fully deployed the source node's uplink. The rate
assignment for depth-2 trees is the network resource allocation for
the rateless-coding-based scheme.

If $c = 1$, then the depth-2 trees have fully deployed all peers'
uplinks, but not the source node's uplink. Hence, we can further
deploy the rest of the source node's uplink to construct the depth-1
tree. After constructing the depth-2 trees, the flow rate to peer
$i$ is
\[ \beta_i \triangleq
r^{(2)}_{s \rightarrow i} + \sum_{j=1,j\neq i}^{N}r_{j \rightarrow
i} = \alpha \tilde{r}_i + \frac{(\max_k (\tilde{r}_k) -
\tilde{r}_i)U_i}{ \sum_{k=1}^{N}\tilde{r}_k - \tilde{r}_i}.\]The
rest of the source node's uplink is
\[U_s - \sum_{i=1}^{N}r^{(2)}_{s \rightarrow i} = U_s - \alpha \max (\tilde{r}_k).\]The optimal depth-1 tree
can be obtained by the convex optimization
\begin{equation} \label{eq:type1treeminproblem}
  \begin{array}{cc}
    \min & \sum_{i=1}^{N} W_i \frac{B}{r_i} \\
    \textrm{subject to} & r_i = \beta_i + r^{(1)}_{s \rightarrow i},\\
     & r^{(1)}_{s \rightarrow i} \geq 0, \forall i=1\cdots, N,\\
     & \sum_{i=1}^{N} r^{(1)}_{s \rightarrow i} \leq U_s - \alpha \max (\tilde{r}_k).\\
  \end{array}
\end{equation}
The optimal solution to the problem (\ref{eq:type1treeminproblem})
is
\begin{equation}
r_i = \Bigg \{ \begin{array}{cc}
                 \sqrt{W_i}\cdot R, & \textrm{if } \sqrt{W_i}\cdot R \geq \beta_i, \\
                 \beta_i & \textrm{if } \sqrt{W_i}\cdot R <
                 \beta_i,
               \end{array}
              \label{eq:waterfilling3}
\end{equation}
and
\begin{equation}
r^{(1)}_{s \rightarrow i} = \Bigg \{ \begin{array}{cc}
                 \sqrt{W_i}\cdot R - \beta_i, & \textrm{if } \sqrt{W_i}\cdot R \geq \beta_i, \\
                 0 & \textrm{if } \sqrt{W_i}\cdot R < \beta_i,
               \end{array}
              \label{eq:waterfilling4}
\end{equation}
where $R$ is chosen such that $\sum_{i=1}^{N} r^{(1)}_{s \rightarrow
i} = U_s - \alpha \max (\tilde{r}_k)$ (also $\sum_{i=1}^{N}r_i =U_s
+ \sum_{i=1}^{N}U_i$).

The complexity of calculating this suboptimal network resource
allocation is $O(N^2)$. Note that when $W_i = 1$ for all
$i=1,\cdots, N$, this suboptimal network resource allocation is the
same as that of Mutualcast, and hence, this network resource
allocation is optimal for this case. For general weight settings,
this network resource allocation guarantees that the flow rate to
peer $i$ is larger than or equal to $\tilde{r}_i$, which is stated
in the following theorem.
\newtheorem{almostoptimal}[StaticPolynomial]{Theorem}
\begin{almostoptimal}\label{Theorem:almostoptimal}
For P2P networks with peer uplink constraints but no peer downlink
constraints (i.e. $D_i= \infty$), the network resource allocation
determined by (\ref{eq:rateassign1}) (\ref{eq:rateassign2})
(\ref{eq:c}) (\ref{eq:waterfilling3}) and (\ref{eq:waterfilling4})
ensures that the WSDT $\sum_{i=1}^{N}W_iB/r_i$ is less than or equal
to the WSDT associated to (\ref{eq:waterfilling2}), i.e.,
$\sum_{i=1}^{N}W_iB/\tilde{r}_i$.
\end{almostoptimal}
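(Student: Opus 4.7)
The inequality $\sum_i W_i B/r_i \le \sum_i W_i B/\tilde{r}_i$ reduces to the pointwise dominance $r_i \ge \tilde{r}_i$ for every peer $i$, because each summand of the WSDT is monotonically decreasing in the corresponding download rate. My plan is to verify this dominance by splitting on the value of $c$ defined in (\ref{eq:c}).

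Case 1: $c = U_s/(\alpha \max_k \tilde{r}_k) < 1$, so only depth-2 trees are used. Combining (\ref{eq:rateassign1}) and (\ref{eq:rateassign2}) and using the identity
\[
\sum_{j \neq i} \frac{U_j}{\sum_k \tilde{r}_k - \tilde{r}_j} = \alpha - \frac{U_i}{\sum_k \tilde{r}_k - \tilde{r}_i},
\]
a short algebraic simplification gives
\[
r_i \;=\; \frac{U_s\,\tilde{r}_i}{\max_k \tilde{r}_k} \;+\; \frac{U_s\,U_i\,(\max_k \tilde{r}_k - \tilde{r}_i)}{\alpha\,\max_k \tilde{r}_k\,(\sum_k \tilde{r}_k - \tilde{r}_i)}.
\]
The second summand is non-negative, and (\ref{eq:waterfilling2}) guarantees $\max_k \tilde{r}_k \le U_s$, so the first summand alone is $\ge \tilde{r}_i$. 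This disposes of Case 1.

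Case 2: $c = 1$, so the depth-2 trees load peer $i$ with $\beta_i$ and consume $\alpha \max_k \tilde{r}_k \le U_s$ of the source's uplink, and the residual $U_s - \alpha \max_k \tilde{r}_k$ is distributed by the water-filling (\ref{eq:waterfilling3})--(\ref{eq:waterfilling4}), giving $r_i = \max(\beta_i,\sqrt{W_i}\,R)$. A bookkeeping check yields $\sum_i r_i = U_s + \sum_i U_i$, which exceeds $\sum_i \tilde{r}_i$ by exactly $\max_k \tilde{r}_k$. The easy sub-case is $\alpha \ge 1$: the definition of $\beta_i$ gives $\beta_i \ge \alpha \tilde{r}_i \ge \tilde{r}_i$, so $r_i \ge \beta_i \ge \tilde{r}_i$ at once. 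The delicate sub-case is $\alpha < 1$, which I intend to handle by a KKT comparison of two weighted water-fills: $r_i$ is the minimizer of $\sum_i W_i/r_i$ subject to the total $U_s + \sum_i U_i$ and the floor $r_i \ge \beta_i$, while $\tilde{r}_i$ is the minimizer of the same objective under the strictly smaller total $U_s + \sum_i U_i - \max_k \tilde{r}_k$ and the ceiling $r_i \le U_s$. Partitioning peers into those with a binding floor $r_i = \beta_i$, those sitting at the water level $r_i = \sqrt{W_i}R$, and those with the ceiling $\tilde{r}_i = U_s$ binding, one verifies the pointwise inequality inside each class.

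The principal obstacle is precisely Case 2 with $\alpha < 1$, because the two water-fills have different binding constraints and a bare sum-budget comparison does not yield pointwise dominance. The critical observation that should unlock this case is that at the peer $i^\star$ attaining $\max_k \tilde{r}_k$ the floor collapses to $\beta_{i^\star} = \alpha\,\tilde{r}_{i^\star}$, so the required lift to reach $\tilde{r}_{i^\star}$ is $(1-\alpha)\,\tilde{r}_{i^\star} = (1-\alpha)\,\max_k \tilde{r}_k$; the residual source budget $U_s - \alpha\,\max_k \tilde{r}_k$ exceeds this amount since $\max_k \tilde{r}_k \le U_s$. Propagating this through the water-filling profile, together with the fact that peers are served in decreasing order of $\sqrt{W_i}$ in both allocations, delivers $r_i \ge \tilde{r}_i$ for every $i$, completing the proof.
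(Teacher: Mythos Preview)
Your Case 1 is fine and essentially matches the paper. The issue is Case 2. Your opening claim that the WSDT inequality ``reduces to'' pointwise dominance $r_i\ge\tilde r_i$ is only a sufficient condition, and for the actual water-filling allocation (\ref{eq:waterfilling3})--(\ref{eq:waterfilling4}) you never establish it when $\alpha<1$. The sub-case $\alpha\ge1$ is clean, but for $\alpha<1$ your ``KKT comparison of two weighted water-fills'' is a plan, not a proof: the two problems have different active constraint patterns (floors $r_i\ge\beta_i$ versus ceilings $\tilde r_i\le U_s$), and a larger total budget does not by itself imply coordinate-wise dominance of the optimizers. The observation about $\beta_{i^\star}=\alpha\,\tilde r_{i^\star}$ is correct but does not ``propagate through the water-filling profile'' without a real argument; you would need to control the water level $R$ against $\tilde R$ simultaneously with the floor-binding set, and you have not done this.

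The paper avoids this difficulty entirely by \emph{not} proving pointwise dominance for the optimal depth-1 allocation. Instead, it exhibits a particular \emph{feasible} (generally suboptimal) depth-1 allocation,
\[
r^{(1)}_{s\rightarrow i}=(U_s-\alpha\max_k\tilde r_k)\,\frac{\tilde r_i}{\sum_k\tilde r_k},
\]
for which the total rate is $r_i=\gamma\,\tilde r_i+\dfrac{(\max_k\tilde r_k-\tilde r_i)U_i}{\sum_k\tilde r_k-\tilde r_i}$ with $\gamma=\alpha+(U_s-\alpha\max_k\tilde r_k)/\sum_k\tilde r_k$. Summing $r_i$ over $i$, using $\sum_i r_i=U_s+\sum_i U_i$ and the elementary bound $(b-d)/(a-d)\le b/a$, one gets a short chain of inequalities forcing $\gamma\ge1$, hence $r_i\ge\tilde r_i$ pointwise for \emph{this} feasible allocation. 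Its WSDT is therefore at most $\sum_i W_iB/\tilde r_i$, and since (\ref{eq:waterfilling3})--(\ref{eq:waterfilling4}) is the minimizer of (\ref{eq:type1treeminproblem}), its WSDT is no larger. This two-step comparison (feasible proportional split, then optimality) replaces your delicate $\alpha<1$ analysis with a three-line calculation and handles both sub-cases of $c=1$ uniformly.
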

\begin{proof}
If $c=\frac{U_s}{ \alpha \max(\tilde{r}_k)}$, the flow rate to peer
$i$ is
\begin{align}
r_i & = r^{(2)}_{s \rightarrow i} + \sum_{j=1,j\neq i}^{N}r_{j
\rightarrow i}\\
& = c \alpha \tilde{r}_i + c \frac{(\max_k (\tilde{r}_k) -
\tilde{r}_i)U_i}{ \sum_{k=1}^{N}\tilde{r}_k - \tilde{r}_i} \\
& \geq c \alpha \tilde{r}_i\\
& = \frac{U_s}{\max(\tilde{r}_k)} \tilde{r}_i \\
& \geq \tilde{r}_i, \label{eq:5}
\end{align}
where (\ref{eq:5}) follows from $\tilde{r}_k \leq \tilde{D}_k \leq
U_s$. If $c=1$, a feasible solution to problem
(\ref{eq:type1treeminproblem}) is
\[r^{(1)}_{s \rightarrow i} = (U_s - \alpha \max(\tilde{r}_k)) \frac{\tilde{r}_i}{\sum_{k=1}^{N}\tilde{r}_k}.\]
For this feasible solution, the total flow rate to peer $i$ with the
depth-1 tree and the depth-2 trees is
\begin{align}
r_i & = \beta_i + (U_s - \alpha \max(\tilde{r}_k))
\frac{\tilde{r}_i}{\sum_{k=1}^{N}\tilde{r}_k} \\
& = (\alpha + (U_s - \alpha \max(\tilde{r}_k))
\frac{1}{\sum_{k=1}^{N}\tilde{r}_k}) \tilde{r}_i +
\frac{(\max(\tilde{r}_k) -
\tilde{r}_i)U_i}{\sum_{k=1}^{N}\tilde{r}_k - \tilde{r}_i}.
\end{align}
Denote $\gamma = \alpha + (U_s - \alpha \max(\tilde{r}_k))
\frac{1}{\sum_{k=1}^{N}\tilde{r}_k}$. We have
\begin{align}
U_s + \sum_{i=1}^{N}U_i & = \sum_{i=1}^{N} r_i \label{eq:6}\\
& = \gamma \sum_{i=1}^{N}\tilde{r}_i + \sum_{i=1}^{N}
\frac{(\max(\tilde{r}_k) -
\tilde{r}_i)U_i}{\sum_{k=1}^{N}\tilde{r}_k - \tilde{r}_i} \\
& \leq \gamma \sum_{i=1}^{N}\tilde{r}_i + \sum_{i=1}^{N}
\frac{\max(\tilde{r}_k)
U_i}{\sum_{k=1}^{N}\tilde{r}_k} \label{eq:7}\\
& \leq \gamma \sum_{i=1}^{N}\tilde{r}_i +
\max(\tilde{r}_k) \label{eq:8}\\
& = \gamma (U_s + \sum_{i=1}^{N}U_i - \max(\tilde{r}_k)) +
\max(\tilde{r}_k).\label{eq:9}
\end{align}
Some of these steps are justified as follows:
\begin{itemize}
\item (\ref{eq:6}) follows from the fact that all uplink resource is
deployed;
\item (\ref{eq:7}) follows from the inequality
$\frac{b-d}{a-d} \leq \frac{b}{a}$ when $a \geq b \geq d \geq 0$;
\item(\ref{eq:8}) follows from $\sum_{k=1}^{N}\tilde{r}_k = U_s - \max{\tilde{r}_k} + \sum_{i=1}^{N}U_i \geq \sum_{i=1}^{N}U_i \geq U_i$.
\end{itemize}
Therefore, $\gamma \geq 1$. Hence,
\begin{equation}
r_i  = \gamma \tilde{r}_i + \frac{(\max(\tilde{r}_k) -
\tilde{r}_i)U_i}{\sum_{k=1}^{N}\tilde{r}_k - \tilde{r}_i} \geq
\tilde{r}_i,
\end{equation}
which indicates that this feasible solution $r_i$ to the problem
(\ref{eq:type1treeminproblem}) provides a WSDT less than or equal to
$\sum_{i=1}^{N}W_iB/\tilde{r}_i$. Hence, the network resource
allocation determined by (\ref{eq:rateassign1})
(\ref{eq:rateassign2}) (\ref{eq:c}) (\ref{eq:waterfilling3}) and
(\ref{eq:waterfilling4}) also provides a WSDT less than or equal to
$\sum_{i=1}^{N}W_iB/\tilde{r}_i$.
\end{proof}

\subsection{Resource Allocation with Peer Downlink Constraints}

Now we consider P2P networks with both peer uplink bandwidth
constraints and peer downlink bandwidth constraints. The idea of the
resource allocation for these P2P networks is the same as that for
P2P networks without downlink constraints. The details are provided
as follows:

If $U_s + \sum_{i=1}^{N}U_i \geq \sum_{i=1}^{N}\tilde{D}_i$, from
Corollary \ref{Corollary:EnoughUplink}, Algorithm
\ref{alg:extendedMutualcast} provides the optimal network resource
allocation.

If $U_s + \sum_{i=1}^{N}U_i < \sum_{i=1}^{N}\tilde{D}_i$,
consider a water-filling-type solution
\begin{equation}
\tilde{r}_i = \Bigg \{ \begin{array}{cc}
                 \sqrt{W_i}\cdot R, & \textrm{if } \sqrt{W_i}\cdot R < \tilde{D}_i, \\
                 \tilde{D}_i & \textrm{if } \sqrt{W_i}\cdot R \geq \tilde{D}_i,
               \end{array}
              \label{eq:waterfilling5}
\end{equation}
where $R$ is chosen such that $\sum_{i=1}^{N}\tilde{r}_i = U_s +
\sum_{i=1}^{N}U_i - \max(\tilde{r}_k)$.

First construct the depth-2 trees with rates in
(\ref{eq:rateassign1}) and (\ref{eq:rateassign2}), where $c$ is
still chosen to be the largest possible value. However, for general
P2P networks, the constraints on $c$ are not only
(\ref{eq:c_constraint1}) (\ref{eq:c_cosntraint2}), but also
\begin{equation}
\beta_i \triangleq r^{(2)}_{s \rightarrow i} + \sum_{j=1,j\neq i}^{N}r_{j \rightarrow i} \leq \tilde{D}_i. \label{eq:c_constraint3}\\
\end{equation}
After constructing the depth-2 trees, the flow rate to peer $i$ is
$\beta_i$. The used source node's uplink is $ c \alpha \max
(\tilde{r}_k)$. If $ c \alpha \max (\tilde{r}_k) < U_s$, we can
further use the rest of the source node's uplink to distribute
content through the depth-1 tree. The optimal resource allocation
for the depth-1 tree can be obtained by the convex optimization
\begin{equation} \label{eq:type1treeminproblem2}
  \begin{array}{cc}
    \min & \sum_{i=1}^{N} W_i \frac{B}{r_i} \\
    \textrm{subject to} & r_i = \beta_i + r^{(1)}_{s \rightarrow i},\\
     & r^{(1)}_{s \rightarrow i} \geq 0, \forall i=1\cdots, N,\\
     & r_i \leq \tilde{D}_i, \forall i=1\cdots, N,\\
     & \sum_{i=1}^{N} r^{(1)}_{s \rightarrow i} \leq U_s - c \alpha \max (\tilde{r}_k).\\
  \end{array}
\end{equation}
The optimal solution to the problem (\ref{eq:type1treeminproblem2})
is
\begin{equation}
r_i = \Bigg \{ \begin{array}{cc}
                 \sqrt{W_i}\cdot R, & \textrm{if } \beta_i \leq \sqrt{W_i}\cdot R \leq \tilde{D}_i, \\
                 \beta_i & \textrm{if } \sqrt{W_i}\cdot R <
                 \beta_i,\\
                 \tilde{D}_i, & \textrm{if } \sqrt{W_i}\cdot R >
                 \tilde{D}_i,
               \end{array}
              \label{eq:waterfilling6}
\end{equation}
and
\begin{equation}
\tilde{r}^{(1)}_{s \rightarrow i} = r_i - \beta_i,
\label{eq:waterfilling7}
\end{equation}
where $R$ is chosen such that
$$\sum_{i=1}^{N} r_i = U_s + c\sum_{i=1}^{N}U_i.$$
The complexity of calculating this resource allocation is $O(N^2)$.

\subsection{Routing-Based Depth-2 Scheme}
\label{sec:Rountingscheme}

So far, this section has provided a family of rateless-coding-based
schemes for P2P file-transfer applications. In this subsection, we
introduce a routing-based scheme. This routing-based scheme is a
further extension to Extended Mutualcast. This scheme also applies
the tree structures in Fig.~\ref{fig:typetree} to distribute
content. The constraints on the network resource allocation for this
scheme are
\begin{equation}
r^{(1)}_{s \rightarrow k_1} \geq \cdots \geq r^{(1)}_{s \rightarrow
k_N} \geq 0, \label{eq:r_constraint1}
\end{equation}
and
\begin{align}
& r^{(2)}_{s \rightarrow k_i} \geq r_{k_i \rightarrow k_1} \geq
\cdots
\geq r_{k_i \rightarrow k_{i-1}} \nonumber \\
& \geq r_{k_i \rightarrow k_{i+1}} \geq \cdots \geq r_{k_i
\rightarrow k_N}, \forall i=1,\cdots,N, \label{eq:r_constraint2}
\end{align}
where $(k_1,\cdots, k_N)$ is the order in which the peers finish
downloading. In the rest of this subsection, we assume the order is
$(1,\cdots,N)$ for simplicity. These constraints are stricter than
those of the rateless-coding-based scheme, and they are introduced
to simplify the routing scheme. In particular, given the order of
$(1,\cdots,N)$ in which peers finish downloading, the proposed
routing-based scheme ensures that at any time in the scheme, peer
$i$ has all packets received by peers $i+1,\cdots, N$ for all
$i=1,\cdots,N-1$. This condition can be achievable if the network
resource allocation satisfies (\ref{eq:r_constraint1}) and
(\ref{eq:r_constraint2}). For the routing-based scheme, when peer
$i$ finishes downloading, the scheme starts to only broadcast the
chunks which peer $i+1$ hasn't received, called interesting chunks.
With this condition, the interesting chunks are also new to peers
$i+2, \cdots,N$. The details of the routing-based scheme is given in
Algorithm \ref{alg:routing}.

\begin{algorithm}
\caption{Routing-Based Scheme} \label{alg:routing}
\begin{algorithmic}[1]
\STATE Given the order in which the peers finish downloading. Assume
the order is $(1,\cdots,N)$ for simplicity. \\
\STATE Given the network resource allocation $\{r_{i \rightarrow
j}$, $r^{(1)}_{s \rightarrow i}, r^{(2)}_{s \rightarrow
i}\}$ for $i,j =1,\cdots,N$, which satisfies the constraints  (\ref{eq:r_constraint1}) and (\ref{eq:r_constraint2}).\\
\STATE Partition the whole file into many chunks. \\
\FOR {Step $i=1$ to $N$} \STATE At the beginning of Step $i$, peer
$1,\cdots, i$
finish downloading. \\
\STATE In Step $i$, only broadcast the chunks which peer $i$ doesn't
have, called interesting chunks.
Note that all peers $i, \cdots , N$ don't contain the interesting chunks. \\
\STATE Distribute interesting chunks along the depth-1 tree and
the depth-2 trees according to the network resource allocation.\\
\STATE For the depth-1 tree , the set of chunks sent to peer $i$
contains the set of chunks sent to peer
$j$ for $i<j$. \\
\STATE For the depth-2 tree in Fig.~\ref{fig:typetree}(b), the set
of chunks from peer $i$ to peer $k$ contains the set of chunks from
peer $i$ to peer $j$ for $k>j$. Peer $i$ only keeps the set of
chunks sent to peer $i-1$ for $i=2,\cdots,N$.\\
\STATE The above two chunk selection constraints guarantee that
peers $i, \cdots , N$ don't contain the interesting chunks in Step
$i$ for $i=1,\cdots, N$. \STATE Until peer $i$ receives all
interesting chunks and finishes downloading.\\ \ENDFOR
\end{algorithmic}
\end{algorithm}
The optimal network resource allocation for this routing-based
scheme can be obtained by the convex optimization of minimizing
$\sum_{i=1}^{N}W_iB/r_i$ subject to the constraints
(\ref{eq:r_constraint1}) (\ref{eq:r_constraint2}), nodes' uplink and
downlink constraints, and the flow rate expression
$$r_i = \sum_{j=1,j\neq i}^{N}r_{j \rightarrow
i} + r^{(1)}_{s \rightarrow i} + r_{i \rightarrow i-1}, \quad
i=1,\cdots,N,$$ where $r_{1 \rightarrow 0} = r^{(2)}_{s \rightarrow
1}$. The complexity for the interior point method to solve the
problem is $O((N^2)^{3.5})$. For the case of $W_i=1$ and $D_i =
\infty$, the optimal network resource allocation is the same as that
of Mutualcast. For the case of $W_i=1$ or $U_s + \sum_{i=1}^{N}U_i
\geq \sum_{i=1}^{N}\tilde{D}_i$, by Theorem \ref{Theorem:SumDelay}
and Corollary \ref{Corollary:EnoughUplink}, Algorithm
\ref{alg:extendedMutualcast} provides the optimal network resource
allocation.

For general cases with $U_s + \sum_{i=1}^{N}U_i <
\sum_{i=1}^{N}\tilde{D}_i$, we provide a suboptimal network resource
allocation for this routing-based scheme. Consider the
water-filling-type solution in (\ref{eq:waterfilling5}). Without
loss of generality, assume that $\tilde{r}_1 \geq \cdots \geq
\tilde{r}_N$, and give the ordering $(1,\cdots,N)$ in which the
peers finish downloading. First construct the depth-2 trees with
rates in (\ref{eq:rateassign1}) and (\ref{eq:rateassign2}), where
$c$ is still chosen to be the largest possible value satisfying
(\ref{eq:c_constraint1}) (\ref{eq:c_cosntraint2}) and
(\ref{eq:c_constraint3}). After constructing the depth-2 trees, the
effective flow rate to peer $i$ is
\begin{align}
\beta_i & = \sum_{j=1,j\neq i}^{N}r_{j \rightarrow i} + r_{i
\rightarrow i-1} \\
& = c (\alpha \tilde{r}_i + \frac{\tilde{r}_{i-1} -
\tilde{r}_i}{\sum_{k=1}^{N}\tilde{r}_k - \tilde{r}_i}U_i),
\label{eq:betai}
\end{align}
where $\tilde{r}_0 \triangleq \tilde{r}_1$. The download rate (used
downlink) for peer $i$ is $c (\alpha \tilde{r}_i +
\frac{\tilde{r}_{1} - \tilde{r}_i}{\sum_{k=1}^{N}\tilde{r}_k -
\tilde{r}_i}U_i)$. Note that the effective flow rate is smaller than
the download rate for peer $i$. This is because peer $i$ only keeps
a subset of chunks received from the source node. For this reason,
parts of peer $i$'s downlink and the source node's uplink are
wasted. The total amount of the wasted uplink is
\begin{equation}
U_w = c \sum_{i=1}^{N} \frac{\tilde{r}_1 -
\tilde{r}_{i-1}}{\sum_{k=1}^{N}\tilde{r}_k - \tilde{r}_i}U_i.
\end{equation}
The used source node's uplink is $ c \alpha \tilde{r}_1$. If $ c
\alpha \tilde{r}_1 < U_s$, we can further use the rest of the source
node's uplink to distribute content through the depth-1 tree. The
constraints on the resource allocation for the depth-1 tree are
(\ref{eq:r_constraint1}),
\begin{equation}
r^{(1)}_{s \rightarrow i} \leq D_i - \beta_i, \forall i=1,\cdots,N,
\end{equation}
and
\begin{equation}
\sum_{i=1}^{N}r^{(1)}_{s \rightarrow i} \leq U_s - c \alpha
\tilde{r}_1.
\end{equation}
Let $\hat{W}_i = \min_{k \leq i} (W_k)$. Let $\hat{D}_i = \min_{k
\leq i} (\tilde{D}_k - \beta_k)$. A sub-optimal network resource
allocation for the depth-1 tree is
\begin{equation}
r^{(1)}_{s \rightarrow i} = \Bigg \{ \begin{array}{cc}
                 \sqrt{\hat{W}_i}\cdot R - \beta_i, & \textrm{if } \beta_i \leq \sqrt{\hat{W}_i}\cdot R \leq \hat{D}_i, \\
                 0 & \textrm{if } \sqrt{\hat{W}_i}\cdot R <
                 \beta_i,\\
                 \hat{D}_i - \beta_i, & \textrm{if } \sqrt{\hat{W}_i}\cdot R >
                 \hat{D}_i,
               \end{array}
              \label{eq:waterfilling8}
\end{equation}
and $r_i = r^{(1)}_{s \rightarrow i} + \beta_i$, where $R$ is chosen
such that $$\sum_{i=1}^{N} r^{(1)}_{s \rightarrow i} = U_s - c
\alpha \tilde{r}_1$$ and also $$\sum_{i=1}^{N} r_i = U_s +
c\sum_{i=1}^{N}U_i - U_w.$$ The complexity of calculating the
suboptimal resource allocation for the routing-based scheme is
$O(N^2)$.

\subsection{Simulations for the Static Scenario}
\label{sec:staticsimulation} This subsection provides the empirical
WSDT performances of the rateless-coding-based scheme, the
routing-based scheme, and compares them with the lower bound to the
WSDT. In all simulations, the file size $B$ is normalized to be 1.
This subsection shows simulations for 6 cases of network settings as
follows:
\begin{itemize}
\item Case I: $U_i = 1$, $D_i = \infty$ for $i=1,\cdots,N$;
\item Case II: $U_i = 1$, $D_i = 8$ for $i=1,\cdots,N$;
\item Case III: $U_i = i/N$, $D_i = \infty$ for $i=1,\cdots,N$;
\item Case IV: $U_i = i/N$, $D_i = 8i/N$ for $i=1,\cdots,N$;
\item Case V: $U_i = 1 + 9\delta(i>N/2)$, $D_i = \infty$ for $i=1,\cdots,N$;
\item Case VI: $U_i = 1 + 9\delta(i>N/2)$, $D_i = 8i/N$, $i=1,\cdots,N$;
\end{itemize}
where $\delta(\cdot)$ is the indicate function.

Consider small P2P networks with $N=10$ peers. The performances of
sum download time versus $U_s$ for these 6 cases are shown in
Fig.~\ref{fig:staticsimulation1}. The performances of WSDT versus
$U_s$ with weight $W_i = i/N$ ($i=1,\cdots,N$) are shown in
Fig.~\ref{fig:staticsimulation2}. The performances of WSDT versus
$U_s$ with weight $W_i = 1 + \delta(i>N/2)$ ($i=1,\cdots,N$) are
shown in Fig.~\ref{fig:staticsimulation3}. In all these simulations,
the weighted sum download times of the rateless-coding-based scheme
and the routing-based scheme achieve or almost achieve the lower
bound.

\begin{figure}
  \centering
  \includegraphics[width=0.7\textwidth]{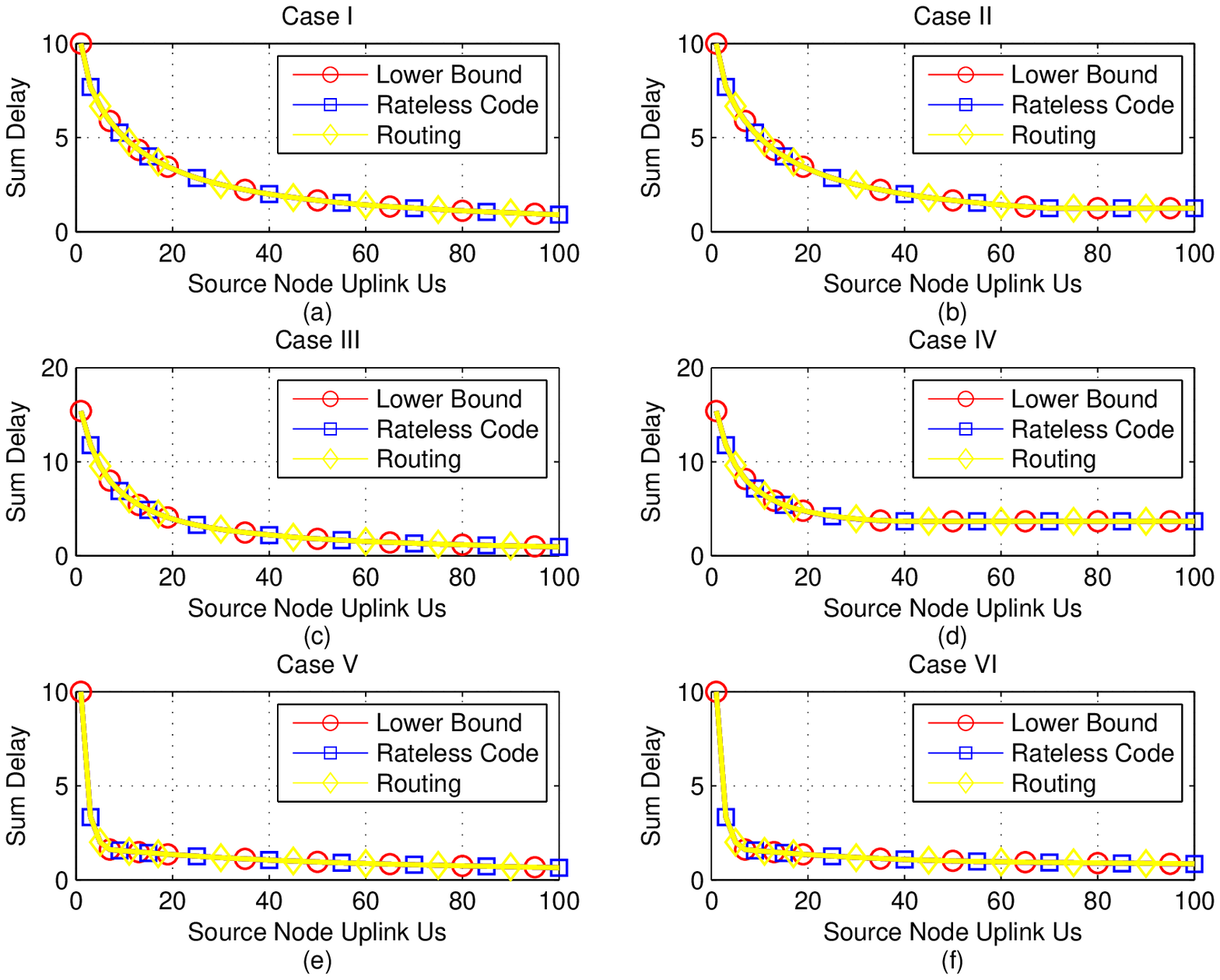}
  \caption{Sum download time versus $U_s$ for small P2P networks with $N=10$ peers.}\label{fig:staticsimulation1}
\end{figure}

\begin{figure}
  \centering
  \includegraphics[width=0.7\textwidth]{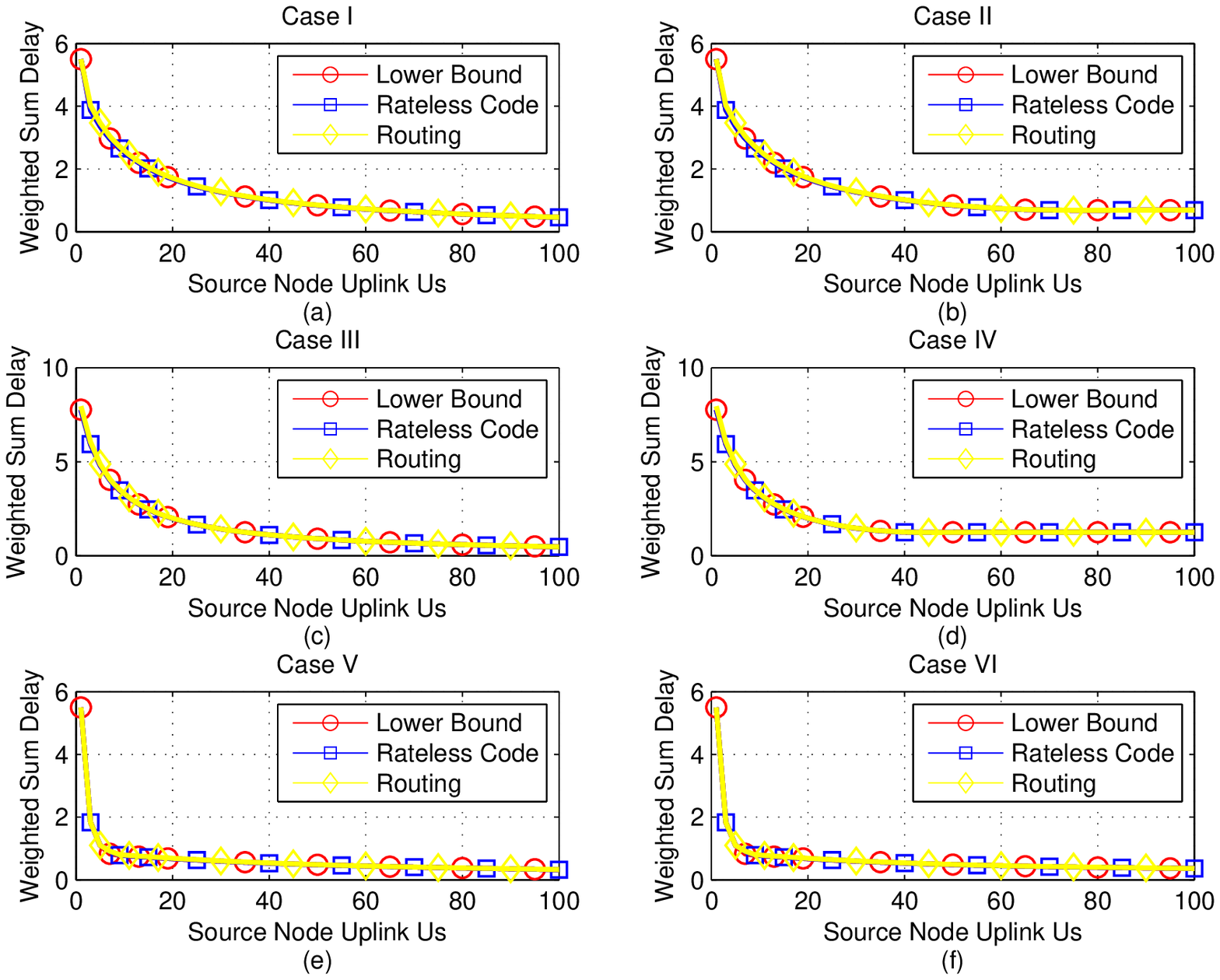}
  \caption{Weighted sum downloading time versus $U_s$ for small P2P networks with $N=10$ peers and weight $W_i = i/N$.}\label{fig:staticsimulation2}
\end{figure}

\begin{figure}
  \centering
  \includegraphics[width=0.7\textwidth]{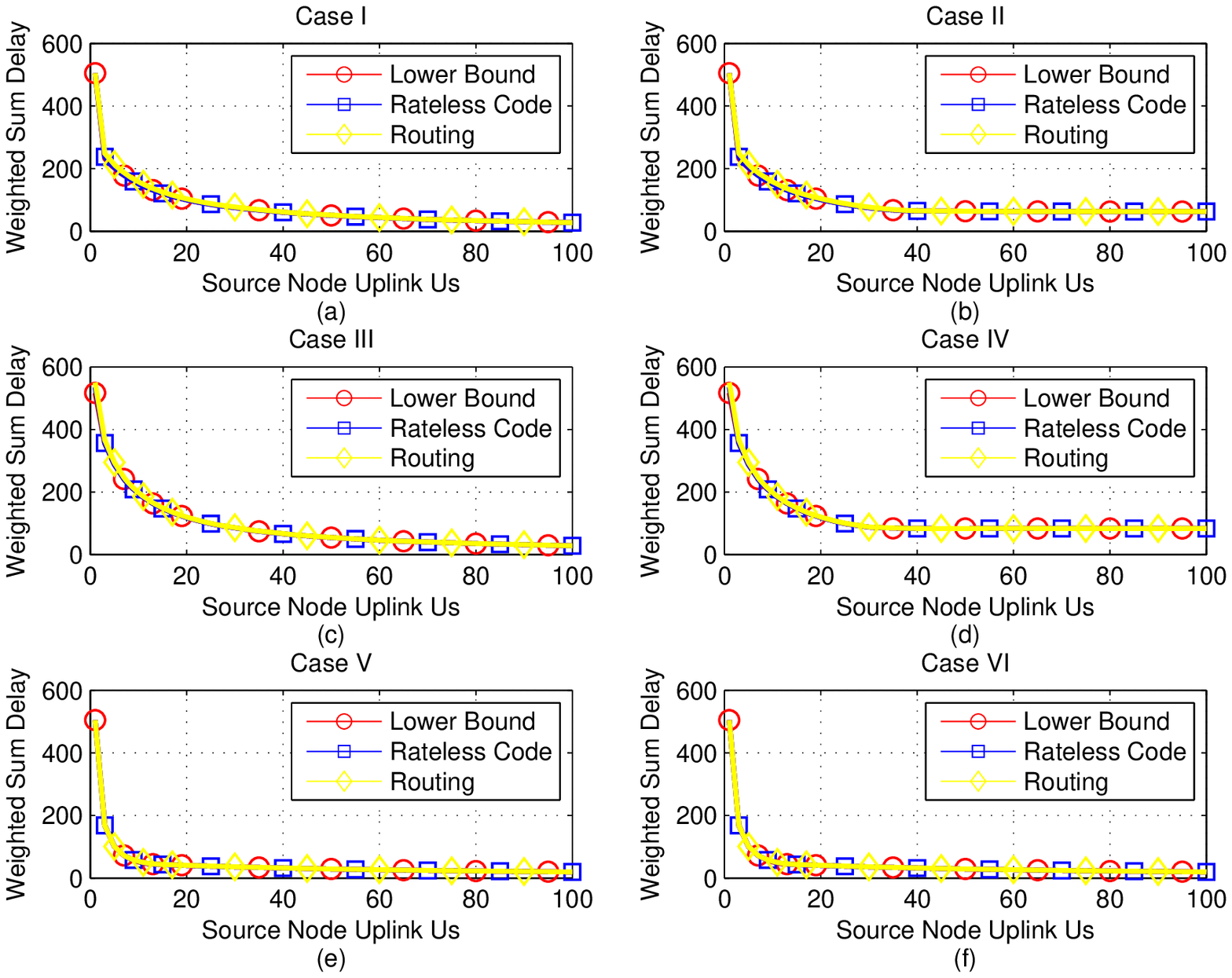}
  \caption{Weighted sum downloading time versus $U_s$ for small P2P networks with $N=10$ peers and weight $W_i = 1 + 99\delta(i>N/2)$.}\label{fig:staticsimulation3}
\end{figure}

Consider large P2P networks with $N=1000$ peers. The performances of
sum download time versus $U_s$ for these 6 cases are shown in
Fig.~\ref{fig:staticsimulation4}. The performances of WSDT versus
$U_s$ with weight $W_i = i/N$ ($i=1,\cdots,N$) are shown in
Fig.~\ref{fig:staticsimulation5}. The performances of WSDT versus
$U_s$ with weight $W_i = 1 + \delta(i>N/2)$ ($i=1,\cdots,N$) are
shown in Fig.~\ref{fig:staticsimulation6}. In all these simulations,
the weighted sum download times of the rateless-coding-based scheme
and the routing-based scheme also achieve or almost achieve the
lower bound.

\begin{figure}
  \centering
  \includegraphics[width=0.7\textwidth]{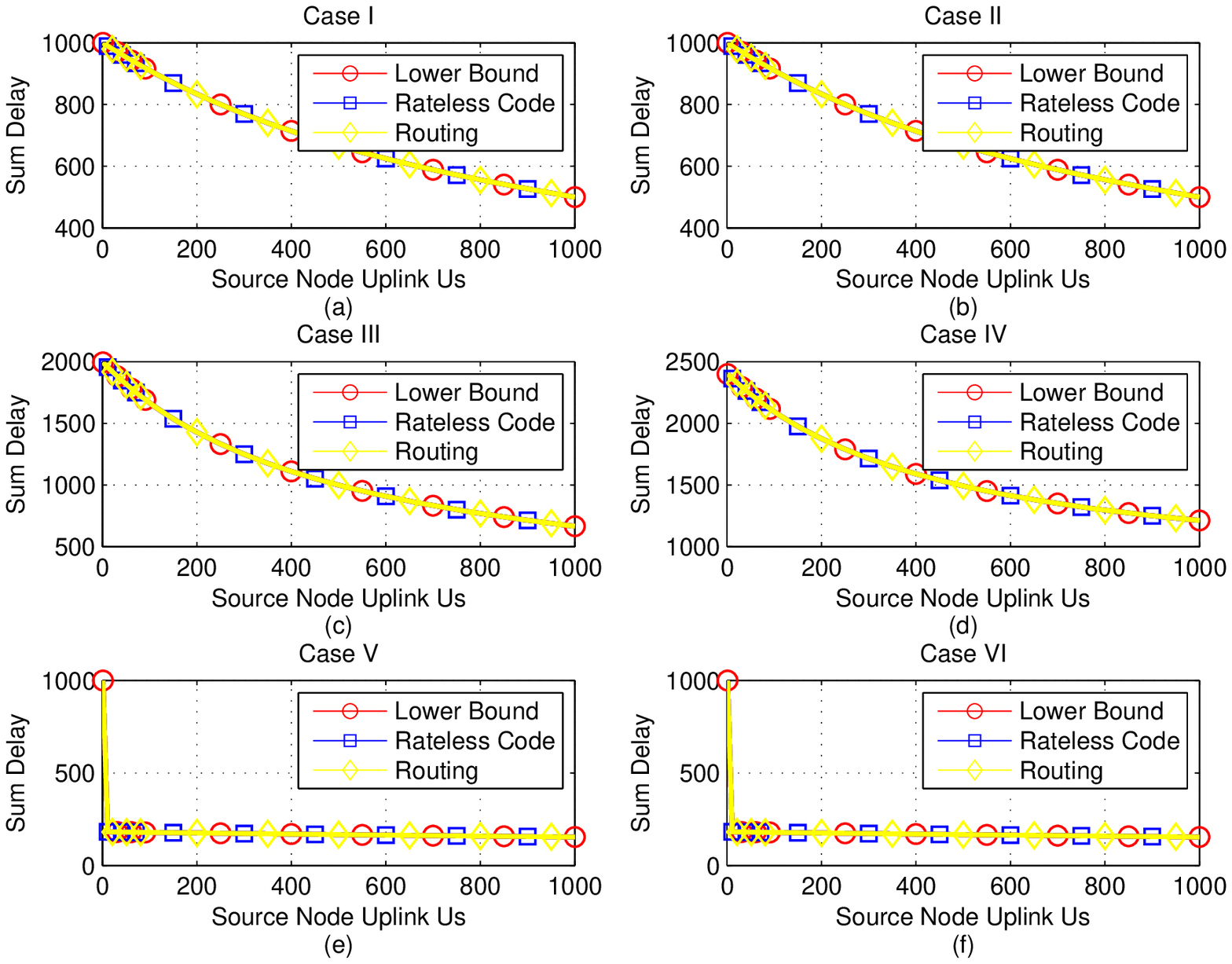}
  \caption{Sum download time versus $U_s$ for large P2P networks with $N=1000$ peers.}\label{fig:staticsimulation4}
\end{figure}

\begin{figure}
  \centering
  \includegraphics[width=0.7\textwidth]{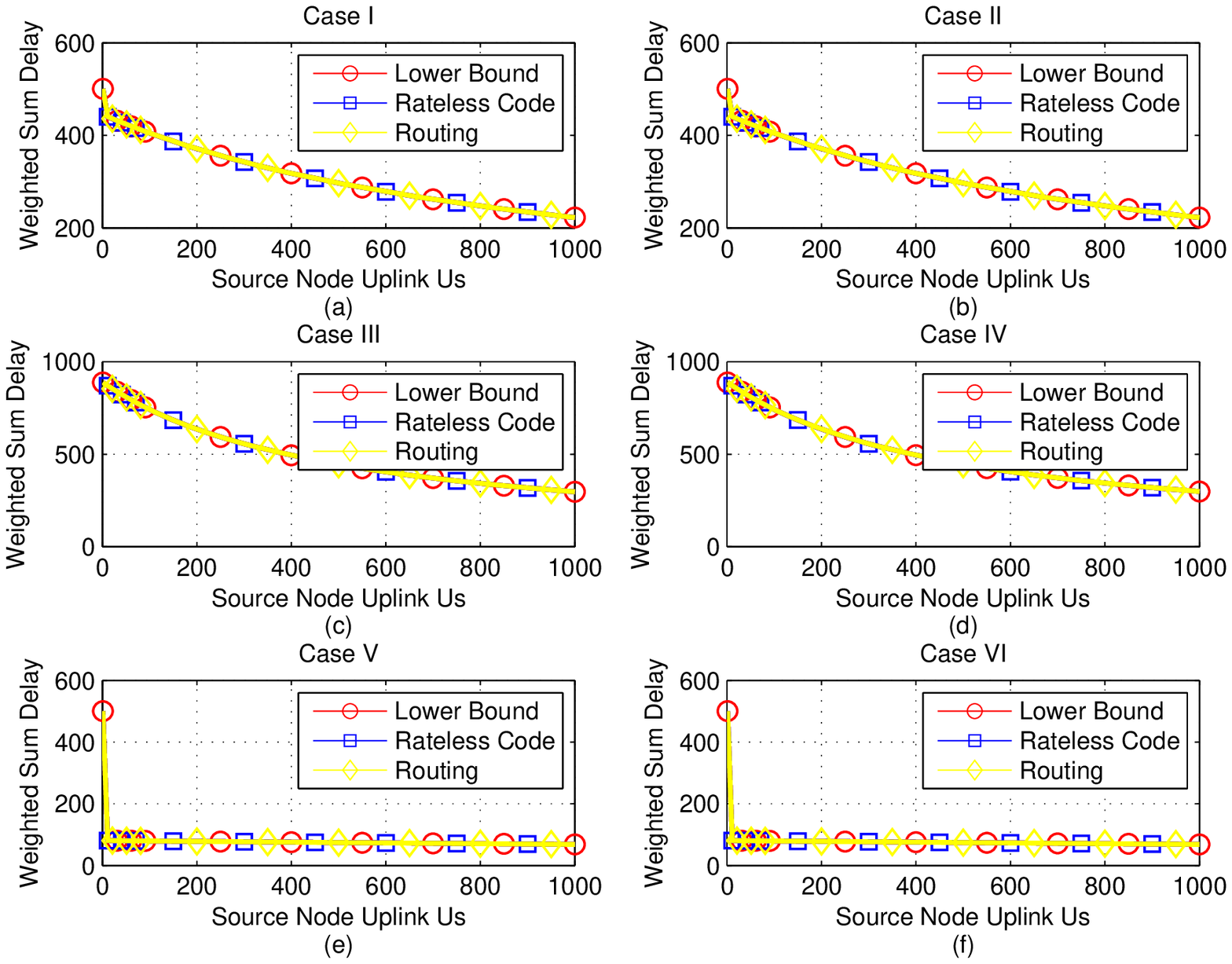}
  \caption{Weighted sum downloading time versus $U_s$ for large P2P networks with $N=1000$ peers and weight $W_i = i/N$.}\label{fig:staticsimulation5}
\end{figure}

\begin{figure}
  \centering
  \includegraphics[width=0.7\textwidth]{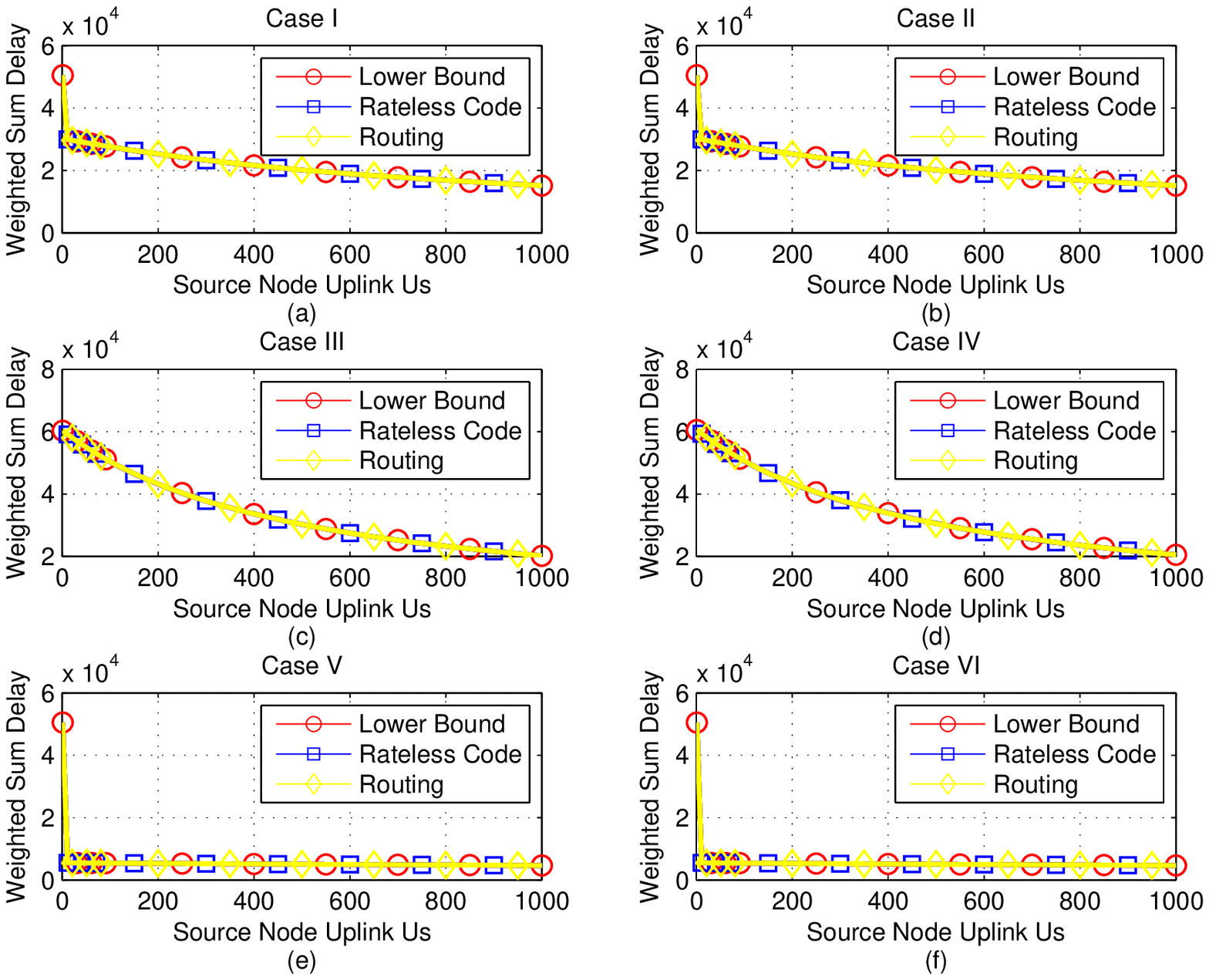}
  \caption{Weighted sum downloading time versus $U_s$ for large P2P networks with $N=1000$ peers and weight $W_i = 1 + 99\delta(i>N/2)$.}\label{fig:staticsimulation6}
\end{figure}

We also simulated for many other network settings and weight
settings. In all these simulations, the rateless-coding-based scheme
achieves or almost achieves the lower bound to the WSDT. Hence, the
lower bound to the WSDT is empirically tight, and the
rateless-coding-based scheme has almost-optimal empirical
performance. The routing-based scheme also has near-optimal
empirical performance.  However, for few cases there are clear
differences between the performance of the routing-based scheme and
the lower bound.

\section{The Dynamic Scenario}
\label{sec:dynamic}

The dynamic scenario is allowed to re-allocate the network resource
during the file transfer, in particular, whenever a peer finishes
downloading, joins into the network, or leaves from the network.

\subsection{A Piece-wise Static Approach to the General Dynamic Case}
Wu et al. \cite{Wu09} show that to optimize WSDT the network
resource allocation should be dynamic, but may remain constant
during any ``epoch'', a period of time between when one receiver
finishes downloading and another finishes downloading.  Thus, one
optimal solution for the dynamic scenario is ``piecewise static''.

As an example of how a ``piecewise static'' dynamic allocation can
reduce the WSDT, consider the example for which we studied static
allocations in Section \ref{sec:time-expanded-graph}.  Recall that
the example was for a P2P network with $U_S=2$, $B=1$ and three
peers $\{1,2,3\}$ with $U_1=U_2=U_3=1$ and $D_1 = D_2 = D_3 =
\infty$.   Fig.~\ref{fig:DynamicTimeExpandedGraph} shows the
time-expanded graph corresponding to the optimal dynamic rate
allocation for this example. Because there are three peers, this
time-expanded graph describes a file transfer scenario with 3
epochs. The first epoch lasts 0.5 unit time. In the first epoch, the
source node sends half of the file to peer 1 and the other half to
peer 2. Peer 1 and peer 2 exchange their received content, and
hence, both peer 1 and peer 2 finish downloading at the same time.
Hence, the second epoch lasts 0 time units (since $t_2-t-1=0$).  The
third epoch lasts 0.25 unit time, in which the source node, peer 1
and peer 2 transmits to peer 3 simultaneously. Peer 1 sends a
quarter of the file. Peer 2 sends another quarter. The source node
sends the other two quarters.

\begin{figure}
  \centering
  \includegraphics[width=0.5\textwidth]{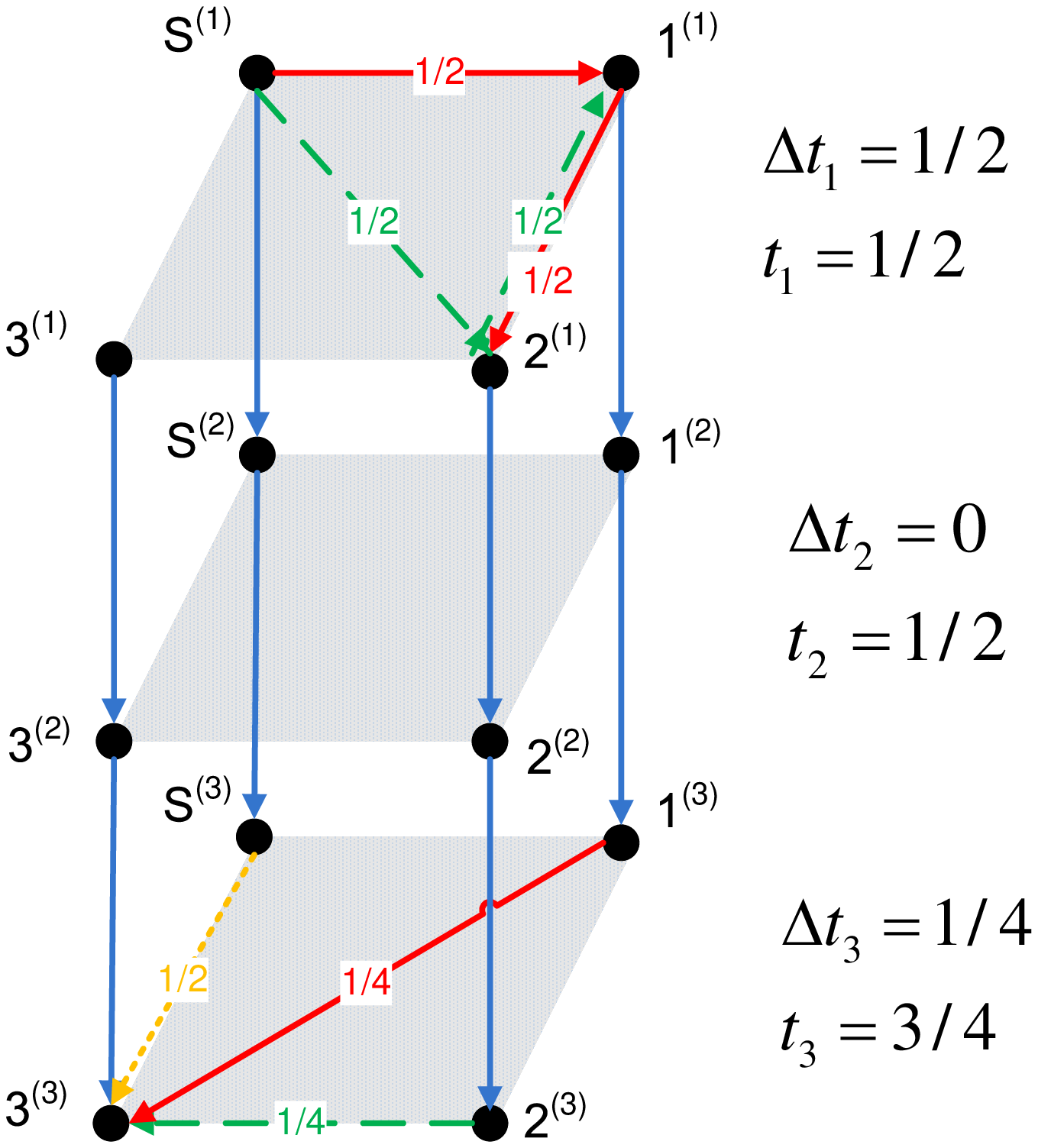}
  \caption{A time-expanded graph for a P2P network with three peers where $U_S = 2$, $B=1$, $U_1 = U_2 = U_3=1$ and $D_1 = D_2 = D_3 = \infty$.  Edges are labeled with the total information flow along the edge during the epoch.  This is the product of the rate allocation along the edge during the epoch and the duration of the epoch.}\label{fig:DynamicTimeExpandedGraph}
\end{figure}

This dynamic solution turns out to achieve the minimum possible sum
download time for this example which is 1.75.  For comparison, the
optimal static solution, which we saw in Section
\ref{sec:time-expanded-graph} had an only slightly larger sum
download time of 1.8.  This simple example shows that a dynamic rate
allocation can reduce WSDT.  In certain cases the benefit can be
significant.  Dynamic schemes can reduce the minimum sum download
time to approximately half that of the static case, at least when
downlink capacities are considered to be infinite \cite{Wu09}.

\subsection{A Rateless-coding Approach to Dynamic Allocation}

Wu et al. \cite{Wu09} propose a dynamic routing-based scheme. This
scheme first deploys all uplink resource to fully support the first
$K$ peers until they finish downloading, where $K$ is appropriately
chosen. After that, the scheme deploys all uplink resource to fully
support the next peer until it finishes downloading, an so forth.
Inspired by the work \cite{Wu09}, we propose a dynamic
rateless-coding-based scheme for P2P networks with both peer uplink
bandwidth constraints and peer downlink bandwidth constraints.  This
scheme is applicable for dynamic P2P networks in which peers may
even join or leave the network.

The key idea of this dynamic rateless-coding-based scheme is similar
to that of the dynamic routing-based scheme in \cite{Wu09}. In
particular, in each epoch, the scheme deploys all uplink resource to
fully support several chosen peers. The details of the dynamic
rateless-coding-based scheme are provided in Algorithm
\ref{alg:dynamic}.

\begin{algorithm}
\caption{Dynamic Rateless-Coding-Based Scheme} \label{alg:dynamic}
\begin{algorithmic}[1]
\STATE Initiate the P2P network. Peers join into the network.\\
\WHILE {A peer finishes downloading, joins into the network or
leaves from the network} \STATE Select a set of peers and reset
peers' weights. (The peer selection algorithm and the weight setting
are addressed in Section \ref{sec:ordering})\\
\STATE Apply the static rateless-coding-based scheme based on the
new weights until a peer finishes downloading, joins into the
network or leaves from the network. \ENDWHILE
\end{algorithmic}
\end{algorithm}

Algorithm \ref{alg:dynamic} provides the structure of the dynamic
rateless-coding-based scheme. Because the peers always receive
independently generated rateless coded chunks in the static
rateless-code scheme, the dynamic rateless-coding-based scheme is
also applicable for dynamic P2P network. As long as a peer receives
enough rateless coded chunks \footnote{The number of coded chunks
needed to decode the whole file is only slightly larger than the
total number of the original chunks.}, it can decode the whole file.
The key issue is how to set the peers' weights in each epoch. Since
the weight setting and the static rateless-coding-based scheme in
the current epoch will influence the dynamic scheme in the following
epoches, the problem of setting weights is very complicated. We will
address this problem in Section \ref{sec:ordering} and show that
this problem is approximately equivalent to selecting a set of peers
to fully support.

\subsection{A Solution to the Ordering Problem}
\label{sec:ordering}

Wu et al. \cite{Wu09} demonstrate that given an order in which the
receivers finish downloading, the dynamic allocation (neglecting
downlink bandwidth constraints) that minimizes WSDT can be obtained
in polynomial time by convex optimization and can be achieved
through linear network coding.   However, \cite{Wu09} leaves the
proper selection of the ordering as an open problem and does not
address the finite downlink capacities $D_i<\infty$ or the general
case of weighted sum download time which allows any values of the
weights $W_i$.

The simulations for the static scenario in Section
\ref{sec:staticsimulation} show that the WSDT of static
rateless-coding-based schemes are very close to that of the lower
bound (\ref{eq:lowerboundsolution}, \ref{eq:lowerbound}). Hence, the
flow rates $r_i$ in (\ref{eq:lowerboundsolution}) are achievable or
almost achievable by the static rateless-coding-based scheme. Recall
that the constraints on the rate $r_i$ in
(\ref{eq:lowerboundsolution}) are
$$0 \leq r_i \leq \tilde{D}_i, \quad \forall i=1,\cdots, N,$$
and
$$\sum_{i=1}^{N} r_i \leq U_s + \sum_{i=1}^{N}U_i.$$
In the following discussion, we assume that any set of flow rates
$r_i$ ($i=1,\cdots,N$) satisfying the above constraints is
achievable by the static rateless-coding-based scheme.

Consider one epoch of the dynamic rateless-coding-based scheme.
Suppose there are $N$ peers in the network in the current epoch.
Peer $i$ ($i=1,\cdots,N$) has uplink capacity $U_i$, downlink
capacity $D_i$ and $B - q_iB$ received rateless-coded chunks.
Suppose the static rateless-coding-based scheme supports peer $i$
with flow rate $r_i$ ($i=1,\cdots,N$) based on a weight setting. In
order to find the optimal weight setting for the current epoch, we
study the necessary conditions for the flow rates $r_i$
($i=1,\cdot,N$) to be optimal.

Let us first focus on two peers in the network, say peer 1 and peer
2. The total amount of the uplink resource supporting peer 1 and
peer 2 is $s = r_1 + r_2$. If the flow rates $r_i$ for
$i=1,\cdots,N$ is optimal, then the flow rates $r_1$ and $r_2$ are
also the optimal resource allocation for peers 1 and 2 given that
the flow rates $r_i$ for $i=3,\cdots,N$ are fixed. Now consider a
suboptimal scenario in which the uplink resource with the amount of
$s$ serves peers 1 and 2, and the rest of the uplink serves other
peers in all of the following epoches. This suboptimal scenario
provides a WSDT close to the minimum WSDT if $s \ll U_s +
\sum_{i=1}^{N}U_i$ (this is true for large $N$). Hence, we consider
this suboptimal scenario and address the necessary conditions for
$r_1$ and $r_2$ to be the optimal resource allocation for peers 1
and 2.

If $\frac{q_1B}{r_1} \leq \frac{q_2B}{r_2}$, then peer 1 finishes
downloading before peer 2 does. After peer 1 finishes downloading,
peer 1 acts as a source node and hence the total amount of the
source nodes' uplink is $U_s + U_1$, and peer 2 is supported by the
uplink resource with the amount of $s$. Hence, the WSDT for peers 1
and 2 is
\begin{equation}
\Delta_1 = W_1\frac{q_1B}{r_1} + W_2(\frac{q_1B}{r_1} + \frac{q_2B -
\frac{q_1B}{r_1}r_2}{\min (s, D_2,U_s +U_1)}), \label{eq:delta1}
\end{equation}
and
\begin{equation}
\frac{\Delta_1}{r_1} = \frac{q_1B}{r_1^2}(-W_1 - W_2 + \frac{s
W_2}{\min (s,D_2, U_s +U_1)}). \label{eq:deridelta1}
\end{equation}
Note that the sign of $\frac{\Delta_1}{r_1}$ does not depend on
$r_1$. Hence, the optimal solution to $r_1$ is either $r_1 =
r_2q_1/q_2$ (peer 1 and peer 2 finish at the same time) if $-W_1 -
W_2 + \frac{s W_2}{\min (s,D_2, U_s +U_1)} \geq 0$, or $r_1 = \min
(s,\tilde{D}_1)$ (peer 1 is fully supported) if $-W_1 - W_2 +
\frac{s W_2}{\min (s,D_2, U_s +U_1)} < 0$. Similarly, if
$\frac{q_1B}{r_1} \geq \frac{q_2B}{r_2}$, then peer 2 finishes
downloading before peer 1 does. The WSDT for peers 1 and 2 is
\begin{equation}
\Delta_2 = W_2\frac{q_2B}{r_2} + W_1(\frac{q_2B}{r_2} + \frac{q_1B -
\frac{q_2B}{r_2}r_1}{\min (s, D_1,U_s +U_2)}), \label{eq:delta2}
\end{equation}
and
\begin{equation}
\frac{\Delta_2}{r_2} = \frac{q_2B}{r_2^2}(-W_2 - W_1 + \frac{s
W_1}{\min (s,D_1, U_s +U_2)}). \label{eq:deridelta2}
\end{equation}
Note that the sign of $\frac{\Delta_2}{r_2}$ does not depend on
$r_2$ eithter. Hence, the optimal solution to $r_2$ is either $r_2 =
r_1q_2/q_1$ (peer 1 and peer 2 finish at the same time) if $-W_2 -
W_1 + \frac{s W_1}{\min (s,D_1, U_s +U_2)} \geq 0$, or $r_2 = \min
(s,\tilde{D}_2)$  (peer 2 is fully supported) if $-W_2 - W_1 +
\frac{s W_1}{\min (s,D_1, U_s +U_2)} < 0$. Therefore, the optimal
resource allocation for peer 1 and peer 2 is achieved when one of
the peers is fully supported, or they finish at the same time.
\newtheorem{fullsupport}[NetworkCoding]{Lemma}
\begin{fullsupport}\label{Lemma:fullsupport}
Given that the flow rates to peer $i$ for $i=3,\cdots,N$ are fixed,
and the amount of uplink resource supporting peer 1 and peer 2 is
$s$. If the optimal resource allocation for peer 1 and peer 2 is
achieved when they finish at the same time, then both peer 1 and
peer 2 are fully supported.
\end{fullsupport}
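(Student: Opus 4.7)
My plan starts from the derivative analysis in the paragraphs immediately preceding the lemma. The signs of $\partial \Delta_1/\partial r_1$ in (\ref{eq:deridelta1}) and $\partial \Delta_2/\partial r_2$ in (\ref{eq:deridelta2}) are independent of $r_1$ and $r_2$ respectively, so within each regime (peer~1 finishes first versus peer~2 finishes first) the minimizer of the two-peer WSDT lies at a boundary. The boundary common to both regimes is exactly $r_1 q_2 = r_2 q_1$, so ``finish at the same time'' is globally optimal precisely when the derivatives in both regimes push toward this common boundary, giving the pair of necessary conditions
\begin{align*}
c_1(W_1+W_2) \le s\,W_2, \qquad c_2(W_1+W_2) \le s\,W_1,
\end{align*}
where $c_1 = \min(s, D_2, U_s+U_1)$ and $c_2 = \min(s, D_1, U_s+U_2)$.

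Since $W_1,W_2 > 0$, each inequality forces $c_i < s$, so the minimum defining $c_i$ is attained by one of its physical arguments: $c_1 = \min(D_2, U_s+U_1)$ and $c_2 = \min(D_1, U_s+U_2)$. I would then rewrite each condition as $c_i/(s-c_i) \le W_{3-i}/W_i$ and multiply the two together; the weights cancel and, after expansion, the result collapses to the clean combined bound
\begin{equation*}
c_1 + c_2 \le s.
\end{equation*}

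Next, I would apply the trivial monotonicity $U_s+U_j \geq U_s$ to obtain $c_1 \ge \min(D_2, U_s) = \tilde{D}_2$ and, symmetrically, $c_2 \ge \tilde{D}_1$. Plugging these into the previous inequality gives $\tilde{D}_1 + \tilde{D}_2 \le c_1 + c_2 \le s$. Coupled with the feasibility bound $r_i \le \tilde{D}_i$ and the hypothesis $r_1 + r_2 = s$, which together yield $s \le \tilde{D}_1 + \tilde{D}_2$, I conclude $s = \tilde{D}_1 + \tilde{D}_2$ and therefore $r_1 = \tilde{D}_1$ and $r_2 = \tilde{D}_2$: both peers are fully supported.

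The main obstacle is combining the two derivative conditions cleanly. The individual inequalities look asymmetric because $c_1$ involves $U_s+U_1$ while the right-hand side involves $W_2$; the key trick is to put each in the form $c_i/(s-c_i)\le W_{3-i}/W_i$ and multiply, which eliminates the weights and produces the symmetric bound $c_1+c_2\le s$. Once that is in hand, the lower bounds $c_i\ge \tilde{D}_{3-i}$ and the natural upper bound $r_1+r_2\le \tilde{D}_1+\tilde{D}_2$ squeeze $s$ to exactly $\tilde{D}_1+\tilde{D}_2$ and pin down both rates to their individual maxima.
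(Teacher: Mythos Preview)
Your argument is correct and follows essentially the same route as the paper: both proofs take the two derivative-sign conditions, multiply them to cancel the weights and obtain $s \ge c_1 + c_2$ (the paper writes this as $s \ge \tilde{D}^{+}_1 + \tilde{D}^{+}_2$), and then use $c_i \ge \tilde{D}_{3-i}$ to conclude $s \ge \tilde{D}_1 + \tilde{D}_2$. Your version is slightly tidier in that you explicitly note $c_i < s$ before dividing and close with the squeeze $s = r_1 + r_2 \le \tilde{D}_1 + \tilde{D}_2$ to force equality, whereas the paper treats the case $s \ge \tilde{D}_1 + \tilde{D}_2$ separately up front; but the core idea is identical.
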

\begin{proof}
Let $\tilde{D}^{+}_1 = \min (s, D_1, U_s + U_2)$ and
$\tilde{D}^{+}_2 = \min (s, D_2, U_s + U_1)$. According to the above
discussion, the optimal resource allocation for peer 1 and peer 2 is
achieved when they finish at the same time if and only if $s \geq
\tilde{D}_1 + \tilde{D}_2$,
or $\frac{\Delta_1}{r_1} \geq 0$ and $\frac{\Delta_2}{r_2} \geq 0$.\\
If $s \geq \tilde{D}_1 + \tilde{D}_2$, then $r_1 = \tilde{D}_1$,
$r_2 = \tilde{D}_2$, and hence, peers 1 and 2 are
fully supported.\\
If $\frac{\Delta_1}{r_1} \geq 0$ and $\frac{\Delta_2}{r_2} \geq 0$,
then
$$-W_1 - W_2 + \frac{s W_2}{\tilde{D}^{+}_2} \geq 0,$$
and
$$-W_2 - W_1 + \frac{s W_1}{\tilde{D}^{+}_1} \geq 0$$.
Hence, $0 \leq W_1 \leq \frac{s -
\tilde{D}^{+}_2}{\tilde{D}^{+}_2}W_2$ and $0 \leq W_2 \leq \frac{s -
\tilde{D}^{+}_1}{\tilde{D}^{+}_1}W_1$. Multiply the above two
inequalities and obtain
$$s \geq \tilde{D}^{+}_1 + \tilde{D}^{+}_2 \geq \tilde{D}_1 + \tilde{D}_2.$$
Therefor, peer 1 and peer 2 are also fully supported.
\end{proof}

\newtheorem{fullsupport2}[EnoughUplink]{Corollary}
\begin{fullsupport2}\label{Corollary:fullsupport2}
Given that the flow rates to peer $i$ for $i=3,\cdots,N$ are fixed,
and the amount of uplink resource supporting peer 1 and peer 2 is
$s$. The optimal resource allocation for peer 1 and peer 2 is
achieved when one of them is fully supported or both of them are
fully supported.
\end{fullsupport2}

\newtheorem{fullsupport3}[EnoughUplink]{Corollary}
\begin{fullsupport3}\label{Corollary:fullsupport3}
The optimal network resource allocation in each epoch of a dynamic
scenario is only obtained when some peers are fully supported, at
most one peer is partially supported, and the other peers are not
supported.
\end{fullsupport3}
\begin{proof}
(\textbf{proof by contradiction}) If two peers are partially
supported, say peer 1 and peer 2 are partially supported, then the
resource allocation for peer 1 and peer 2 is not optimal by
Corollary \ref{Corollary:fullsupport2}.
\end{proof}

By Corollary \ref{Corollary:fullsupport3}, the optimal weight
setting in each epoch is $W=1$ for the fully supported peers, $0
\leq W \leq 1$ for the partially supported peer, and $W=0$ for other
peers. Hence, the problem of optimizing the weight setting is
approximately equivalent to selecting a set of peers to fully
support.

Now study the necessary conditions for a peer selection to be
optimal in a similar way. Suppose that the amount of uplink resource
supporting peer 1 and peer 2 is $s$, and
the flow rates to peer $i$ for $i=3,\cdots,N$ are fixed.\\
If $s < \frac{q_1+q_2}{q_1}\tilde{D}_1$ and $s < \frac{q_1 +
q_2}{q_2}\tilde{D}_2$, then peer 1 finishes downloading if peer 1 is
fully supported, or peer 2 finishes downloading if peer 2 is fully
supported. When peer 1 is fully supported, the WSDT for peer 1 and
peer 2 is $\Delta_1$ in (\ref{eq:delta1}) with $r_1 = \tilde{D}_1$.
When peer 2 is fully supported, the WSDT for these two peers is
$\Delta_2$ in (\ref{eq:delta2}) with $r_2 = \tilde{D}_2$. Hence, we
have
\begin{align}
\Delta_1 - \Delta_2 & = W_1\frac{q_1B}{\tilde{D}_1} +
W_2(\frac{q_1B}{\tilde{D}_1} +
\frac{q_2B - \frac{q_1B}{\tilde{D}_1}r_2}{\min (s, D_2,U_s +U_1)}) \nonumber \\
& - (W_2\frac{q_2B}{\tilde{D}_2} + W_1(\frac{q_2B}{\tilde{D}_2} +
\frac{q_1B - \frac{q_2B}{\tilde{D}_2}r_1}{\min (s, D_1,U_s +U_2)})) \\
& = q_1BW_1(\frac{1}{\tilde{D}_1} - \frac{1}{\tilde{D}^{+}_1}) - q_2BW_2 (\frac{1}{\tilde{D}_2} - \frac{1}{\tilde{D}^{+}_2}) \nonumber\\
& + q_1BW_2(\frac{1}{\tilde{D}^{+}_2} - \frac{1}{\tilde{D}_2})(1 - \frac{s}{\tilde{D}_1}) \nonumber \\
& - q_2BW_1(\frac{1}{\tilde{D}^{+}_1} - \frac{1}{\tilde{D}_1})(1 - \frac{s}{\tilde{D}_1}) \nonumber \\
& + (q_1BW_2 - q_2BW_1)(\frac{1}{\tilde{D}_1} + \frac{1}{\tilde{D}_2} - \frac{s}{\tilde{D}_1\tilde{D}_2}) \\
& \approx (\frac{W_2}{q_2} - \frac{W_1}{q_1}) \frac{q_1 q_2 B
(\tilde{D}_1 + \tilde{D}_2 -s)}{\tilde{D}_1 \tilde{D}_2}.
\end{align}
Therefore, it is better to first fully support peer 1  if
$\frac{W_2}{q_2} < \frac{W_1}{q_1}$ when $s <
\frac{q_1+q_2}{q_1}\tilde{D}_1$ and
$s < \frac{q_1 + q_2}{q_2}\tilde{D}_2$.\\
If $\frac{q_1}{\tilde{D}_1} > \frac{q_2}{\tilde{D}_2}$  and
$\frac{q_1 + q_2}{q_1}\tilde{D}_1 <s < \frac{q_1 +
q_2}{q_2}\tilde{D}_2$, then peer 2 always finishes downloading
before peer 1 does. In this case, it is better to first fully
support peer 1 if $\frac{\Delta_2}{r_2} >0$, i.e., $$\frac{W_1}{W_2}
> \frac{\tilde{D}^{+}_1}{s  - \tilde{D}^{+}_1},$$ or approximately
$$\frac{W_1}{W_2} > \frac{\tilde{D}_1}{s  - \tilde{D}_1}.$$\\
If $\frac{q_1}{\tilde{D}_1} < \frac{q_2}{\tilde{D}_2}$ and
$\frac{q_1 + q_2}{q_2}\tilde{D}_2 <s < \frac{q_1 +
q_2}{q_1}\tilde{D}_1$, then peer 1 always finishes downloading
before peer 2 does. In  this case, it is better to first fully
support peer 1 if $\frac{\Delta_1}{r_1} < 0$, i.e.,
$$\frac{W_1}{W_2} > \frac{\tilde{D}^{+}_2}{s  - \tilde{D}^{+}_2},$$
or approximately
$$\frac{W_1}{W_2} > \frac{\tilde{D}_2}{s  - \tilde{D}_2}.$$ \\
These discussions are concluded in the following theorem.
\newtheorem{orderfortwo}[StaticPolynomial]{Theorem}
\begin{orderfortwo}\label{Theorem:orderfortwo}
Given that the amount of uplink resource supporting peer $i$ and
peer $j$ is $s$, and the flow rates to peer $k$ for $k \neq i,j$ are
fixed. The optimal resource allocation for peer $i$ and peer $j$ is
to fully support peer $i$ (i.e., $r_i = \tilde{D}_i$) if
\begin{equation} \label{eq:ordercondition}
\frac{W_i}{W_j} > \Big \{ \begin{array}{cc}
                             \max(\frac{q_i}{q_j}, \frac{s  - \tilde{D}_j}{\tilde{D}_j}) & \textrm{when } \frac{q_i}{\tilde{D}_i} < \frac{q_j}{\tilde{D}_j}, \\
                             \frac{1}{\max(\frac{q_j}{q_i}, \frac{s  - \tilde{D}_i}{\tilde{D}_i})} & \textrm{when } \frac{q_i}{\tilde{D}_i} > \frac{q_j}{\tilde{D}_j}.
                           \end{array}
\end{equation}
\end{orderfortwo}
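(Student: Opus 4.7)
The plan is to consolidate the three case-by-case computations that precede the theorem statement into the single condition on $W_i/W_j$, treating peer $i$ as peer $1$ and peer $j$ as peer $2$ in the preceding analysis. The heart of the argument is already done: from the expressions $\Delta_1$, $\Delta_2$ in (\ref{eq:delta1}), (\ref{eq:delta2}), their derivatives (\ref{eq:deridelta1}), (\ref{eq:deridelta2}), and the approximation $\Delta_1-\Delta_2\approx(W_j/q_j-W_i/q_i)\,q_iq_jB(\tilde D_i+\tilde D_j-s)/(\tilde D_i\tilde D_j)$, one can read off the optimal choice in each regime of $s$.

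First I would split into three regimes based on where $s$ sits relative to the thresholds $\frac{q_i+q_j}{q_i}\tilde D_i$ and $\frac{q_i+q_j}{q_j}\tilde D_j$. Regime A (both thresholds exceed $s$): either candidate peer would complete its download while consuming only rate $\tilde D_\cdot$, so the sign of $\Delta_1-\Delta_2$ is governed by $W_j/q_j-W_i/q_i$, which yields the clause $W_i/W_j>q_i/q_j$. Regime B (one threshold is crossed): under $q_i/\tilde D_i<q_j/\tilde D_j$ this is the subcase $\frac{q_i+q_j}{q_j}\tilde D_j<s<\frac{q_i+q_j}{q_i}\tilde D_i$, where peer $j$ would always finish before peer $i$ if both were supported; applying $\partial\Delta_2/\partial r_2\ge0$ as derived in the text and replacing $\tilde D^+_2$ by $\tilde D_j$ in the approximation gives the clause $W_i/W_j>\tilde D_j/(s-\tilde D_j)$, i.e.\ $W_i/W_j>(s-\tilde D_j)/\tilde D_j$ after the algebra cleans up as in the pre-theorem derivation. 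Regime C (both thresholds crossed): $s\ge\tilde D_i+\tilde D_j$, so both peers are fully supported simultaneously and the question is moot.

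Next I would observe that, since regimes A and B partition the range of $s$ for which a strict preference matters, and since each regime contributes exactly one lower bound on $W_i/W_j$, the combined necessary-and-sufficient condition is the maximum of the two clauses. This gives the first line of (\ref{eq:ordercondition}). The second line, for $q_i/\tilde D_i>q_j/\tilde D_j$, follows by swapping the roles of $i$ and $j$ in the argument above: the combined lower bound on $W_j/W_i$ becomes $\max(q_j/q_i,(s-\tilde D_i)/\tilde D_i)$, and inverting yields the stated form.

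The main obstacle, and the reason this is stated as an approximate criterion, is the boundary behaviour between regimes and the role of the ``$\tilde D^+$ vs.\ $\tilde D$'' distinction: $\tilde D^+_i=\min(s,D_i,U_s+U_j)$ accounts for the uplink peer $j$ contributes after it finishes, and the analysis replaces $\tilde D^+$ by $\tilde D$ in the threshold $(s-\tilde D_\cdot)/\tilde D_\cdot$. I would be explicit that this substitution is the source of the ``approximately'' language, valid in the regime $s\ll U_s+\sum_k U_k$ invoked earlier, and that under this approximation the regime-by-regime case analysis reassembles cleanly into the stated $\max$ expression.
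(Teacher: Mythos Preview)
Your overall plan matches the paper's: the theorem is stated there as a summary of the three-regime case analysis immediately preceding it, and consolidating those cases into the single $\max$ expression is exactly the intended argument. However, your write-up has a few concrete slips that would need fixing.

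In Regime~B under $q_i/\tilde D_i < q_j/\tilde D_j$ you have the roles reversed: it is peer~$i$ that always finishes first (since $q_iB/\tilde D_i < q_jB/\tilde D_j$), not peer~$j$, and the relevant derivative is $\partial\Delta_1/\partial r_1<0$ rather than $\partial\Delta_2/\partial r_2\ge 0$. Carrying that computation through gives $W_i/W_j > (s-\tilde D_j^{+})/\tilde D_j^{+}\approx (s-\tilde D_j)/\tilde D_j$ directly; the intermediate expression $\tilde D_j/(s-\tilde D_j)$ that you wrote is its reciprocal, so the ``i.e.'' linking them is false. (The paper's running text contains the same reciprocal typo in one place; the theorem statement has the correct form.)

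Your justification for why the $\max$ combines the two clauses is also slightly off. It is not that both lower bounds must hold at once; for a given $s$ only one regime is active. The reason the $\max$ works is that the two expressions $q_i/q_j$ and $(s-\tilde D_j)/\tilde D_j$ coincide at the regime boundary $s=\tfrac{q_i+q_j}{q_j}\tilde D_j$ and cross there, so $\max(q_i/q_j,(s-\tilde D_j)/\tilde D_j)$ automatically equals whichever clause belongs to the active regime. With that observation, the second line of (\ref{eq:ordercondition}) then follows by the $i\leftrightarrow j$ swap you describe, together with Corollary~\ref{Corollary:fullsupport2} to pass from ``not fully supporting $j$'' to ``fully supporting $i$''.
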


\newtheorem{orderforall}[EnoughUplink]{Corollary}
\begin{orderforall}\label{Corollary:orderforall}
Consider a peer selection for a dynamic scenario which selects peer
$i$ to fully support and peer $j$ to not support.  This peer
selection is optimal only if
\begin{equation} \label{eq:ordercondition2}
\frac{W_i}{W_j} > \Big \{ \begin{array}{cc}
                             \max(\frac{q_i}{q_j}, \frac{\tilde{D}_i  - \tilde{D}_j}{\tilde{D}_j}) & \textrm{when } \frac{q_i}{\tilde{D}_i} < \frac{q_j}{\tilde{D}_j}, \\
                             \frac{q_i}{q_j} & \textrm{when } \frac{q_i}{\tilde{D}_i} > \frac{q_j}{\tilde{D}_j}.
                           \end{array}
\end{equation}
\end{orderforall}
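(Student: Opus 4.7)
My plan is to specialize Theorem \ref{Theorem:orderfortwo} to the situation in which peer $j$ receives zero rate. If an optimal dynamic allocation assigns $r_i = \tilde{D}_i$ and $r_j = 0$, then the aggregate uplink resource directed at these two peers is $s = r_i + r_j = \tilde{D}_i$, and, with every other peer's rate held fixed, the split of $s$ between peers $i$ and $j$ must itself be optimal in the sense of Theorem \ref{Theorem:orderfortwo}.

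Substituting $s = \tilde{D}_i$ into the bound (\ref{eq:ordercondition}) then yields (\ref{eq:ordercondition2}) directly. In the branch $q_i/\tilde{D}_i < q_j/\tilde{D}_j$, the quantity $\max(q_i/q_j,\,(s - \tilde{D}_j)/\tilde{D}_j)$ becomes $\max(q_i/q_j,\,(\tilde{D}_i - \tilde{D}_j)/\tilde{D}_j)$. In the branch $q_i/\tilde{D}_i > q_j/\tilde{D}_j$, the denominator $\max(q_j/q_i,\,(s - \tilde{D}_i)/\tilde{D}_i)$ collapses to $q_j/q_i$ because $s - \tilde{D}_i = 0$, so the threshold on $W_i/W_j$ becomes $q_i/q_j$. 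These are precisely the two cases of (\ref{eq:ordercondition2}).

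The main obstacle is that Theorem \ref{Theorem:orderfortwo} is phrased with ``if'' rather than ``iff'', so it supplies only a sufficient condition for ``fully support peer $i$'' to be the two-peer optimum, whereas Corollary \ref{Corollary:orderforall} needs the converse direction. To close this gap I would revisit the derivation preceding Theorem \ref{Theorem:orderfortwo}: within each sub-case the partial derivative $\partial\Delta/\partial r$ has a sign that is independent of $r$, so the two-peer optimum necessarily lies at a corner (one peer fully supported, or both finishing simultaneously), and the sign of $\Delta_1 - \Delta_2$ comparing ``fully support $i$'' against ``fully support $j$'' is controlled tightly by the very inequality (\ref{eq:ordercondition}), up to the approximation used at the end of that derivation. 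Consequently, if (\ref{eq:ordercondition2}) were to fail at $s = \tilde{D}_i$, switching to fully supporting peer $j$ would weakly decrease WSDT, contradicting the assumed optimality of the given peer selection and thereby establishing necessity.
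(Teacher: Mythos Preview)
Your approach is exactly the paper's: it sets $s = r_i + r_j = \tilde{D}_i$ and substitutes into (\ref{eq:ordercondition}) to obtain (\ref{eq:ordercondition2}), with the second branch collapsing because $s - \tilde{D}_i = 0$. Your third paragraph, which worries about the ``if'' versus ``only if'' direction of Theorem~\ref{Theorem:orderfortwo}, goes beyond what the paper writes---the paper simply plugs in without comment---but your resolution (returning to the sign analysis of $\partial\Delta_1/\partial r_1$, $\partial\Delta_2/\partial r_2$, and $\Delta_1-\Delta_2$) is the right way to justify the converse and is consistent with the derivation immediately preceding the theorem.
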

\begin{proof}
When peer $i$ is fully supported and peer $j$ is not supported,  $s
= r_i + r_j = \tilde{D}_i$. Plugging $s = \tilde{D}_i$ into
(\ref{eq:ordercondition}) and obtain (\ref{eq:ordercondition2}).
\end{proof}

Define the binary relation $\succ$ on $\{1, \cdots, N\}$ as $i \succ
j$ if (\ref{eq:ordercondition2}) is satisfied. Denote a peer
selection as $(I,J)$ where $I$ is the set of fully supported peers
and $J$ is the set of unsupported peers. $(I,J)$ is optimal only if
$i \succ j$ for any $i \in I$ and $j \in J$. For general P2P
networks, finding the optimal $(I,J)$ is computational impossible
because the binary relation $\succ$ is not transitive, which means
$$ i \succ j; j \succ k \nRightarrow i \succ k. $$

Define the binary relation $\succapprox$ on $\{1, \cdots, N\}$ as $i
\succapprox j$ if $\frac{W_i}{q_i} \geq \frac{W_j}{q_j}$. The binary
relation $\succapprox$ is an approximation to the binary relation
$\succ$. $i \succapprox j$ is equivalent to $i \succ j$ when
\begin{equation}
\frac{q_i}{q_j} > \frac{\tilde{D}_i - \tilde{D}_j}{\tilde{D}_j}.
\label{eq:binaryrelationequi}
\end{equation}
It can be seen by plugging (\ref{eq:binaryrelationequi}) into
(\ref{eq:ordercondition2}). The approximated binary relation
$\succapprox$ has the transitive property, and hence, the peers can
be ordered with respect to $\succapprox$. Based on this ordering, a
suboptimal peer selection algorithm and the corresponding weight
setting is constructed as shown in Algorithm \ref{alg:peerselect}.
\begin{algorithm}
\caption{Peer Selection and Weight Setting} \label{alg:peerselect}
\begin{algorithmic}[1]
\STATE Suppose $N$ peers are downloading in the current epoch.
\STATE Let $B - q_iB$ ($0<q_i\leq 1$)be the number of chunks that peer $i$ has received for $i=1,\cdots,N$.\\
\STATE Sort $\{\frac{W_i}{q_i}\}_{i=1}^{N}$ in descending order and get $(k_1,\cdots,k_N)$.\\
\STATE Find the smallest $M$ such that
$\sum_{i=1}^{M}\!\tilde{D}_{k_i}\! \geq \!U_s \!\!+ \!\!
\sum_{i=1}^{N}\!\!U_i$. \STATE Select peers $\{k_i\}_{i=1}^{M}$ to
fully support. \STATE Set $W_j = 1$ if $j \in \{k_i\}_{i=1}^{M}$, or
$W_j = 0$ otherwise.
\end{algorithmic}
\end{algorithm}

\section{Simulations of the Dynamic Scenario}
\label{sec:dynamicsimulation} The dynamic rateless-coding-based
scheme is feasible to both static P2P networks and dynamic P2P
networks. Consider a type of dynamic P2P networks which any peer
leaves from as it finishes downloading, and no peer joins into. This
section provides the empirical WSDT performances of the dynamic
rateless-coding-based scheme for static P2P networks and dynamic P2P
networks with peer leaving, and compares them with those of the the
static scenario for static P2P networks. In all simulations, the
file size $B$ is normalized to be 1. This section shows simulations
for Cases I,II,IV, and VI investigated in
\ref{sec:staticsimulation}.

\begin{figure}
  \centering
  \includegraphics[width=0.7\textwidth]{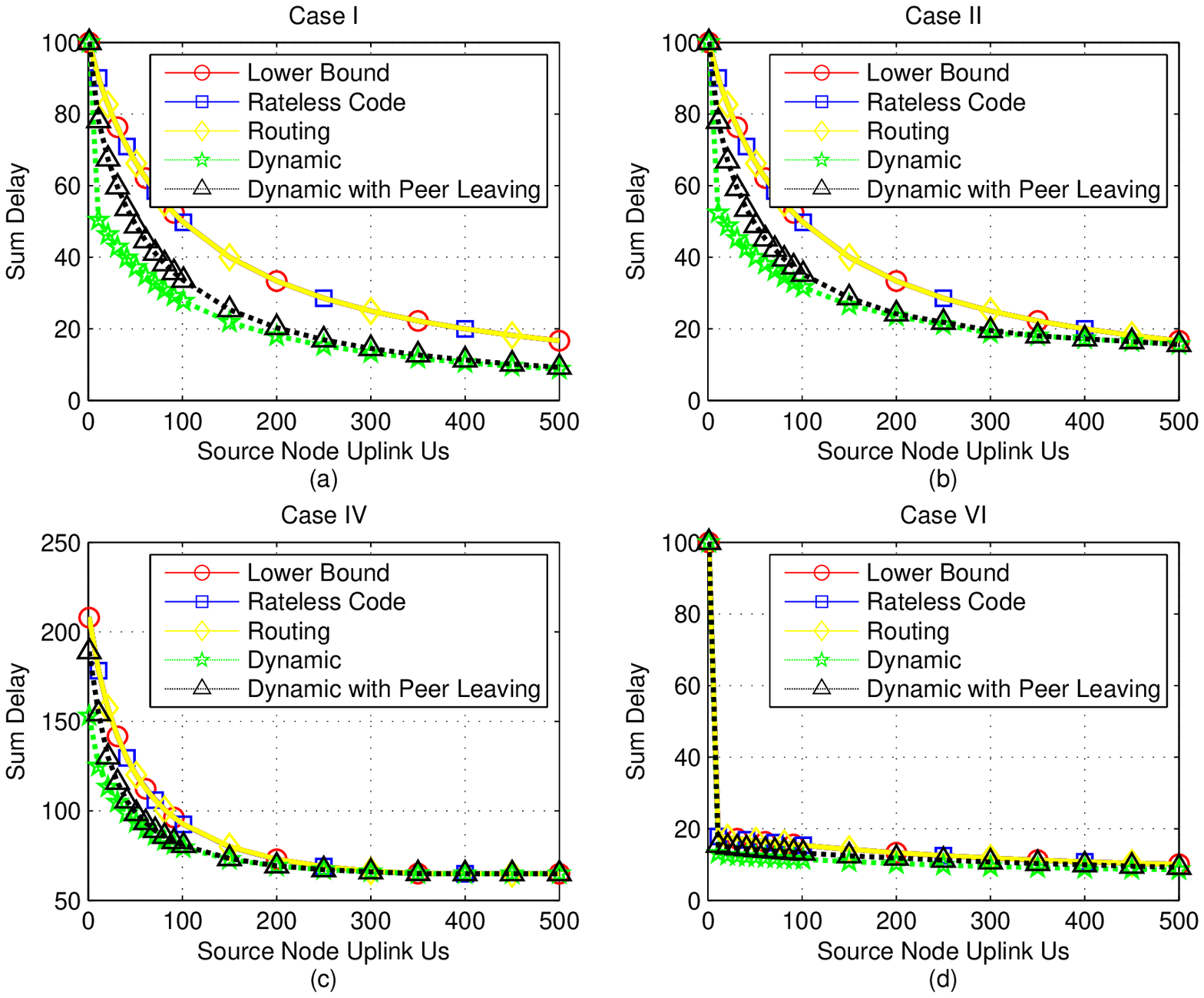}
  \caption{Sum download time versus $U_s$ for large P2P networks with $N=100$ peers.}\label{fig:dynamicsimulation1}
\end{figure}

\begin{figure}
  \centering
  \includegraphics[width=0.7\textwidth]{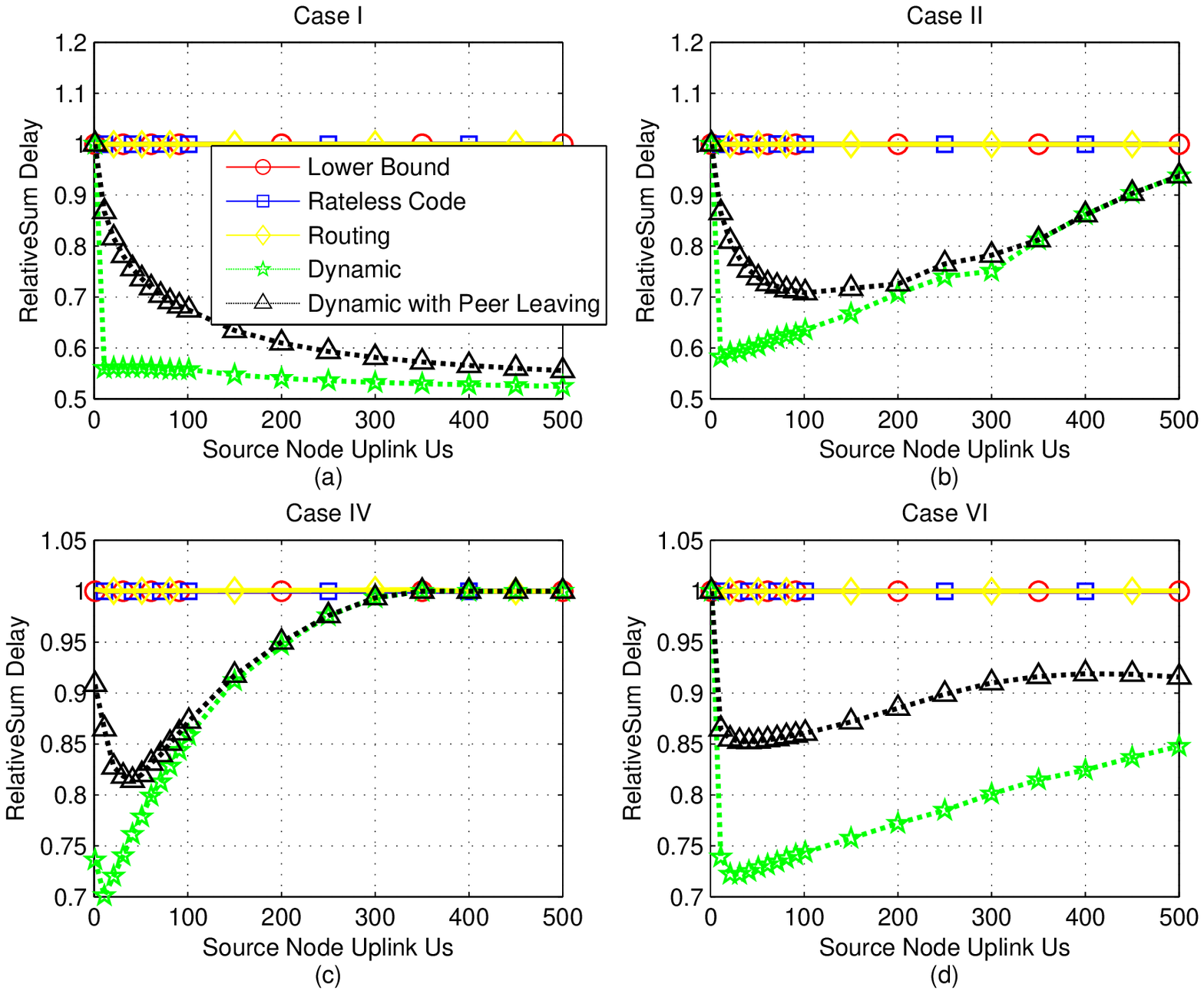}
  \caption{Relative sum download time versus $U_s$ for large P2P networks with $N=100$ peers.}\label{fig:dynamicsimulation11}
\end{figure}

Consider median-size P2P networks with $N=100$ peers. The
performances of sum download time versus $U_s$ for the 4 cases are
shown in Fig.~\ref{fig:dynamicsimulation1}.
Fig.~\ref{fig:dynamicsimulation11} shows the relative value of the
sum download time by normalizing the lower bound to be 1 in order to
explicitly compare the performances of the dynamic
rateless-coding-based scheme and the static scenario. For Case I
where peers have infinite downlink capacities, the sum download time
of the dynamic rateless-coding-based scheme is almost half of the
minimum sum download time for the static scenario for a broad range
of the source node uplink $U_s$. This result matches the results in
the previous work \cite{Wu09}, which says that the minimum sum
download time of dynamic scenarios is almost half of the minimum sum
download time of static scenarios when node uplinks are the only
bottleneck in the network. Our results also show that the sum
download time of the dynamic rateless-coding-based scheme with peer
leaving decreases to almost half of the minimum sum download time
for the static scenario as $U_s$ increases. For Cases II, IV, and
VI, the WSDs of the dynamic scheme and the dynamic scheme with peer
leaving are also always smaller than the minimum WSDT for the static
scenario. In particular, the WSDT of the dynamic scheme can be as
small as 0.59, 0.70, and 0.73 of the minimum WSDT for the static
scenario for Cases II, IV and VI, respectively. The WSDT of the
dynamic scheme with peer leaving can be as small as 0.71, 0.82, and
0.86 of the minimum WSDT for static scenarios for Cases II, IV and
VI, respectively. These largest improvements in percentage of
deploying the dynamic scheme is obtained when the source node can
directly support tens of the peers.

The performances of WSDT versus $U_s$ with weight $W_i = i/N$
($i=1,\cdots,N$) are shown in Fig.~\ref{fig:dynamicsimulation2}.
Fig.~\ref{fig:dynamicsimulation12} shows the relative value of the
WSDT. For Case I, the sum download times of the dynamic
rateless-coding-based scheme and the dynamic scheme with peer
leaving can be even less than half of the minimum sum download time
for the static scenario for a broad range of the source node uplink
$U_s$. This is because the peers with largest weight finish
downloading first in the dynamic scheme. The WSDT of the dynamic
scheme can be as small as 0.48, 0.49, and 0.58 of the minimum WSDT
for the static scenario for Cases II, IV and VI, respectively. The
WSDT of the dynamic scheme with peer leaving can be as small as
0.56, 0.62, and 0.77 of the minimum WSDT for the static scenario for
Cases II, IV and VI, respectively. Note that for Case VI, the WSDT
of the dynamic scheme with peer leaving is larger than that of the
static scenario for small $U_s$. This is because the peers with
larger uplink resource also have larger weight, and they finish
downloading and leave from the network first.

\begin{figure}
  \centering
  \includegraphics[width=0.7\textwidth]{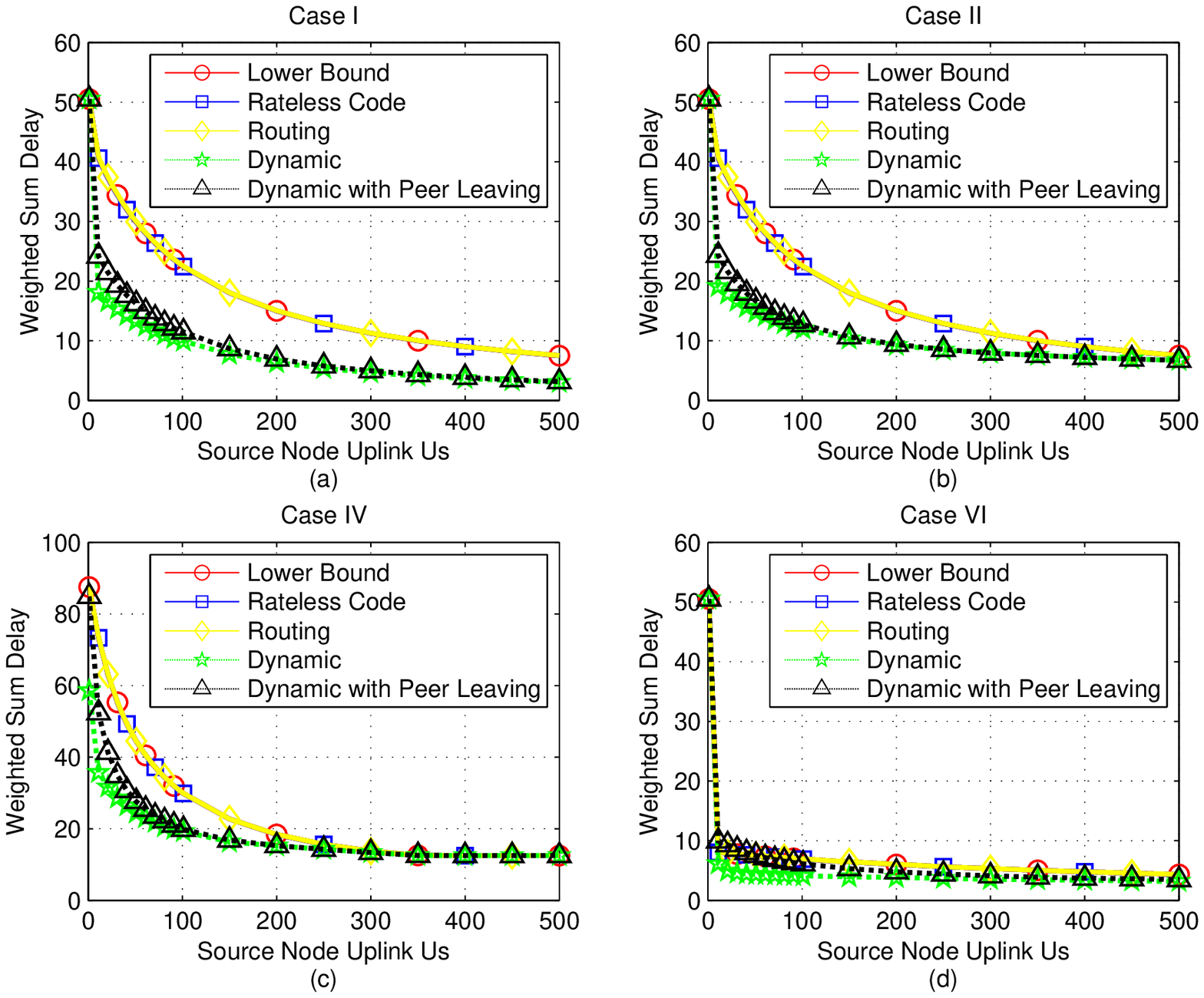}
  \caption{Weighted sum downloading time versus $U_s$ for large P2P networks with $N=100$ peers and weight $W_i = i/N$.}\label{fig:dynamicsimulation2}
\end{figure}

\begin{figure}
  \centering
  \includegraphics[width=0.7\textwidth]{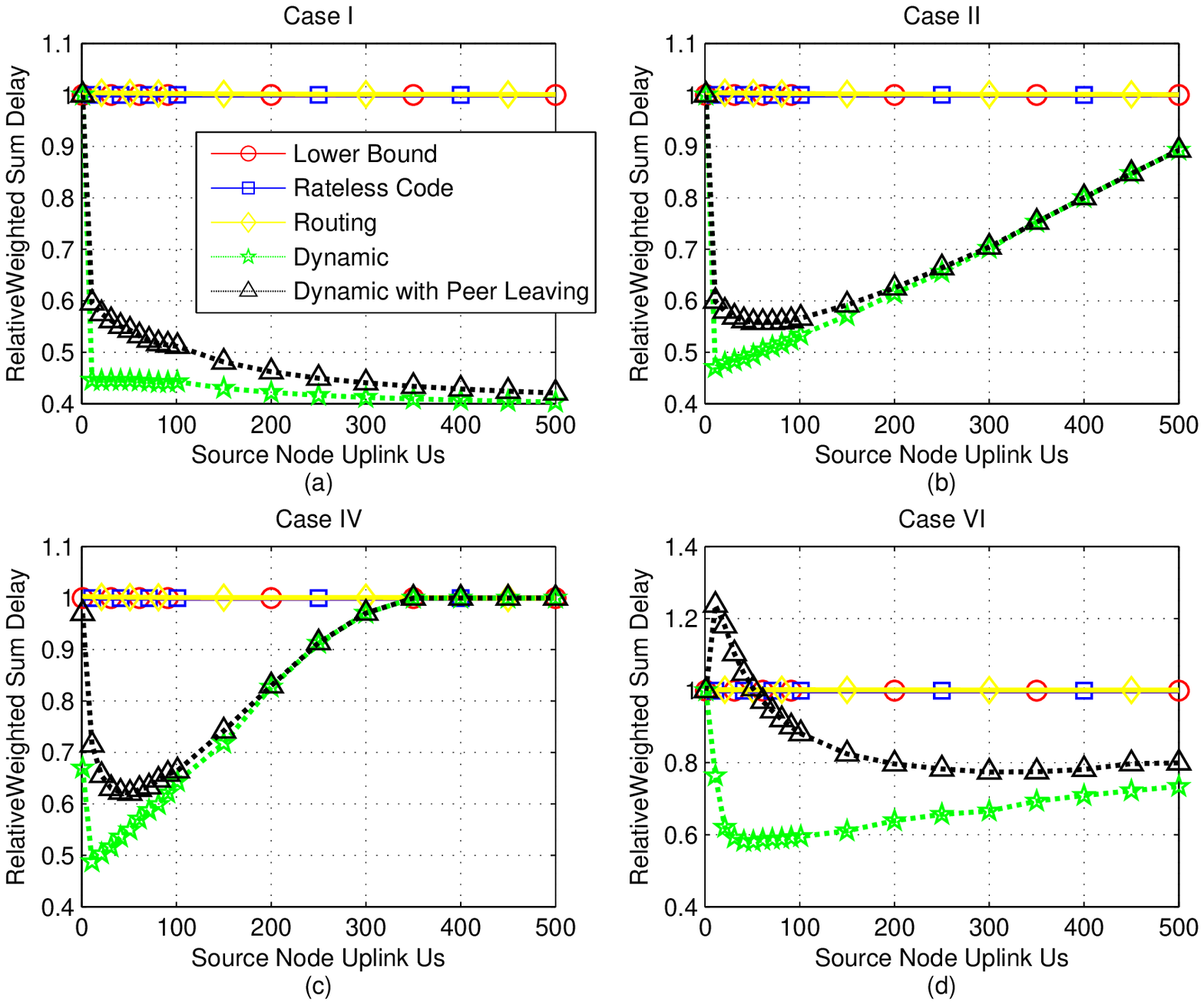}
  \caption{Relative weighted sum downloading time versus $U_s$ for large P2P networks with $N=100$ peers and weight $W_i = i/N$.}\label{fig:dynamicsimulation12}
\end{figure}

The performances of WSDT versus $U_s$ with weight $W_i = 1 +
99\delta(i  > N/2)$ ($i=1,\cdots,N$) are shown in
Fig.~\ref{fig:dynamicsimulation3}.
Fig.~\ref{fig:dynamicsimulation13} shows the relative value of the
WSDT. For Case I, the sum download times of the dynamic
rateless-coding-based scheme and the dynamic scheme with peer
leaving is around half of the minimum sum download time for the
static scenario for a broad range of the source node uplink $U_s$.
The WSDT of the dynamic scheme can be as small as 0.58, 0.55, and
0.52 of the minimum WSDT for static scenarios for Cases II, IV and
VI, respectively. The WSDT of the dynamic scheme with peer leaving
can be as small as 0.64, 0.64, and 0.63 of the minimum WSDT for the
static scenario for Cases II, IV and VI, respectively. Note that for
this weight setting, the WSDT of the dynamic scheme with peer
leaving is always smaller than that of the static scenario for Case
VI. This is because the gain by finishing peers with larger weight
is more than than the loss by the peers with larger uplink resource
leaving from the network.

\begin{figure}
  \centering
  \includegraphics[width=0.7\textwidth]{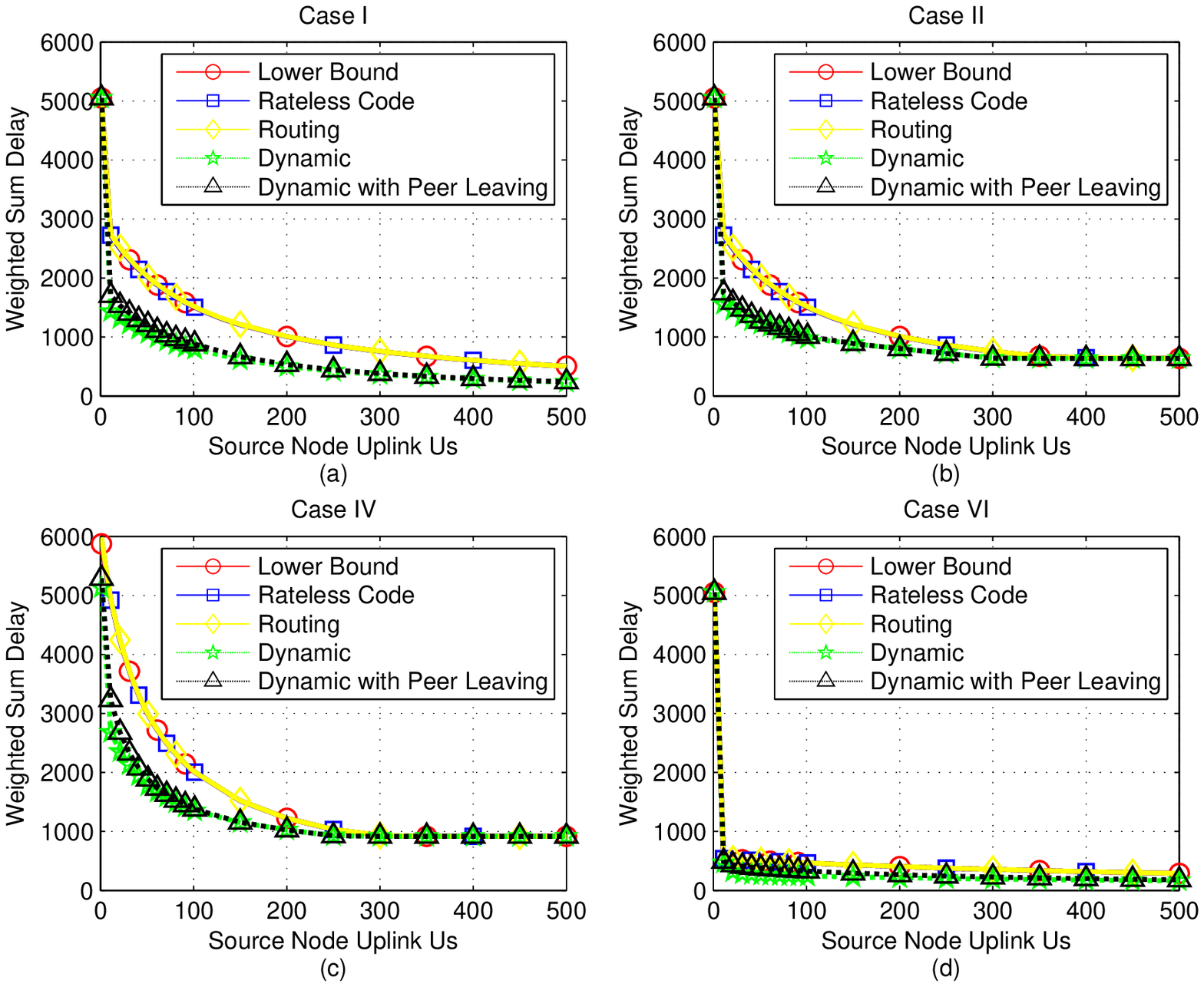}
  \caption{Weighted sum downloading time versus $U_s$ for large P2P networks with $N=100$ peers and weight $W_i = 1 + 99\delta(i>N/2)$.}\label{fig:dynamicsimulation3}
\end{figure}

\begin{figure}
  \centering
  \includegraphics[width=0.7\textwidth]{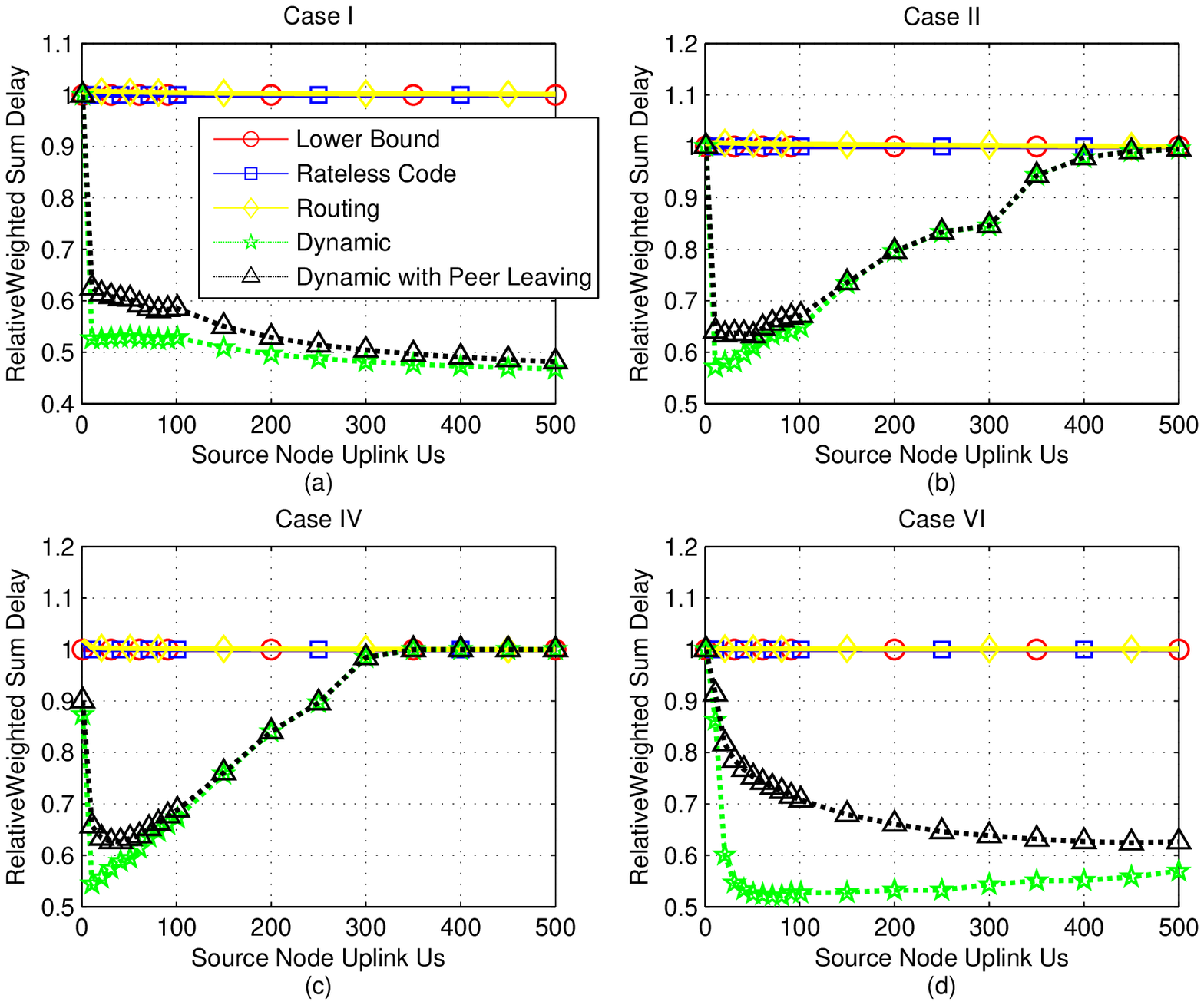}
  \caption{Relative weighted sum downloading time versus $U_s$ for large P2P networks with $N=100$ peers and weight $W_i = 1 + 99\delta(i>N/2)$.}\label{fig:dynamicsimulation13}
\end{figure}

Consider large P2P networks with $N=1000$ peers. The performances of
sum download time versus $U_s$ for the 4 cases are shown in
Fig.~\ref{fig:dynamicsimulation1}.
Fig.~\ref{fig:dynamicsimulation11} shows the relative value of the
sum download time. For Case I, the sum download time of the dynamic
rateless-coding-based scheme is around 0.55 of the minimum sum
download time for the static scenario for a broad range of the
source node uplink $U_s$. The sum download time of the dynamic
rateless-coding-based scheme with peer leaving decreases to 0.70 of
the minimum sum download time for the static scenario as $U_s$
increases to 1000. The WSDT of the dynamic scheme can be as small as
0.57, 0.70, and 0.70 of the minimum WSDT for the static scenario for
Cases II, IV and VI, respectively.

\begin{figure}
  \centering
  \includegraphics[width=0.7\textwidth]{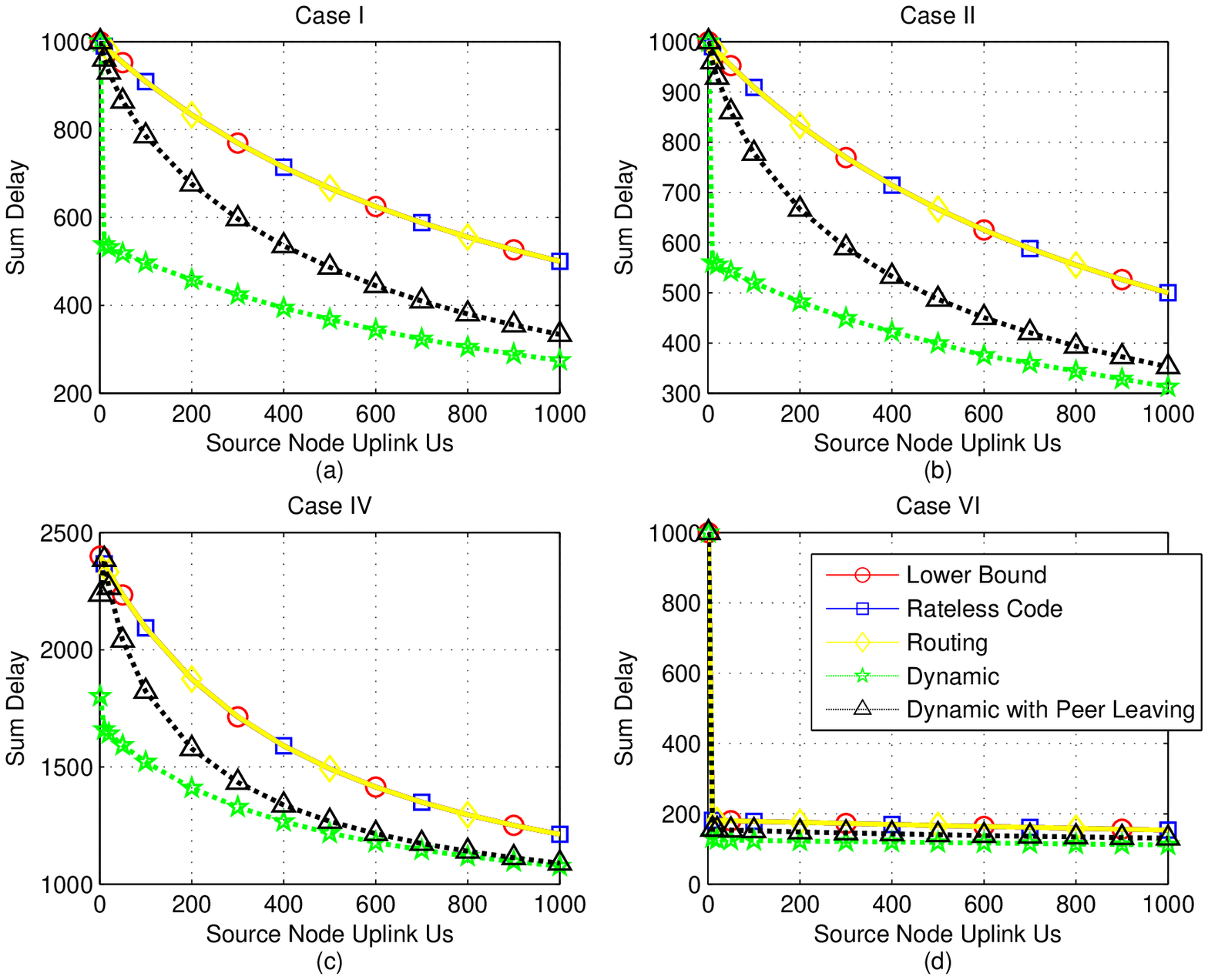}
  \caption{Sum download time versus $U_s$ for large P2P networks with $N=1000$ peers.}\label{fig:dynamicsimulation4}
\end{figure}

\begin{figure}
  \centering
  \includegraphics[width=0.7\textwidth]{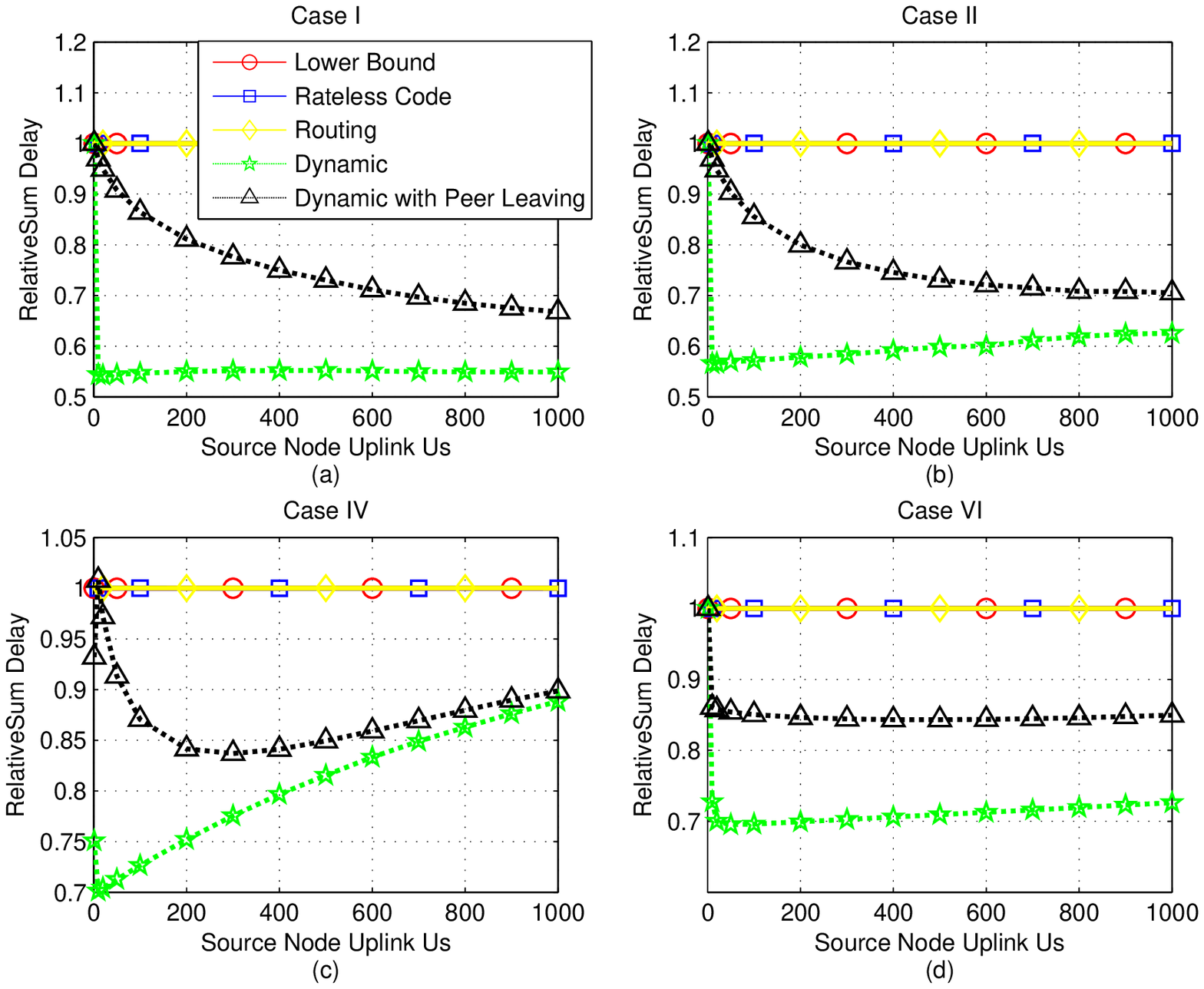}
  \caption{Relative sum download time versus $U_s$ for large P2P networks with $N=1000$ peers.}\label{fig:dynamicsimulation14}
\end{figure}

\section{Conclusion}
\label{sec:conclusions}

This paper considers the problem of transferring a file from one
source node to multiple receivers in a P2P network with both peer
uplink bandwidth constraints and peer downlink bandwidth
constraints.  This paper shows that the static scenario can be
optimized in polynomial time by convex optimization, and the
associated optimal static WSDT can be achieved by linear network
coding. Furthermore, this paper proposes static routing-based  and
rateless-coding-based schemes that closely approach a new lower
bound on performance derived in this paper.

This paper also proposes a dynamic rateless-coding-based scheme,
which provides significantly smaller WSDT than the optimal static
scheme does. A key contribution for the dynamic scenario is a
practical solution to the ordering problem left open by Wu.  Our
solution is to recast this problem as the problem of identifying the
peer weights for each epoch of the ``piecewise static'' rate
allocation.

The deployment of rateless codes simplifies the mechanism of the
file-transfer scenario, enhances the robustness to packet loss in
the network, and increases the performance (without considering
packet overhead).  However, there still exist several issues for
rateless-coding-based scheme such as high source node encoding
complexity, packet overhead, and fast peer selection algorithm for
the dynamic scenario. The results of this paper open interesting
problems in applying rateless codes for P2P applications.

The optimal download time region (set of optimal download times) for
one-to-many file transfer in a P2P network can be characterized by a
system of linear inequalities. Hence, minimizing the WSDT for all
sets of peer weights leads to the download time region. The set of
peer weights can also be assigned according to the applications. For
instances, for a file transfer application with multiple classes of
users, assign a weight to each class of users. For an application
with both receivers and helpers, assign weight zero to helpers and
positive weights to receivers. Hence, the results of this paper in
fact apply directly to one-to-many file transfer applications both
with and without helpers.

\end{spacing}
\end{document}